%
%
%
%
%
%
%
\documentclass[11pt,a4paper]{article}

\usepackage{amssymb, amsmath, amsthm, amsfonts}
\usepackage{graphicx}
\usepackage{dcolumn}
\usepackage{bm}
\usepackage{mathrsfs}
\usepackage[table]{}
\usepackage[toc,page]{appendix}
\usepackage{subfig}
\usepackage{authblk}
\usepackage[T1]{fontenc}
\usepackage{multirow}
\usepackage{booktabs}
\usepackage[top=1.0in, bottom=1in, left=1.0in, right=1.0in]{geometry}

\newtheorem{theorem}{Theorem}[section]

\newtheorem{prop}[theorem]{Proposition}

\newtheorem{corollary}[theorem]{Corollary}

\newtheorem{definition}[theorem]{Definition}


\begin{document}

\title{Gyrokinetic Vlasov-Poisson model derived by hybrid-coordinate transform of the distribution function}

\author[1]{Shuangxi Zhang \thanks{zshuangxi@gmail.com}}
\affil[1]{IRMA, Universit\'e de Strasbourg, France \& Inria TONUS team}
\affil[1]{University of Science and Technology of China}

\date{\today}


\maketitle

\begin{abstract}
This paper points out that the full-orbit density obtained in the standard electrostatic gyrokinetic model is not truly accurate at the order $\varepsilon^{\sigma-1}$ with respect to the equilibrium distribution $e^{-\alpha \mu}$ with $\mu \in (0, \mu_{\max})$, where $\varepsilon$ is the order of the normalized Larmor radius, $\varepsilon^{\sigma}$ the order of the amplitude of the normalized electrostatic potential, and $\alpha$ a factor of $O(1)$. 	
This error makes the exact order of the full-orbit density not consistent with that of the approximation of the full-orbit distribution function. By implementing a hybrid coordinate frame to get the full-orbit distribution, specifically, by replacing the magnetic moment on the full-orbit coordinate frame with the one on the gyrocenter coordinate frame to derive the full-orbit distribution transformed from the gyrocenter distribution, it's proved that the full-orbit density can be approximated with the exact order being $\varepsilon^{\sigma-1}$. 
The numerical comparison  between the new gyrokinetic model and the standard one was carried out using Selalib code for an initial distribution proportional to $\exp(\frac{-\mu B}{T_i})$ in constant cylindrical magnetic field configuration with the existence of electrostatic perturbations. In such a configuration, the simulation results exhibit similar performance of the two models.
\end{abstract}

\section{Introduction}\label{introduction}

The strong magnetic field provides a potential mean to create an environment to confine the hot plasma ionized from light elements such as Hydrogen, Tritium and Deuterium, to achieve the fusion purpose by collisions\cite{0741-3335-44-8-701, Kadomtsev1966, wesson2004}. While the experiments of the magnetized plasma is significant, the numerical simulation provides another approach to predict the behaviour of the plasma\cite{Birdsall1985, Buchne2003, Lee1987}. One important objective for the prediction is the low-frequency electrostatic turbulence, which is recognized as the factor to contribute to the plasma anomalous transport\cite{Garbet2010, wesson2004, Biglari1990, Horton1999, Diamond2005}. So far, the gyrokinetic simulation based on the standard gyrokinetic model (SGM)\cite{Dubin1983, Lee1987, Garbet2010} is widely conceived as a strong tool to predict the behaviour  of those low-frequency turbulence\cite{Lee1987, Jenko2000, Candy2003, Grandgirard2006, Latu2017, Terry2017, Idomura2008, Chen2003}, since it reduces the 6D Vlasov equation to a 5D one with the magnetic moment being constant and keeps the kinetic effects\cite{Dubin1983,Brizard1990,Hahm1988, Frieman1982, Cary1983, Littlejohn1983, Brizard2007, Cary2009, Brizard1989, Sugama2000}. A simple derivation of the electrostatic standard gyrokinetic model is given in Appendix \ref{coordinates}.

The gyrokinetic simulations implement the gyrokinetic Vlasov equation  to compute the evolution of the gyrocenter distribution, which is totally defined on the gyrocenter coordinate frame\cite{Garbet2010} with the initial gyrocenter distribution given at the beginning of the simulation. To simulate a realistic magnetized plasma, the gyrocenter distribution of the magnetic moment  $\mu$ is usually chosen as $\exp(-\alpha\mu)$ with $\alpha\equiv \frac{B}{T_i}$ and $\mu \in (0,\mu_{\max})$,  and ideally, $\mu$ should belong to the domain $(0,+\infty)$. The definition of $\mu$ and other notations  used in the following explanations can be found in  Sec.(\ref{notation}). Meanwhile,  due to that the Coulomb force happens on the full-orbit coordinate frame, the electrostatic potential is computed by the quasi-neutrality equation (QNE) defined on the full-orbit coordinate frame and as a simplified version of Poisson equation\cite{Garbet2010, Brizard1990}. 

Before going on to the next explanation, we need the definition of the ``exact order'' and ``uncertain order''. 
\begin{definition}
The ``exact order'' in this paper denotes the highest order at which the associated quantity is exactly right as the result of the approximation imposed on this quantity, while the ``uncertain order'' denotes the lowest order at which the associated quantity is ignored. 
\end{definition}

In this paper, the electrostatic potential is normalized by $B_0L_0 v_t$.  The order of the amplitude of electrostatic potential $\phi$ is extracted so that the electrostatic potential is written as $\varepsilon^{\sigma}\phi$, where  $O(|\phi|)=O(1)$ and $\varepsilon^{\sigma}$ is the order of the potential with $\varepsilon\equiv \frac{{m{v_t}}}{{q{B_0 L_0}}}$ and $\sigma$ an exponent independent of $\varepsilon$ used to signifying the order of the amplitude of the potential. The meanings of all the symbols used here can be found in Subsec.(\ref{normalorder}).
Ref.(\cite{Hahm1988}) gives the order $O(\frac{e\phi}{T_i})=O(\varepsilon)$, which can be translated into $\sigma=2$ in terms of the normalization scheme used in this paper. Eq.(\ref{c17}) in Appendix (\ref{generator}) points out that $\sigma< 3$ should be satisfied to make sure that the electrostatic potential term $\varepsilon^{\sigma} \phi$ is the exact term contained by the orbit equation. So in this paper the reasonable region of $\sigma$ is chosen as $2\le \sigma <3$.

Due to that the exact order of the approximation to get $f(\mathbf{z})$ of SGM in Eq.(\ref{qq5}) is $\varepsilon^{\sigma-1}$, it's a natural idea that a density of exact order $\varepsilon^{\sigma-1}$ could be derived by $\int {f(\mathbf{z})Bd\mu_1du_1 d\theta_1}$, so that QNE would be of the exact order $\varepsilon^{\sigma-1}$. However, because it's difficult to compute the lower bound $\mu_{1\min}(\mathbf{x},\theta_1)$ of the domain of $\mu_1$ which is mapped from the domain of $\mu$ as shown in Subsec.(\ref{error.1}),  
the standard model treats the domain of $\mu_1$ in the full-orbit coordinate frame the same with that of $\mu$ being $(0,+\infty)$ in the gyrocenter coordinate frame.  As proved in Sec.(\ref{error}), for the distribution $\exp(-\alpha \mu)$ of $\mu$ which is usually used for a realistic plasma,  this treatment leads to an error of order $O(\varepsilon^{\sigma-1})$ to the full-orbit density. Therefore, the exact order of SGM is not  $O(\varepsilon^{\sigma-1})$.  Eventually, the error of the order $O(\varepsilon^{\sigma-1})$ produced by computing $n(\mathbf{x})$ is inherited by QNE. 

In this paper, instead of $\mathbf{z}=(\mathbf{x},\mu_1,u_1,\theta_1)$, which is the full-orbit coordinates with the velocity written in cylindrical coordinates as shown in Subsec.(\ref{new.1}), the hybrid coordinates $(\mathbf{x},\mu,u_1,\theta_1)$ is implemented to obtain the distribution on the full-orbit coordinate frame, so that the domain of $\mu$ can be safely used. The functional relationship between $\mu_1$ and $\mu$ is given by Subsec.(\ref{error.1}). With this hybrid coordinates frame, it's proved in Sec.(\ref{new}) that the density and QNE can be derived with the exact order being $O(\varepsilon^{\sigma-1})$.
The numerical comparison is carried out between the new model and the standard one based on the SELALIB platform\cite{S}.  The rest of the paper is arranged as follows. Sec.(\ref{notation}) introduces the basic scales and their respective orders, as well as the notations which are used in the context. Sec.(\ref{error}) presents the proof that the exact order of the full-orbit density derived by SGM is not $O(\varepsilon^{\sigma-1})$. The hybrid coordinate transform and the proof that the exact order of the new full-orbit density is $O(\varepsilon^{\sigma-1})$ are given in Sec.(\ref{new}).  Sec.(\ref{model}) lists the normalized new gyrokinetic model and SGM. The various algorithms, the parallelization scheme, as well as the numerical results are presented in Sec.(\ref{numerical}).

\section{The notations and the basic orders}\label{notation}

\subsection{The coordinate transforms used in gyrokinetic theory and the metrics}\label{new.1}

The procedure to derive the gyrokinetic model is composited by two parts. The first one is to derive the coordinate transform by decoupling  the gyroangle from the dynamics of other coordinates, while the second one is to obtain the gyrokinetic quasi-neutral equation by inducing the transformation of the distribution through the derived coordinate transforms\cite{Dubin1983,Garbet2010,Brizard1990}. Generally, four kinds of coordinate frameworks are involved in the procedure. The first one is the full-orbit coordinate with the velocity part in Cartesian coordinates. It's denoted as $\bar{\mathbf{z}}\equiv (\mathbf{x},\mathbf{v})$ here. The second one is obtained by transforming $\bf{v}$ into the cylindrical coordinates, and it's written as $\mathbf{z}\equiv (\mathbf{x}, \mu_1, u_1, \theta_1)$ with $\mu_1\equiv \frac{m\mu_1^2}{2B(\bf{x})}$. The $\bf{x}$ component in $\bf{z}$ is still in full-orbit frame. The third one is the guiding-center coordinates $\bar{\mathbf{Z}}=(\bar{\mathbf{X}},\bar{\mu},\bar{U},\bar{\theta})$, which is derived by decoupling $\bar{\theta}$ from the dynamics of the other coordinate components without the existence of the perturbation. The fourth one is the gyrocenter coordinate $\mathbf{Z}=(\mathbf{X},\mu,U,\theta)$ which is derived by decoupling $\bar{\theta}$ from the dynamics of the other coordinate components with the existence of the perturbation.  The coordinate transforms between $\bar{\mathbf{z}}$, $\mathbf{z}$,$\bar{\mathbf{Z}}$ and $\mathbf{Z}$ are denoted as $\psi_{f}:\bar{\mathbf{z}}\to {\mathbf{z}}$, $\psi_{gc}:\mathbf{z}\to \bar{\mathbf{Z}}$ and $\psi_{gy}:\bar{\mathbf{Z}}\to \mathbf{Z}$, respectively, while the distributions on the four kinds of coordinates are written as $\bar{f}(\bar{\bf{z}})$, $f({\bf{z}})$, $\bar{F}(\bar{\bf{Z}})$ and $F({\bf{Z}})$, respectively. The coordinate transform $\psi_{gc}$ and $\psi_{gy}$ is realised by the Lie transform perturbative method for a noncanonical system. A simple introduction of this method is given by Appendix. \ref{lie} and the details can be found in Ref.(\cite{Cary1983}). The details of the derivation of the coordinate transforms are given in Appendix.(\ref{coordinates})

The functional relationship between the distributions are listed below
\begin{subequations}\label{}
\begin{eqnarray}
f(\mathbf{z})&=&\bar{f}(\psi_{f}^{-1}({\mathbf{z}})),  \nonumber \\
\bar{F}(\bar{\mathbf{Z}})&=&\bar{f}(\psi_{f}^{-1}\psi_{gc}^{-1}(\bar{\mathbf{Z}})), \nonumber \\
F(\mathbf{Z})&=&\bar{f}(\psi_{f}^{-1}\psi_{gc}^{-1}\psi_{gy}^{-1}({\mathbf{Z}})). \nonumber 
\end{eqnarray}
\end{subequations}
$\bar{f}(\bar{\mathbf{z}})$ satisfies the Vlasov equation $\frac{d\bar{f}(\bar{\mathbf{z}})}{dt}=0$,
where the symbol $d$ denotes the full derivative.  This Vlasov equation induces other Vlasov equations for $f(\mathbf{z})$, $\bar{F}(\bar{\mathbf{Z}})$ and $F(\mathbf{Z})$ and they can be uniformly written as 
$$\left (\frac{\partial }{\partial t}+\frac{d\mathbf{z}_i}{dt}\cdot \frac{\partial }{\partial \mathbf{z}_i}\right) f_i(\mathbf{z}_i)=0, $$
where $\mathbf{z}_i$ with $i=1,2,3,4$ denote $\bar{\mathbf{z}}, \mathbf{z}, \bar{\mathbf{Z}}, \mathbf{Z}$, respectively, while $f_i$s denote their respective distributions.

The total number is derived by integrating the distributions on their respective phase space 
$$\int f_i(\mathbf{z}_i)\eta_i d^6 \mathbf{z}_i.$$
Here, $\eta_i$ is the determinant of the metric of the respective phase space. Due to the conservation of the total number, the determinant of the metrics can be obtained as:
\begin{subequations}\label{}
\begin{eqnarray}
\eta_1(\bar{\mathbf{z}})&=&1, \nonumber \\
\eta_2(\mathbf{z})&=&\left | {\frac{{{d^6}\psi _f^{ - 1}\left( {\bf{z}} \right)}}{{{d^6}{\bf{z}}}}} \right |=\frac{B(\mathbf{x})}{m}, \nonumber \\
\eta_3(\bar{\mathbf{Z}})&=& \left | {\frac{{{d^6}\psi _f^{ - 1}\psi _{gc}^{ - 1}\left(\bar{\mathbf{Z}} \right)}} {{{d^6} \bar{\mathbf{Z}}}}} \right |, \nonumber \\
\eta_4(\mathbf{Z})&=& \left |{\frac{{{d^6}\psi _f^{ - 1}\psi _{gc}^{ - 1}\psi _{gy}^{ - 1}\left( {\bf{Z}} \right)}}  {{{d^6}{\bf{Z}}}}} \right |. \nonumber
\end{eqnarray}
\end{subequations}

$\eta_2(\mathbf{z})$ will be repeatedly used in the paper to get the density on the particle-coordinate spatial space.

\subsection{The equilibrium distribution}\label{distri}

The gyrokinetic Vlasov simulation implements an initial distribution on the gyrocenter coordinates frame\cite{Garbet2010}. The equilibrium distribution $F_{s0}(\mathbf{X},\mu,U)$ for charged particles with the species denoted by the subscript ``s'' can be decomposed as the product between  the parallel part and the perpendicular part
\begin{equation}\label{qq22} 
{F_{s0}}({\bf{X}},{\mu},{U}) = {n_0}({\bf{X}}){F_{s0\parallel }}\left( {{\bf{X}},{U}} \right){F_{s0 \bot }}\left( {{\bf{X}},{\mu}} \right),
\end{equation}
with the probability conservation being satisfied by
\begin{subequations}\label{qq30}
\begin{eqnarray}
&&\int {{F_{s0\parallel }}d{U}}  = 1, \\
&&\int {{F_{s0 \bot }}B({\bf{X}})d{\mu }d{\theta} } =1,
\end{eqnarray}
\end{subequations}
where  $B({\bf{X}})$ as the amplitude of the equilibrium magnetic field plays the role of Jacobian.   As usual, the equilibrium perpendicular distribution \cite{Jenko2000, Candy2003, Grandgirard2006}
\begin{equation}\label{e65}
 {F_{s0 \bot }}=\frac{m_s}{2\pi T_s}\exp(\frac{-\mu B}{T_s})
 \end{equation}
 is chosen in this paper.

\subsection{The nondimensionalization and the basic orders}\label{normalorder}

Gyrokinetic theory begins with implementing Lie transform perturbative theory  on the fundamental one-form to find out the coordinate transform. The orders of the length scale and amplitude of the equilibrium and perturbative quantities are firstly involved at this step and the exact and uncertain orders are inherited by the next procedure. So the fundamental one-form and the basic orders are first given here.   

\subsubsection{The nondimensionalization of quantities by nondimensionalizing the fundamental Lagrangian one-form }
The Lagrangian differential 1-form which determines the orbit of a test charged particle in magnetized plasmas \cite{Brizard1990, Dubin1983, Littlejohn1983, Littlejohn1982, Cary1983} is
\begin{equation}\label{a119}
\gamma = \left( {q{\bf{A}}\left( {{{\bf{x}}}} \right) + m{\bf{v}}} \right)\cdot d{\bf{x}} - (\frac{1}{2}m{v^2}+q \phi(\mathbf{x},t))dt.
\end{equation}
$(\mathbf{x},\mathbf{v})$ is the full particle coordinate frame.
The test particle is chosen from a thermal equilibrium plasma ensemble, e.g., the thermal equilibrium plasma in tokamak. Therefore, $\mathbf{A},\mathbf{v},\mathbf{x},t,\mathbf{B}, \phi,\mu$ can be nondimensionalized by $A_0\equiv B_0 L_0,v_t,L_0,L_0/v_t,B_0,A_0 v_t, mv^2_t/B_0$, respectively. $B_0, L_0$ are the characteristic amplitude and spatial length of the magnetic field, respectively. $v_t$ is the thermal velocity of the particle ensemble which contains the test particle.

The detailed normalization procedure of $\gamma$ is given as follows. First, both sides of Eq.(\ref{a119}) are divided by $m{v_t}{L_0}$. The first term of RHS of Eq.(\ref{a119}) is $\frac{{q{A_0}}}{{m{v_t}}}\frac{{{\bf{A}}\left( {{{\bf{x}}}} \right)}}{{{A_0}}}\cdot\frac{{d{{\bf{x}}}}}{{{L_0}}}$, which is further written as $\frac{1}{\varepsilon }{\bf{A}}\left( {{{\bf{x}}}} \right)\cdot d{{\bf{x}}}$, with the replacement:  $\frac{{{\bf{A}}\left( {{{\bf{x}}}} \right)}}{{{A_0}}} \to {\bf{A}}\left( {{{\bf{x}}}} \right),\frac{{d{{\bf{x}}}}}{{{L_0}}} \to d{{\bf{x}}}$ and 
\begin{equation}\label{e60}
\varepsilon\equiv \frac{{m{v_t}}}{{q{B_0 L_0}}}=\frac{\rho}{L_0} , \rho\equiv \frac{{m{v_t}}}{{q{B_0}}}. \nonumber 
\end{equation}
Other terms can be nondimensionalized in the same way. For the convenience of the ordering analysis, the order of the dimensionless quantity $|\phi|$ is extracted as an independent parameter and is denoted as $\varepsilon^\sigma$ based on the parameter $\varepsilon$, where $\sigma$ is an exponential index independent of $\varepsilon$. Alternatively, 
\begin{equation}\label{e62}
\phi\to \{\varepsilon^\sigma \phi, \;\; O(|\phi|)=O(1)\}. \nonumber 
\end{equation}

Eventually, we could derive a normalized Lagrangian 1-form 
\begin{equation}\nonumber
\frac{{{\gamma}}}{{m{v_t}{L_0}}} = \left( {\frac{1}{\varepsilon }{\bf{A}}\left( {\bf{x}} \right) + {\bf{v}}} \right)\cdot d{\bf{x}} - (\frac{1}{2}{{\bf{v}}^2} + \frac{\varepsilon^\sigma}{\varepsilon }\phi \left( {{\bf{x}},t} \right))dt,
\end{equation}
Now, multiplying both sides by $\varepsilon$, and rewriting $\frac{{{\varepsilon \gamma}}}{{m{v_t}{L_0}}}$ to be $\gamma$, the normalized 1-form becomes
\begin{eqnarray}\label{g2}
\gamma  = \left( {{\bf{A}}\left( {\bf{x}} \right) + \varepsilon {\bf{v}}} \right)\cdot d{\bf{x}} - \left( {\varepsilon \frac{{{v^2}}}{2} + \varepsilon^\sigma \phi \left( {{\bf{x}},t} \right)} \right)dt.
\end{eqnarray}
Since a constant factor $\frac{\varepsilon }{{m{v_t}{L_0}}}$ doesn't change the dynamics determined by the Lagrangian 1-form, the Lagrangian 1-form given by Eq.(\ref{g2}) possesses the same dynamics with that given by Eq.(\ref{a119}).

The velocity can be written in cylindrical coordinates, by transforming $(\mathbf{x},\mathbf{v})$ to $(\mathbf{x},u_1,\mu_1,\theta_1)$, where $u_1$ is parallel velocity and $\mu_1$ is magnetic moment, with their definitions being $u_1\equiv \mathbf{v}\cdot \mathbf{b}$ and $\mu_1\equiv v_\perp^2/2B(\mathbf{x})$. The unit vector of the perpendicular velocity is 
$${\widehat {\bf{v}}_ \bot } \equiv \left( {{{\bf{e}}_1}\sin \theta  + {{\bf{e}}_2}\cos \theta } \right).$$
 $(\mathbf{e}_1,\mathbf{e}_2,\mathbf{b})$ are orthogonal mutually and $\mathbf{b}$ is the unit vector of the equilibrium magnetic field. After this transformation, $\gamma$ becomes
\begin{eqnarray}\label{a11}
\gamma=\gamma_0+ \varepsilon \gamma_1 +\varepsilon^\sigma \gamma_\sigma
\end{eqnarray}
which can be splitted into three parts as
\begin{subequations}\label{a12}
\begin{eqnarray}
{\gamma _0} &=& {{\bf{A}}}\left( {\bf{x}} \right)\cdot d{\bf{x}}, \\
\varepsilon {\gamma _1} &=& \varepsilon\left( {{u_1}{\bf{b}} + \sqrt {2B({\bf{x}}){\mu _1}} {{\widehat {\bf{v}}}_ \bot }} \right)\cdot d{\bf{x}} - \varepsilon\left( {\frac{{u_1^2}}{2} + {\mu _1}B({\bf{x}})} \right)dt, \label{a12_1}\\
\varepsilon ^\sigma {\gamma _{\sigma } } &=&  -\varepsilon^\sigma \phi \left( {{\bf{x}},t} \right)dt.
\end{eqnarray}
\end{subequations}
The $\mathbf{X}$ components in $\gamma_1$ can be decomposed into  the parallel and perpendicular parts as  $\gamma_{1\mathbf{x}\parallel}=\varepsilon u_1\mathbf{b}$ and $\gamma_{1\mathbf{x}\perp}=\varepsilon \sqrt {2B({\bf{x}}){\mu _1}} {{\widehat {\bf{v}}}_ \bot }$.

$\theta$ is a fast variable and the term depending on $\theta$ in Eq.(\ref{a12_1}) is ${\varepsilon }\sqrt {2\mu_1 B({\bf{x}})} {\widehat {\bf{v}}_ \bot }\cdot d{\bf{x}}$ possessing the order $O(\varepsilon)$. $\theta$ can be reduced from the dynamical system up to some order by the coordinate transform.

\subsubsection{The basic orders}\label{order}
There are several basic orders or scales contained by the perturbation. The first one is the length scale of the nondimensionalized Larmor Radius being $\varepsilon$. The second one is the amplitude of the electrostatic potential, whose order is  denoted as $O\left( {{\varepsilon ^\sigma }} \right)$ with the basic parameter $\varepsilon$ as the basis. In magnetized fusion plasmas, due to the fact that the charged particle can nearly migrate freely in the environment with collective interactions, the magnitude of the potential the particles feel must be much smaller than that of its kinetic energy. As Eq.(\ref{g2}) shows, the order of the kinetic energy is $O(\varepsilon)$. Therefore, it's plausible to assume the range for $\sigma$ being $\sigma >1$. In this paper, only
\begin{equation}\label{e56} 
2\le\sigma <3
\end{equation}
is considered. The choice of $2$ is done in Ref.\cite{Hahm1988}. The reason for the choice of the upper bound is given by Eq.(\ref{c17}) in Appendix (\ref{generator})


 The third one is the length scale of the gradient of the electrostatic potential. Define $\mathcal{K}_{\perp}=|\frac{\nabla_\perp\phi}{\phi}|$ and $\mathcal{K}_{\parallel}=|\frac{\nabla_{\parallel}\phi}{\phi}|$. The gyrokinetic model adopts the  scales 
\begin{equation}\label{e57}
O(\varepsilon \mathcal{K}_\perp)=O(1),\;\; O(\varepsilon \mathcal{K}_\parallel)=O(\varepsilon). 
\end{equation}
For any the equilibrium quantity $\mathscr{E}$, the scale 
\begin{equation}\label{e58}
O( \bigg \| \frac{\nabla_\perp\mathscr{E}}{\mathscr{E}}\bigg \| )=O(\bigg \| \frac{\nabla_\parallel \mathscr{E}}{\mathscr{E}}\bigg \| )=O(\bigg \| \frac{\partial_U \mathscr{E}}{\mathscr{E}}\bigg \| )=O(1)
\end{equation}
 is used.

\section{The full-orbit density in SGM not truly accurate at $O(\varepsilon^{\sigma-1})$}\label{error}

\subsection{The transform of the domains of the arguments}\label{error.1}

As explained in Sec.(\ref{introduction}), the gyrokinetic simulations implement $\mu\in (0, \mu_{\max})$ with $\mu$ obeying $\exp(\frac{-\mu B}{T_i})$ to compute the evolution of the gyrocenter distribution for a realistic magnetized plasma.  For the theoretical derivation, $\mu_{\max}$ is usually chosen as $+\infty$.
The transform between $\mu$ and $\bar{\mu}$ is given by Eq.(\ref{ee3.2}) and induces the domain of $\bar{\mu}$
 \begin{equation}
 \bar{\mu}\in(\bar{\mu}_{min}(\bar{\mathbf{X}},\bar{\theta}),+\infty),
\end{equation}
where the upper bound $\bar{\mu}_{max}(\bar{\mathbf{X}},\bar{\theta})$ associated with $\mu=+\infty$  equals $+\infty$.
The transform between $\mu$ and $\mu_1$ induced by $\psi_{gy}$ and $\psi_{gc}$ given by Eqs.(\ref{ee3}) and (\ref{ee2}) is
\begin{equation}\label{e5}
\mu={\mu}_1+ \varepsilon^{\sigma-1}g_2^\mu \left({{\bf{ x}}-\varepsilon \bm{\rho}_0(\mathbf{x},\mu_1,\theta_1),{\mu}_1},{\theta}_1 \right)+O(\varepsilon^{2\sigma-2}).  
\end{equation}
The domain of $\mu_1$ induced by Eq.(\ref{e5}) is denoted as
$${\mu_1}\in ({\mu}_{1min}({\mathbf{x}},\theta_1),+\infty). $$
The domain of $U$, $u_1$ and $\bar{U}$ equals, so does that of $\theta,\bar{\theta},\theta_1$.

\subsection{The order of the error of the density committed by the integral over $\mu_1$ is $O(\varepsilon^{\sigma-1})$}\label{error.3}

In SGM, the density on the spatial space is given by integrating ${f_s}\left( {\bf{z}} \right)$ out of $\mu_1,\theta_1, u_1$
\begin{equation} \label{density}
n_s(\mathbf{x})= \iiint_{{\mu _{1\min}}({\bf{x}},{\theta _1})}^{+\infty}f_s({\bf{z}})B(\mathbf{x})d\mu_1 du_1 d\theta_1 ,
\end{equation}
 where the bounds of the domains of $\theta_1$ and $u_1$ are not explicitly given
and  ${f_s}\left( {\bf{z}} \right)$ is given by Eq.(\ref{qq5}). As Eq.(\ref{e5}) shows, the domain of $\mu_1$ is a function of $(\mathbf{x}, \theta_1)$ for $\mu \in (0,+\infty)$. Because it's a difficult burden to solve the domain of $\mu_1$ at each point $(\mathbf{x},\theta_1)$,
the domain $({\mu _{1\min}}({\bf{x}},{\theta _1}), {\mu _{1\max }}({\bf{x}},{\theta _1}))$ of $\mu_1$ in the standard method is replaced by $(0,+\infty)$.  Meanwhile, the $O(\varepsilon^2)$ term in Eq.(\ref{qq5}) is an uncertain term, the ignorance of which would introduce an error. So there are two errors existing in the density $n_s(\mathbf{x})$ of SGM.  One involves the replacement of the domain of the magnetic moment and the other involves the ignorance of $O(\varepsilon^2)$ term. 

We first estimate the order of the density error due to the ignorance of the uncertain term $O(\varepsilon^2)$, which is temporarily written as $\mathcal{M}(\mathbf{x},\mu_1,u_1,\theta_1)$. The order of the ratio of the error density to total density equals $O(\frac{\int \mathcal{M}(\mathbf{x},\mu_1,u_1,\theta_1) Bd \mu_1 d u_1 d\theta_1}{\int F_0(\mathbf{x},\mu_1,u_1)d\mu_1du_1d\theta_1}) $.  It can be estimated that
\begin{equation}\label{d2}
O\left(\frac{\int \mathcal{M}(\mathbf{x},\mu_1,u_1,\theta_1) Bd \mu_1 d u_1 d\theta_1}{\int F_0(\mathbf{x},\mu_1,u_1)Bd\mu_1du_1d\theta_1}\right)\ge O\left(\frac{\int |\mathcal{M}(\mathbf{x},\mu_1,u_1,\theta_1)| Bd \mu_1 d u_1 d\theta_1}{\int F_0(\mathbf{x},\mu_1,u_1)Bd\mu_1du_1d\theta_1}\right)=O(\varepsilon^2). 
\end{equation}

Now, we estimate the order of the error with respect to the replacement of the domain of the magnetic moment. First of all, the error of this replacement is estimated as
\begin{equation}\nonumber
\mathscr{D}f_s(\mathbf{z})\equiv \left( {\int_{{0}}^{{+\infty}} { - \int_{{\mu _{1\min}}({\bf{x}},{\theta _1})}^{+\infty}  } } \right){f_s}\left( {\bf{z}} \right)B(\mathbf{x})d{\mu _1}.
\end{equation}
where $B(\mathbf{x})$ is the Jacobian due to the transform from the Cartesian $\mathbf{v}$ to  $(\mu_1,\theta_1,u_1)$.
If separating  $f_s(\mathbf{z})$ as an equilibrium one $f_{s0}(\mathbf{z})$ plus a perturbative one $f_{s1}(\mathbf{z})$, then, $$\mathscr{D}f_{s}(\mathbf{z})=\mathscr{D}f_{s0}(\mathbf{z})+\mathscr{D}f_{s1}(\mathbf{z})$$
can be derived. 

\begin{definition}
For a function $f(\varepsilon)$, which depends on a small parameter $\varepsilon$ and can be expanded as $f(\varepsilon)=\sum\limits_{l=m} \frac{\varepsilon^l}{l!}f_l$, with $m\ge 0$ . The leading order term of $f(\varepsilon)$ is denoted as 
$$\mathcal{E}(f(\varepsilon))=\frac{\varepsilon^m}{m!}f_{m}.$$
\end{definition}

The leading order term of  $\mathscr{D}f_{s}(\mathbf{z})$ is $\mathcal{E}(\mathscr{D}f_{s}(\mathbf{z}))$. It's easy to derive that 
$$\mathcal{E}(\mathscr{D}f_{s}(\mathbf{z}))=\mathcal{E}(\mathscr{D}f_{s0}(\mathbf{z})).$$
Due to $f_{s0}(\mathbf{z})=F_{s0}(\mathbf{z})$, the equation
\begin{equation}\label{e42}
f_{s0}(\mathbf{z})=F_{s0\perp}(\mathbf{x},\mu_1)F_{s0\parallel}(\mathbf{x},u_1)
\end{equation}
stands,  so that 
$$\mathscr{D}f_{s0}(\mathbf{z})=F_{s0\parallel}(\mathbf{x},u_1)\mathscr{D}F_{s0\perp}(\mathbf{x},\mu_1)$$
stands.
The error of the density is defined as 
$$n_{serr}(\mathbf{x})\equiv \int \mathscr{D}f_s(\mathbf{z}) du_1 d\theta_1,$$ 
so 
\begin{equation}\label{d1}
\mathcal{E}(n_{serr}(\mathbf{x}))=\mathcal{E}(n_{s0err}(\mathbf{x})),
\end{equation}
where
\begin{equation}\label{e41}
n_{s0err}(\mathbf{x})\equiv \int \mathscr{D}f_{s0}(\mathbf{z}) du_1 d\theta_1=2\pi \bigg (\int F_{s0\parallel}(\mathbf{x},u_1) du_1 \bigg) \mathscr{D}F_{s0\perp}(\mathbf{x},\mu_1) .
\end{equation}
 
Next, the density $n_{s}(\mathbf{x})$ is splited as $n_{s0}(\mathbf{x})+n_{s1}(\mathbf{x})$  with
$$n_{s0}(\mathbf{x})\equiv 2\pi \int F_{s0\parallel}(\mathbf{x},u_1) du_1 \int_{0}^{+\infty} F_{s0\perp}(\mathbf{x},\mu_1)d\mu_1. $$
Then, according to Eq.(\ref{d1}), the leading order term of the ratio of $n_{serr}(\mathbf{x})$ to $n_{s}(\mathbf{x})$ is estimated as
\begin{equation}\label{f5}
\mathcal{E}\bigg(\frac{n_{serr}(\mathbf{x})}{n_{s}(\mathbf{x})}\bigg )=\mathcal{E}\bigg (\frac{n_{s0err}(\mathbf{x})}{n_{s0}(\mathbf{x})}\bigg )=\frac{\mathscr{D}F_{s0\perp}(\mathbf{x},\mu_1)}{\int_{0}^{+\infty} F_{s0\perp}(\mathbf{x},\mu_1)d\mu_1} .
\end{equation}

In the lower bound side, according to Eq.(\ref{e5}), 
$$O(|\mu_{1\min}(\mathbf{x},\theta_1)-0|)=O(|\varepsilon^{\sigma-1}g_2^{\mu}|).$$
Alternatively, the dislocation between $(0,+\infty)$ and $(\mu _{1\min }({\bf{x}},{\theta _1}), +\infty)$ at the lower bound side is of the order $O(\varepsilon^{\sigma-1})$ with respect to a continuous transform given by Eqs.(\ref{ee2}) and (\ref{ee3}). The usually chosen distribution of $\mu$ is  $\exp(-\alpha \mu )$ with $\alpha=\frac{B}{T_i}$ .  Then, 
$$O(\mathscr{D}F_{s0\perp}(\mathbf{x},\mu_1))=O\bigg(\bigg|\int_0^{|\varepsilon^{\sigma-1}g_2^{\mu}|}(1-\alpha \mu)d\mu \bigg|\bigg)
=O(\varepsilon^{\sigma-1})$$ and $$\int_{0}^{+\infty} F_{s0\perp}(\mathbf{x},\mu_1)d\mu_1=\int_{0}^{+\infty}\exp(-\alpha \mu)d\mu\sim O(1)$$ are valid. So it can be estimated that
\begin{equation}\label{e40}
O\bigg(\mathcal{E}\bigg(\frac{n_{serr}(\mathbf{x})}{n_{s}(\mathbf{x})}\bigg )\bigg )=O\bigg(\frac{\mathscr{D}F_{s0\perp}(\mathbf{x},\mu_1)}{\int_{0}^{+\infty} F_{s0\perp}(\mathbf{x},\mu_1)d\mu_1} \bigg)=O(\varepsilon^{\sigma-1}).
\end{equation}
Eventually, by comparing Eq.(\ref{e40}) and (\ref{d2}), due to $O(\varepsilon^{\sigma-1})<O(\varepsilon^2)$,  the error induced by replacing $({\mu}_{1min}({\mathbf{x}},\theta_1),+\infty)$  with $(0,+\infty)$ dominants.
Therefore, the density derived by SGM is not truly accurate at the order $O(\varepsilon^{\sigma-1})$.

\emph{\textbf{Remark}}: The perturbative density contained by QNE is $n(\mathbf{x})-n_0(\mathbf{x})$ with $n(\mathbf{x})\equiv \int f_s(\mathbf{x},\mu_1,u_1,\theta_1) Bd\mu_1 du_1d\theta_1$ and $n_0(\mathbf{x})\equiv \int F_{s0}(\mathbf{x},\mu,U,\theta) B d\mu dUd\theta$. In gyrokinetic simulations, $n_0(\mathbf{x})$ is usually initialized at the beginning. The error of the order $O(\varepsilon^{\sigma-1})$ produced by computing $n(\mathbf{x})$ is inherited by QNE. 

\section{Hybrid coordinate transform and new QNE with exact order $O(\varepsilon^{\sigma-1})$ } \label{new}

%

\subsection{The full-orbit density with the exact order $O(\varepsilon^{\sigma-1})$ }\label{new.2}

Given the coordinate transform Eqs.(\ref{ee2}) and (\ref{ee3}), the exact full-orbit distribution is given by Eq.(\ref{e36}). To prevent the error pointed out by Subsec.(\ref{error.3}), the expression ${\mu}_1- \varepsilon^{\sigma-1}g_2^\mu \left({{\bf{ x}}-\varepsilon \bm{\rho}_0(\mathbf{x},\mu_1,\theta_1),{\mu}_1},{\theta}_1 \right)$ is inversely replaced by $\mu$ with respect to Eq.(\ref{e5}) and $\mu_1$ can be solved as a function of $(\mathbf{x},\mu,\theta_1)$. Therefore, $f_s(\mathbf{z})$ can be rewritten as a function of the hybrid coordinates $(\mathbf{x},\mu,u_1,\theta_1)$ and is denoted as $f_s^*(\mathbf{x},\mu,u_1,\theta_1)$ with
\begin{equation} \label{e21}
f_s^*(\mathbf{x},\mu,u_1,\theta_1)\equiv{F}_{s}({\mathbf{x}}-\varepsilon \bm{\rho}_0(\mathbf{x},\mu_1(\mathbf{x},\mu,\theta_1),\theta_1),\mu,u_1)+O(\varepsilon^2),
\end{equation}
where the uncertain term $O(\varepsilon^2)$ is inherited from the $O(\varepsilon^2)$ term in Eq.(\ref{ee2.1}). 

On the coordinate frame of $\bar{\mathbf{z}}\equiv (\mathbf{x},\mathbf{v})$, the infinitesimal volume element of the velocity space is $d^3\mathbf{v}$. By transforming  $\bar{\mathbf{z}}$ to the coordinate frame of ${\mathbf{z}}\equiv (\mathbf{x},\mu_1,u_1,\theta_1)$, the normalized infinitesimal volume element for the subspace parameterized by $(\mu_1,u_1,\theta_1)$ is 
$$B(\mathbf{x})d\mu_1du_1d\theta_1.$$
On the frame of ${\mathbf{z}}\equiv (\mathbf{x},\mu_1,u_1,\theta_1)$, the spatial density is given by Eq.(\ref{density}).
On the hybrid coordinate frame $(\mathbf{x},\mu,u_1,\theta_1)$,  the normalized infinitesimal volume element changes to be
\begin{equation}\nonumber
B(\mathbf{x})\frac{\partial \mu_1}{\partial \mu}d\mu du_1d\theta_1,
\end{equation}
where the mutual independence of $\mathbf{x}, \mu_1,u_1,\theta_1$ is used and $\frac{\partial \mu_1}{\partial \mu}$ is the Jacobian.
So the full-orbit spatial density becomes
\begin{align} \label{e37}
n_s(\mathbf{x})=\iiint_{\mu_{\min}=0}^{{\mu_{\max}}=+\infty}f_s^*(\mathbf{x},\mu,u_1,\theta_1)B(\mathbf{x})\frac{\partial \mu_1}{\partial \mu}d\mu du_1d\theta_1.
\end{align}
The approximation of Eq.(\ref{e37}) can be obtained through the approximation of $\mu_1(\mathbf{x},\mu,\theta_1)$.
\begin{prop}\label{prop8}
Given the equation of $\mu$ in Eq.(\ref{e5}),  $\mu_1$ as a function of $(\mu,\mathbf{x},\theta_1)$ can be solved  with the exact order  being $O(\varepsilon^{\sigma-1})$ 
\begin{equation}\label{e15}
\mu_1(\mu,\mathbf{x},\theta_1)= {\mu^{*}}+O(\varepsilon^{2\sigma-2}).
\end{equation}
where
\begin{equation}\label{e28}
{\mu^{*}}\equiv {\mu}-\varepsilon^{\sigma-1}g_2^\mu \left({{\bf{ x}}-\varepsilon \bm{\rho}_0(\mathbf{x},\mu,\theta_1),{\mu}},{\theta}_1 \right).
\end{equation}
\end{prop}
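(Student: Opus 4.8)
The plan is to invert Eq.(\ref{e5}) by a near-identity (contraction-mapping) argument, exploiting that the correction to the identity $\mu_1=\mu$ carries the explicit small factor $\varepsilon^{\sigma-1}$. Throughout, $\mathbf{x}$ and $\theta_1$ are treated as fixed parameters, and I abbreviate $G(\mu_1)\equiv g_2^\mu(\mathbf{x}-\varepsilon\bm{\rho}_0(\mathbf{x},\mu_1,\theta_1),\mu_1,\theta_1)$, so that Eq.(\ref{e5}) reads $\mu=\mu_1+\varepsilon^{\sigma-1}G(\mu_1)+O(\varepsilon^{2\sigma-2})$.

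First I would establish that the inverse is well defined. The map $\mu_1\mapsto\mu$ defined by Eq.(\ref{e5}) is near-identity: its derivative is $\partial\mu/\partial\mu_1=1+\varepsilon^{\sigma-1}G'(\mu_1)+O(\varepsilon^{2\sigma-2})$, and since $g_2^\mu$, $\bm{\rho}_0$ and their first derivatives are $O(1)$ under the gyrokinetic orderings (\ref{e57})-(\ref{e58}), one has $G'=O(1)$, hence $\partial\mu/\partial\mu_1=1+O(\varepsilon^{\sigma-1})$, which is bounded away from zero for $\varepsilon$ small. By the implicit function theorem the solution $\mu_1(\mu,\mathbf{x},\theta_1)$ exists, is unique, and is as regular as $g_2^\mu$. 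Reading Eq.(\ref{e5}) at leading order immediately gives the crude inverse $\mu_1=\mu+O(\varepsilon^{\sigma-1})$.

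The second step is to feed this crude inverse back into the correction term. I would Taylor expand $G(\mu_1)$ about $\mu_1=\mu$, writing $G(\mu_1)=G(\mu)+G'(\tilde\mu)(\mu_1-\mu)$ for some intermediate $\tilde\mu$. Here the $\mu_1$-dependence enters $G$ both explicitly, through the second slot of $g_2^\mu$, and implicitly, through the shifted argument $\mathbf{x}-\varepsilon\bm{\rho}_0$; the latter contribution carries an extra factor $\varepsilon$ from the chain rule and is therefore subdominant, so $G'=O(1)$. Combined with $\mu_1-\mu=O(\varepsilon^{\sigma-1})$ this yields $G(\mu_1)=G(\mu)+O(\varepsilon^{\sigma-1})$. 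Substituting into Eq.(\ref{e5}) and solving for $\mu_1$ gives $\mu_1=\mu-\varepsilon^{\sigma-1}G(\mu)+O(\varepsilon^{2\sigma-2})$, because the correction to $G$ is multiplied by $\varepsilon^{\sigma-1}$ and hence contributes at the same order $\varepsilon^{2\sigma-2}$ as the remainder already present in Eq.(\ref{e5}). Since $\mu-\varepsilon^{\sigma-1}G(\mu)$ is exactly the quantity $\mu^{*}$ of Eq.(\ref{e28}), this is Eq.(\ref{e15}). Note that $2\sigma-2=2(\sigma-1)>\sigma-1$ for $\sigma\ge2$, so the kept term genuinely fixes the exact order at $O(\varepsilon^{\sigma-1})$.

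The main obstacle I anticipate is the bookkeeping of the $\mu_1$-dependence hidden inside the spatial argument $\mathbf{x}-\varepsilon\bm{\rho}_0(\mathbf{x},\mu_1,\theta_1)$: one must verify that replacing $\mu_1$ by $\mu$ there, both in $\bm{\rho}_0$ and in the first slot of $g_2^\mu$, costs only $O(\varepsilon^{2\sigma-2})$ and not something larger. This rests entirely on the uniform boundedness ($O(1)$) of $g_2^\mu$, $\bm{\rho}_0$ and their derivatives, which is supplied by the scale orderings (\ref{e57})-(\ref{e58}); granting those, the consolidation of all corrections at $O(\varepsilon^{2\sigma-2})$ is automatic and the proposition follows.
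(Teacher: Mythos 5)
Your proposal takes essentially the same route as the paper: one Picard iteration of the near-identity relation (\ref{e13}), substituting the crude inverse $\mu_1=\mu+O(\varepsilon^{\sigma-1})$ back into the correction term and noting that the induced change, multiplied by the prefactor $\varepsilon^{\sigma-1}$, lands at $O(\varepsilon^{2\sigma-2})$. The implicit-function-theorem preamble establishing that the inverse exists and is unique is a sensible addition that the paper leaves implicit.

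There is, however, one misstated justification, even though your conclusion survives it. You claim that the $\mu_1$-dependence entering through the shifted spatial argument $\mathbf{x}-\varepsilon\bm{\rho}_0(\mathbf{x},\mu_1,\theta_1)$ ``carries an extra factor $\varepsilon$ from the chain rule and is therefore subdominant,'' and you assert that the first derivatives of $g_2^\mu$ are $O(1)$ by the orderings (\ref{e57})--(\ref{e58}). But $g_2^\mu$ is built from the perturbation $\phi$ (see Eqs.(\ref{ee7})--(\ref{ee9})), whose perpendicular gradient scale obeys $O(\varepsilon\mathcal{K}_\perp)=O(1)$, i.e.\ $\nabla_\perp g_2^\mu=O(\varepsilon^{-1})$, not $O(1)$; the $O(1)$ gradient bound (\ref{e58}) applies only to equilibrium quantities. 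Consequently the chain-rule piece of $G'$ is
\begin{equation}\nonumber
\varepsilon\,\partial_{\mu_1}\bm{\rho}_0\cdot\nabla_\perp g_2^\mu = O(\varepsilon)\cdot O(\varepsilon^{-1}) = O(1),
\end{equation}
so the explicit factor $\varepsilon$ does not render this contribution subdominant: it exactly compensates the large perpendicular gradient, leaving the spatial piece co-dominant with the explicit $\partial_\mu$ derivative. This is precisely why the paper's proof invokes $O(\varepsilon\mathcal{K}_\perp)=1$ at this step. Since both pieces of $G'$ are $O(1)$, your estimate $G(\mu_1)=G(\mu)+O(\varepsilon^{\sigma-1})$ and hence Eq.(\ref{e15}) stand, but the bookkeeping should be corrected: the same compensation, not subdominance, is again the crux in Proposition \ref{prop9} and in the expansions of Subsec.(\ref{new.2}), and treating perpendicular derivatives of perturbed quantities as $O(1)$ would over-credit accuracy there.
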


\begin{proof}
Rewrite Eq.(\ref{e5}) 
\begin{equation}\label{e13}
\mu_1={\mu}-\varepsilon^{\sigma-1}g_2^\mu \left({{\bf{ x}}-\varepsilon \bm{\rho}_0(\mathbf{x},\mu_1,\theta_1),{\mu}_1},{\theta}_1 \right)+O(\varepsilon^{2\sigma-2}).
\end{equation}
Iterating $\mu_1$ one time in Eq.(\ref{e13}) and expanding $g_2^\mu$  in Eq.(\ref{e13}) by the order parameter $\varepsilon^{\sigma}$, noticing $O(\varepsilon \mathcal{K}_\perp)=1$, $g_2^\mu$ can be written as
\begin{equation}
g_2^\mu \left({{\bf{ x}}-\varepsilon \bm{\rho}_0(\mathbf{x},\mu_1,\theta_1),{\mu}_1},{\theta}_1 \right)=g_2^\mu \left({{\bf{ x}}-\varepsilon \bm{\rho}_0(\mathbf{x},\mu,\theta_1),{\mu}},{\theta}_1 \right) + O(\varepsilon^{\sigma-1}), \nonumber
\end{equation}
whose exact order is $O(1)$ and uncertain order is $O(\varepsilon^{\sigma-1})$. By substituting this equation into Eq.(\ref{e13}), Eq.(\ref{e15}) is derived.
\end{proof}

Given Proposition.(\ref{prop8}), $\frac{\partial \mu_1(\mu,\mathbf{x},\theta_1)}{\partial \mu}$ can be written as
\begin{equation}\label{e39} 
\frac{\partial \mu_1(\mu,\mathbf{x},\theta_1)}{\partial \mu}=1-\varepsilon^{\sigma-1}\frac{\partial g_2^\mu \left({{\bf{ x}}-\varepsilon \bm{\rho}_0(\mathbf{x},\mu,\theta_1),{\mu}},{\theta}_1 \right)}{\partial \mu}+O(\varepsilon^{2\sigma-2}), 
\end{equation}
with the exact order being $O(\varepsilon^{\sigma-1})$.

\begin{prop}\label{prop9}
Given Proposition.(\ref{prop8}), $\bm{\rho}_0(\mathbf{x},\mu_1,\theta_1)$ can be solved with the exact order being $O(\varepsilon^{\sigma-1})$:
\begin{equation} \label{e20}
\bm{\rho}_0(\mathbf{x},\mu_1,\theta_1)=\bm{\rho}_0(\mathbf{x},\mu^*,\theta_1)+O(\varepsilon^{2\sigma-2}).
\end{equation}
\end{prop}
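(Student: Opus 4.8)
The plan is to obtain Proposition \ref{prop9} as an immediate consequence of Proposition \ref{prop8} together with a single Taylor expansion of $\bm{\rho}_0$ in its magnetic-moment slot. First I would recall from Proposition \ref{prop8} that, regarded as a function of $(\mu,\mathbf{x},\theta_1)$, the full-orbit magnetic moment obeys $\mu_1(\mu,\mathbf{x},\theta_1)=\mu^*+O(\varepsilon^{2\sigma-2})$, so that the displacement $\mu_1-\mu^*$ has already been pushed down to the uncertain order $O(\varepsilon^{2\sigma-2})$. Since $\mu_1$ enters $\bm{\rho}_0(\mathbf{x},\mu_1,\theta_1)$ only through its second argument, with $\mathbf{x}$ and $\theta_1$ held fixed, the idea is to substitute this relation into that slot and expand.

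Concretely, I would write
\begin{equation}\nonumber
\bm{\rho}_0(\mathbf{x},\mu_1,\theta_1)=\bm{\rho}_0(\mathbf{x},\mu^*,\theta_1)+\frac{\partial\bm{\rho}_0}{\partial\mu}\bigg|_{(\mathbf{x},\mu^*,\theta_1)}\,(\mu_1-\mu^*)+O\big((\mu_1-\mu^*)^2\big),
\end{equation}
and then estimate the two correction terms. The quadratic remainder is of order $O(\varepsilon^{4\sigma-4})$ and is negligible. The decisive step is to control the first-order coefficient $\partial\bm{\rho}_0/\partial\mu$: because $\bm{\rho}_0$ is the lowest-order gyroradius vector, defined with the $\varepsilon$ prefactor extracted so that the physical displacement reads $\varepsilon\bm{\rho}_0$, its $\mu$-dependence enters algebraically through $\sqrt{2\mu/B(\mathbf{x})}$, and since $\mu$ and $B(\mathbf{x})$ are $O(1)$ normalized quantities with $O(1)$ gradients by Eq.(\ref{e58}), one has $\partial\bm{\rho}_0/\partial\mu=O(1)$ in $\varepsilon$. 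Multiplying this $O(1)$ coefficient by $\mu_1-\mu^*=O(\varepsilon^{2\sigma-2})$ therefore produces a first-order correction of order $O(\varepsilon^{2\sigma-2})$, which is exactly the uncertain term appearing in Eq.(\ref{e20}).

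It then remains to justify the label ``exact order $O(\varepsilon^{\sigma-1})$''. I would note that $\mu^*=\mu-\varepsilon^{\sigma-1}g_2^\mu(\cdots)$ carries a genuine $O(\varepsilon^{\sigma-1})$ correction relative to $\mu$, so that replacing $\mu_1$ by $\mu^*$ retains the full content of $\bm{\rho}_0(\mathbf{x},\mu_1,\theta_1)$ up to and including order $\varepsilon^{\sigma-1}$, with everything discarded sitting at $O(\varepsilon^{2\sigma-2})$; this matches the accuracy bookkeeping already established for $\mu_1$ in Proposition \ref{prop8}. The only step requiring real care, and hence the main obstacle, is the bound $\partial\bm{\rho}_0/\partial\mu=O(1)$: one must verify that differentiating the gyroradius with respect to $\mu$ introduces no inverse power of $\varepsilon$, which holds precisely because the $\varepsilon$ factor was factored out of $\bm{\rho}_0$ and because $B(\mathbf{x})$ and its derivatives are $O(1)$ by Eq.(\ref{e58}).
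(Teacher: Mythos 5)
Your proof is correct and follows essentially the same route as the paper's own (one-line) argument: substitute $\mu_1=\mu^*+O(\varepsilon^{2\sigma-2})$ from Proposition~\ref{prop8} into the magnetic-moment slot of $\bm{\rho}_0(\mathbf{x},\mu_1,\theta_1)$ and Taylor-expand, so that the $O(1)$ coefficient $\partial_\mu \bm{\rho}_0$ pushes the whole displacement into the uncertain term $O(\varepsilon^{2\sigma-2})$. You merely make explicit what the paper leaves implicit, namely the bound $\partial_\mu\bm{\rho}_0=O(1)$ (which, as in the paper, tacitly assumes $\mu$ bounded away from zero, a caveat both treatments defer to Proposition~\ref{prop2.6}).
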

\begin{proof}
By substituting Eq.(\ref{e15}) into $\bm{\rho}_0(\mathbf{x},\mu_1,\theta_1)$ and using $O(\varepsilon \mathcal{K}_\perp)=1$, Eq.(\ref{e20}) can be derived.
\end{proof}


\begin{prop}\label{prop4}
Given Proposition.(\ref{prop9}) and $3>\sigma\ge 2$, the distribution $f_s^*(\mathbf{x},\mu,u_1,\theta_1)$ in Eq.(\ref{e21}) can be solved as:
\begin{equation}\label{e24}
f_s^*(\mathbf{x},\mu,u_1,\theta_1)={F}_{s}({\mathbf{x}}-\varepsilon \bm{\rho}_0(\mathbf{x},\mu^*,\theta_1),\mu,u_1)+ O(\varepsilon^{2}),
\end{equation}
with the exact order being $O(\varepsilon^{\sigma-1})$.
\end{prop}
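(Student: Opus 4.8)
The plan is to obtain Eq.(\ref{e24}) by direct substitution of Proposition \ref{prop9} into the definition of $f_s^*$ in Eq.(\ref{e21}), followed by a single Taylor expansion of $F_s$ about the shifted spatial argument. First I would insert Eq.(\ref{e20}) into Eq.(\ref{e21}), replacing $\bm{\rho}_0(\mathbf{x},\mu_1,\theta_1)$ by $\bm{\rho}_0(\mathbf{x},\mu^*,\theta_1)+O(\varepsilon^{2\sigma-2})$. Multiplying by the prefactor $\varepsilon$ then shows that the spatial argument of $F_s$ satisfies
$$\mathbf{x}-\varepsilon\bm{\rho}_0(\mathbf{x},\mu_1,\theta_1)=\mathbf{x}-\varepsilon\bm{\rho}_0(\mathbf{x},\mu^*,\theta_1)+O(\varepsilon^{2\sigma-1}),$$
so the positional error introduced by switching from $\mu_1$ to $\mu^*$ is pushed to order $\varepsilon^{2\sigma-1}$.

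Next I would Taylor-expand $F_s$ around $\mathbf{x}-\varepsilon\bm{\rho}_0(\mathbf{x},\mu^*,\theta_1)$, writing
$$F_s\big(\mathbf{x}-\varepsilon\bm{\rho}_0(\mathbf{x},\mu^*,\theta_1)+O(\varepsilon^{2\sigma-1}),\mu,u_1\big)=F_s\big(\mathbf{x}-\varepsilon\bm{\rho}_0(\mathbf{x},\mu^*,\theta_1),\mu,u_1\big)+\nabla_{\mathbf{x}}F_s\cdot O(\varepsilon^{2\sigma-1})+\cdots.$$
The crux is to bound the order of the gradient term. Decomposing $F_s$ into its equilibrium and perturbative parts and invoking the scale relations Eq.(\ref{e57}) and Eq.(\ref{e58}), the worst-case contribution comes from the perpendicular gradient acting on the perturbative part, which carries the amplification factor $\mathcal{K}_\perp=O(\varepsilon^{-1})$ since $O(\varepsilon\mathcal{K}_\perp)=O(1)$; the parallel and equilibrium contributions are of order $O(1)$ and hence smaller. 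The gradient term is therefore at worst $O(\varepsilon^{2\sigma-2})$. Invoking the hypothesis $\sigma\ge 2$ gives $2\sigma-2\ge 2$, so this correction is of order $O(\varepsilon^{2})$ or higher and is absorbed into the $O(\varepsilon^2)$ uncertain term already inherited from Eq.(\ref{e21}), yielding Eq.(\ref{e24}).

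To confirm the claimed exact order, I would note that $\mu^*$ in Eq.(\ref{e28}) differs from $\mu$ by the $O(\varepsilon^{\sigma-1})$ term $-\varepsilon^{\sigma-1}g_2^\mu$, so the retained argument $\mathbf{x}-\varepsilon\bm{\rho}_0(\mathbf{x},\mu^*,\theta_1)$ still resolves contributions down through order $\varepsilon^{\sigma-1}$, while the first discarded term sits at $O(\varepsilon^2)$. Because $\sigma<3$ forces $\varepsilon^{\sigma-1}>\varepsilon^2$, the expression is exactly right through $O(\varepsilon^{\sigma-1})$, matching the definition of exact order and closing the statement.

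I expect the only genuine obstacle to be the ordering of the Taylor-expansion remainder, namely verifying that no spatial gradient of $F_s$ can upgrade the $O(\varepsilon^{2\sigma-1})$ positional error to something coarser than $O(\varepsilon^2)$. This step is precisely where the lower bound $\sigma\ge 2$ and the perpendicular-gradient scaling $O(\varepsilon\mathcal{K}_\perp)=O(1)$ are both indispensable; once that estimate is secured, the remainder of the argument is routine substitution in the same spirit as the proofs of Propositions \ref{prop8} and \ref{prop9}.
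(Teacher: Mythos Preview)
Your proposal is correct and follows essentially the same route as the paper: substitute Proposition~\ref{prop9} into Eq.~(\ref{e21}), split $F_s=F_{s0}+F_{s1}$, and control the Taylor remainder using the gradient scalings of Eqs.~(\ref{e57})--(\ref{e58}) together with $\sigma\ge 2$. The paper merely packages the decomposition as two separate sub-statements (Eqs.~(\ref{e22}) and (\ref{e23})) with their individual exact orders before combining them, whereas you carry out the same estimate inside a single Taylor-expansion argument; your remainder bound $O(\varepsilon^{2\sigma-2})$ is slightly looser than what the split yields (you did not use $|F_{s1}|=O(\varepsilon^{\sigma-1})$ when estimating $\nabla_\perp F_{s1}$), but since $2\sigma-2\ge 2$ it is still absorbed into $O(\varepsilon^2)$ and the conclusion is unaffected.
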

\begin{proof}
It's first to prove the following two statements:
\begin{equation}\label{e22}
f_{s0}^*(\mathbf{x},\mu,u_1,\theta_1)={F}_{s0}({\mathbf{x}}-\varepsilon \bm{\rho}_0(\mathbf{x},\mu^*,\theta_1),\mu,u_1)+ O(\varepsilon^{2}),
\end{equation}
the exact order of which is $O(\varepsilon^{\sigma})$, and
\begin{equation}\label{e23}
f_{s1}^*(\mathbf{x},\mu,u_1,\theta_1)={F}_{s1}({\mathbf{x}}-\varepsilon \bm{\rho}_0(\mathbf{x},\mu^*,\theta_1),\mu,u_1)+ O(\varepsilon^{2}),
\end{equation}
the exact order of which is $O(\varepsilon^{2\sigma-1})$.

By substituting $\bm{\rho}_0(\mathbf{x},\mu_1,\theta_1)$ in Eq.(\ref{e20}) to $f_{s0}^*(\mathbf{x},\mu,u_1,\theta_1)$, Eq.(\ref{e22}) is proved.
For Eq.(\ref{e23}),   by substituting $\bm{\rho}_0(\mathbf{x},\mu_1,\theta_1)$  in Eq.(\ref{e20})  to $f_{s1}^*(\mathbf{x},\mu,u_1,\theta_1)$
and considering $O(\rho \mathcal{K}_\perp)=1$,  Eq.(\ref{e23}) is proved.
By combining Eqs.(\ref{e22}) and (\ref{e23}), Eq.(\ref{e24}) is obtained.
\end{proof}

To solve $\phi$ through QNE, the approximation of $n_s(\mathbf{x})$  is required.
\begin{theorem}\label{theo.1}
$n_s(\mathbf{x})$ can be approximated as
\begin{equation} \label{e38} 
\begin{split}
n_s(\mathbf{x})=\iiint_{\mu_{\min}=0}^{\mu_{\max}=+\infty}
\bigg [ \begin{array}{l}
{F}_{s}({\mathbf{x}}-\varepsilon \bm{\rho}_0(\mathbf{x},\mu^*,\theta_1),\mu,u_1)  \\
-\mathscr{F}({F}_{s}({\mathbf{x}}-\varepsilon \bm{\rho}_0(\mathbf{x},\mu,\theta_1),\mu,u_1),\mu,u_1, \theta_1) 
\end{array} \bigg ] B(\mathbf{x})d\mu du_1 d\theta_1   +O(\varepsilon^{2})
\end{split}
\end{equation}
with the exact order being $O(\varepsilon^{\sigma-1})$, where
\begin{equation}\label{e45}
\begin{split}
\mathscr{F}({F}_{s}({\mathbf{x}}-\varepsilon \bm{\rho}_0(\mathbf{x},\mu,\theta_1),\mu,u_1),\mu,u_1, \theta_1) &\equiv \varepsilon^{\sigma-1}{F}_{s}({\mathbf{x}}-\varepsilon \bm{\rho}_0(\mathbf{x},\mu,\theta_1),\mu,u_1)  \\
& \times\frac{\partial g_2^\mu \left({{\bf{ x}}-\varepsilon \bm{\rho}_0(\mathbf{x},\mu,\theta_1),{\mu}},{\theta}_1 \right)}{\partial \mu} 
\end{split}
\end{equation}

\end{theorem}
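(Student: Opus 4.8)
The plan is to derive Eq.(\ref{e38}) by substituting the two approximations already in hand—Proposition \ref{prop4} (Eq.(\ref{e24})) for $f_s^*$ and Eq.(\ref{e39}) for the hybrid-coordinate Jacobian $\partial\mu_1/\partial\mu$—into the exact density formula Eq.(\ref{e37}), and then expanding the product of the integrand in powers of $\varepsilon$, keeping every contribution whose exact order reaches $O(\varepsilon^{\sigma-1})$ and sweeping the rest into the remainder. Concretely, I would start from
\begin{equation*}
f_s^*\,\frac{\partial\mu_1}{\partial\mu}
=\left[{F}_{s}({\mathbf{x}}-\varepsilon\bm{\rho}_0(\mathbf{x},\mu^*,\theta_1),\mu,u_1)+O(\varepsilon^2)\right]
\left[1-\varepsilon^{\sigma-1}\frac{\partial g_2^\mu}{\partial\mu}+O(\varepsilon^{2\sigma-2})\right]
\end{equation*}
and multiply the two brackets out term by term, then integrate against $B(\mathbf{x})\,d\mu\,du_1\,d\theta_1$.

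The leading product $F_s({\mathbf{x}}-\varepsilon\bm{\rho}_0(\mathbf{x},\mu^*,\theta_1),\mu,u_1)\cdot 1$ reproduces verbatim the first line of the bracket in Eq.(\ref{e38}), and the cross product $-\varepsilon^{\sigma-1}F_s({\mathbf{x}}-\varepsilon\bm{\rho}_0(\mathbf{x},\mu^*,\theta_1),\mu,u_1)\,\partial_\mu g_2^\mu$ is what I want to identify with $-\mathscr{F}$ of Eq.(\ref{e45}). The remaining products are all subdominant: $F_s\cdot O(\varepsilon^{2\sigma-2})$, $O(\varepsilon^2)\cdot 1$, $O(\varepsilon^2)\cdot(-\varepsilon^{\sigma-1}\partial_\mu g_2^\mu)=O(\varepsilon^{\sigma+1})$, and $O(\varepsilon^2)\cdot O(\varepsilon^{2\sigma-2})$ are each at least $O(\varepsilon^2)$ once $\sigma\ge 2$ is invoked (so that $2\sigma-2\ge 2$ and $\sigma+1>2$), hence they fold into the stated $O(\varepsilon^2)$ remainder. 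Since the surviving $-\mathscr{F}$ piece is genuinely $O(\varepsilon^{\sigma-1})$ while the remainder is $O(\varepsilon^2)$, and $\sigma-1<2$ for $\sigma<3$, the overall approximation is exact down to $O(\varepsilon^{\sigma-1})$, which is the claimed exact order.

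The step I expect to be the main obstacle is the replacement $\mu^*\to\mu$ in the $F_s$ factor of the cross term, i.e. showing that $F_s({\mathbf{x}}-\varepsilon\bm{\rho}_0(\mathbf{x},\mu^*,\theta_1),\mu,u_1)$ may be traded for $F_s({\mathbf{x}}-\varepsilon\bm{\rho}_0(\mathbf{x},\mu,\theta_1),\mu,u_1)$—bringing the term into the exact form of $\mathscr{F}$—at a cost that survives multiplication by $\varepsilon^{\sigma-1}$ only inside the remainder. My plan here is to use $\mu^*-\mu=-\varepsilon^{\sigma-1}g_2^\mu+O(\varepsilon^{2\sigma-2})=O(\varepsilon^{\sigma-1})$ from Eq.(\ref{e28}), so that $\bm{\rho}_0(\mathbf{x},\mu^*,\theta_1)-\bm{\rho}_0(\mathbf{x},\mu,\theta_1)=O(\varepsilon^{\sigma-1})$ (the same estimate underlying Proposition \ref{prop9}) and the induced spatial displacement in the argument of $F_s$ is $O(\varepsilon^{\sigma})$. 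The subtlety is that $F_s$ carries a perturbative piece with a steep perpendicular gradient; using $O(\varepsilon\mathcal{K}_\perp)=O(1)$ from Eq.(\ref{e57}), i.e. $\mathcal{K}_\perp=O(\varepsilon^{-1})$, the relative change of $F_s$ under this displacement is at worst $O(\varepsilon^{\sigma}\cdot\varepsilon^{-1})=O(\varepsilon^{\sigma-1})$, while the equilibrium part changes only by $O(\varepsilon^{\sigma})$ through its $O(1)$ gradient in Eq.(\ref{e58}). Multiplying the worst case by the explicit $\varepsilon^{\sigma-1}$ prefactor gives $O(\varepsilon^{2\sigma-2})$, which for $\sigma\ge 2$ sits below $O(\varepsilon^2)$ and is therefore swallowed by the remainder. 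This is precisely the gradient-scaling mechanism already used in Propositions \ref{prop8} and \ref{prop9}, so the replacement is consistent with the preceding approximations and completes the identification of the integrand with the bracket of Eq.(\ref{e38}).
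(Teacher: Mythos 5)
Your proposal is correct and takes essentially the same route as the paper's proof: both substitute Proposition~\ref{prop4} and the Jacobian expansion Eq.~(\ref{e39}) into Eq.~(\ref{e37}), identify the cross term with $\mathscr{F}$ after trading $\mu^*$ for $\mu$ via the gradient scaling $O(\varepsilon\mathcal{K}_\perp)=O(1)$, and absorb every remaining product into the $O(\varepsilon^2)$ remainder using $2\le\sigma<3$ (your conservative $O(\varepsilon^{2\sigma-2})$ bound for the swap, which equals $O(\varepsilon^2)$ at $\sigma=2$, versus the paper's $O(\varepsilon^{\sigma+1})$, makes no difference for the conclusion). One remark: your sign for the cross term, $-\mathscr{F}$, coming from the $-\varepsilon^{\sigma-1}\partial_\mu g_2^\mu$ in Eq.~(\ref{e39}), is the one consistent with the theorem's Eq.~(\ref{e38}), whereas the paper's own proof text writes the sum with $+\mathscr{F}$, an apparent typo there.
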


\begin{proof}
Based on Eq.(\ref{e39}), $f_s^*(\mathbf{x},\mu,u_1,\theta_1)\frac{\partial \mu_1}{\partial \mu}$ can be approximated as the sum
$$f_s^*(\mathbf{x},\mu,u_1,\theta_1)+\mathscr{F}({f}_{s}^*(\mathbf{x},\mu,u_1,\theta_1),\mu_1,u_1, \theta_1)+O(\varepsilon^{2\sigma-2})$$
with the exact order being $O(\varepsilon^{\sigma-1})$. 
According to Proposition.(\ref{prop4}), $f_s^*(\mathbf{x},\mu,u_1,\theta_1)$ can be approximated as ${F}_{s}({\mathbf{x}}-\varepsilon \bm{\rho}_0(\mathbf{x},\mu^*,\theta_1),\mu,u_1)$ with the exact order being $O(\varepsilon^{2})$. 
The second term can be approximated as $\mathscr{F}({F}_{s}({\mathbf{x}}-\varepsilon \bm{\rho}_0(\mathbf{x},\mu,\theta_1),\mu,u_1),\mu,u_1,\theta_1) +O(\varepsilon^{\sigma+1})$ exactly right up to $O(\varepsilon^{\sigma})$.  Then, $f_s^*(\mathbf{x},\mu,u_1,\theta_1)\frac{\partial \mu_1}{\partial \mu}$ can be rewritten as
$${F}_{s}({\mathbf{x}}-\varepsilon \bm{\rho}_0(\mathbf{x},\mu^*,\theta_1),\mu,u_1)+\mathscr{F}({F}_{s}({\mathbf{x}}-\varepsilon \bm{\rho}_0(\mathbf{x},\mu,\theta_1),\mu,u_1),\mu,u_1,\theta_1) +O(\varepsilon^{2}), $$
with the exact order being $\varepsilon^{\sigma-1}$ and uncertain order being $\varepsilon^{2}$.
As a consequence, theorem.(\ref{theo.1}) can be proved in the same way to prove the inequality (\ref{d2}). 
\end{proof}

The term of ${F}_{s}({\mathbf{x}}-\varepsilon \bm{\rho}_0(\mathbf{x},\mu^*,\theta_1),\mu,u_1)$ in Eq.(\ref{e38}) depends on $\phi$ through $\sqrt{\mu+\varepsilon^{\sigma-1}g_2^{\mu}}$, which makes the solving of $\phi$ not convenient through QNE and needs to be simplified to be linearly proportional to $\phi$. 

\begin{prop}\label{prop2.6}
If $O(\mu)< O(\varepsilon^{\sigma-1})$ holds for the number of $\mu$, specifically,  $\mu>|\varepsilon^{\sigma-1}g_2^\mu \left({{\bf{ x}}-\varepsilon \bm{\rho}_{0}(\mathbf{x},\mu,\theta_1),{\mu}},{\theta}_1 \right)|$ holds, the expansion of $\bm{\rho}_0(\mathbf{x},\mu^*,\theta_1)$ with the exact order being $O(\frac{\varepsilon^{\sigma-1}}{\mu})$ is
$$\bm{\rho}_0(\mathbf{x},\mu^*,\theta_1) = \bm{\rho}^*(\mathbf{x},\mu,\theta_1) +O(\frac{\varepsilon^{2\sigma-2}}{\mu^2})$$ with
\begin{equation} \label{e32}
\bm{\rho}^*(\mathbf{x},\mu,\theta_1) = \big (1-\frac{\varepsilon^{\sigma-1}g_2^\mu \left({{\bf{ x}}-\varepsilon {\bm{\rho} _{0}(\mathbf{x},\mu,\theta_1)},{\mu}},{\theta}_1 \right) }{2\mu} \big ) {\bm{\rho} _{0}(\mathbf{x},\mu,\theta_1)} .
\end{equation}
\end{prop}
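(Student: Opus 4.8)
The plan is to exploit the explicit $\sqrt{\mu}$-dependence of the Larmor-radius vector. From its definition the perpendicular displacement has the structure $\bm{\rho}_0(\mathbf{x},\mu,\theta_1)=\sqrt{\mu}\,\mathbf{g}(\mathbf{x},\theta_1)$ for a vector-valued function $\mathbf{g}$ that carries all of the $\mathbf{x}$- and $\theta_1$-dependence but is independent of $\mu$ (since $v_\perp=\sqrt{2\mu B(\mathbf{x})}$ and the gyroradius is proportional to $v_\perp$). The sole difference between $\bm{\rho}_0(\mathbf{x},\mu^*,\theta_1)$ and $\bm{\rho}_0(\mathbf{x},\mu,\theta_1)$ is then the replacement of $\sqrt{\mu}$ by $\sqrt{\mu^*}$, so the entire statement reduces to expanding a single square root.

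First I would insert the expression for $\mu^*$ from Eq.(\ref{e28}), namely $\mu^*=\mu-\varepsilon^{\sigma-1}g_2^\mu(\mathbf{x}-\varepsilon\bm{\rho}_0(\mathbf{x},\mu,\theta_1),\mu,\theta_1)$, to write $\sqrt{\mu^*}=\sqrt{\mu}\,(1-\varepsilon^{\sigma-1}g_2^\mu/\mu)^{1/2}$. The stated hypothesis $\mu>|\varepsilon^{\sigma-1}g_2^\mu(\mathbf{x}-\varepsilon\bm{\rho}_0(\mathbf{x},\mu,\theta_1),\mu,\theta_1)|$ is precisely what guarantees that the dimensionless quantity $\varepsilon^{\sigma-1}g_2^\mu/\mu$ has magnitude below unity, so the binomial series converges and its leading correction is meaningful; this is the operative role of the condition (which I read as $O(\varepsilon^{\sigma-1})<O(\mu)$, i.e.\ $\mu$ dominating the perturbation, the \emph{specifically} clause fixing the intended inequality unambiguously). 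Expanding gives $(1-\varepsilon^{\sigma-1}g_2^\mu/\mu)^{1/2}=1-\tfrac{1}{2}\varepsilon^{\sigma-1}g_2^\mu/\mu+O(\varepsilon^{2\sigma-2}/\mu^2)$.

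Multiplying through by $\sqrt{\mu}\,\mathbf{g}(\mathbf{x},\theta_1)=\bm{\rho}_0(\mathbf{x},\mu,\theta_1)$ then yields $\bm{\rho}_0(\mathbf{x},\mu^*,\theta_1)=\left(1-\frac{\varepsilon^{\sigma-1}g_2^\mu}{2\mu}\right)\bm{\rho}_0(\mathbf{x},\mu,\theta_1)+O(\varepsilon^{2\sigma-2}/\mu^2)$, which is exactly Eq.(\ref{e32}); the factor $g_2^\mu$ remains evaluated at the shifted point $\mathbf{x}-\varepsilon\bm{\rho}_0(\mathbf{x},\mu,\theta_1)$ inherited verbatim from Eq.(\ref{e28}), so no further manipulation of that argument is required. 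Reading off orders, the retained correction is $O(\varepsilon^{\sigma-1}/\mu)$ and the discarded remainder is $O(\varepsilon^{2\sigma-2}/\mu^2)$, establishing the claimed exact order.

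The main obstacle I anticipate is bookkeeping of the $\mu$-dependence of the remainder rather than any genuine analytic difficulty: one must check that expanding does not generate a correction larger than $O(\varepsilon^{2\sigma-2}/\mu^2)$, and in particular that the $\mathbf{x}$-argument shift inside $g_2^\mu$ contributes only at the already-discarded order (by $O(\varepsilon\mathcal{K}_\perp)=1$, as used in Propositions \ref{prop8} and \ref{prop9}). The one conceptual subtlety is the breakdown of the expansion as $\mu\to 0$, where $\varepsilon^{\sigma-1}/\mu$ ceases to be small; this is exactly why the result is phrased with a $\mu$-dependent exact order and why the hypothesis $\mu>|\varepsilon^{\sigma-1}g_2^\mu|$ must be imposed.
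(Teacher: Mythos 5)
Your proposal is correct and follows essentially the same route as the paper, whose proof is the one-line instruction to expand $\bm{\rho}_0(\mathbf{x},\mu^*,\theta_1)$ in the parameter $\varepsilon^{\sigma-1}$; your binomial expansion of $\sqrt{\mu^*}=\sqrt{\mu}\,(1-\varepsilon^{\sigma-1}g_2^\mu/\mu)^{1/2}$, justified by the hypothesis $\mu>|\varepsilon^{\sigma-1}g_2^\mu|$ and exploiting that $\bm{\rho}_0$ depends on $\mu$ only through $\sqrt{\mu}$, is exactly that expansion carried out explicitly, including the correct observation that the argument $\mathbf{x}-\varepsilon\bm{\rho}_0(\mathbf{x},\mu,\theta_1)$ of $g_2^\mu$ is inherited verbatim from Eq.(\ref{e28}) and needs no further treatment.
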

\begin{proof}
By expanding  $\bm{\rho}_0(\mathbf{x},\mu^*,\theta_1)$ over the parameter $\varepsilon^{\sigma-1}$, Eq.(\ref{e32}) is derived. 
\end{proof}

\begin{prop}\label{prop3.6}
The integral $\int_{0}^{+\infty}{F}_{s}({\mathbf{x}}-\varepsilon \bm{\rho}_0(\mathbf{x},\mu^*,\theta_1),\mu,u_1)d\mu$ 
can be written as 
\begin{equation}\label{e51}
\int_{0}^{+\infty}{F}_{s}({\mathbf{x}}-\varepsilon \bm{\rho}_0(\mathbf{x},\mu,\theta_1),\mu,u_1)d\mu+O(\varepsilon^{\sigma} \ln\varepsilon^{\sigma-1}),
\end{equation}
with the exact order being $O(\varepsilon^{\sigma-1})$.
\end{prop}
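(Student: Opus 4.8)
The plan is to subtract the two integrals, Taylor-expand $F_s$ in its spatial argument, insert the expansion of $\bm{\rho}_0(\mathbf{x},\mu^*,\theta_1)$ supplied by Proposition \ref{prop2.6}, and then split the $\mu$-integration at the threshold below which that expansion ceases to be valid; the logarithm will emerge from the $1/\mu$ factor produced by Eq.(\ref{e32}). First I would set
$$I(\mu)\equiv {F}_{s}({\mathbf{x}}-\varepsilon \bm{\rho}_0(\mathbf{x},\mu^*,\theta_1),\mu,u_1)-{F}_{s}({\mathbf{x}}-\varepsilon \bm{\rho}_0(\mathbf{x},\mu,\theta_1),\mu,u_1),$$
so that the quantity to estimate is $\int_0^{+\infty} I(\mu)\,d\mu$. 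Expanding $F_s$ about $\mathbf{x}-\varepsilon\bm{\rho}_0(\mathbf{x},\mu,\theta_1)$ gives, to leading order, $I(\mu)=-\varepsilon\big(\bm{\rho}_0(\mathbf{x},\mu^*,\theta_1)-\bm{\rho}_0(\mathbf{x},\mu,\theta_1)\big)\cdot\nabla_{\mathbf{x}}F_s+O\big(\varepsilon^2|\bm{\rho}_0(\mathbf{x},\mu^*,\theta_1)-\bm{\rho}_0(\mathbf{x},\mu,\theta_1)|^2\big)$.

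For $\mu$ large enough that the hypothesis $\mu>|\varepsilon^{\sigma-1}g_2^\mu|$ of Proposition \ref{prop2.6} holds, Eq.(\ref{e32}) lets me replace the bracket by $-\frac{\varepsilon^{\sigma-1}g_2^\mu}{2\mu}\bm{\rho}_0(\mathbf{x},\mu,\theta_1)$, so the leading integrand becomes $\frac{\varepsilon^{\sigma}}{2\mu}\,g_2^\mu\,\bm{\rho}_0(\mathbf{x},\mu,\theta_1)\cdot\nabla_{\mathbf{x}}F_s=O(\varepsilon^{\sigma}/\mu)$. The explicit $1/\mu$ here is the entire source of the $\ln$. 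I would then split $\int_0^{+\infty}=\int_0^{\mu_c}+\int_{\mu_c}^{+\infty}$ with the cutoff $\mu_c\equiv|\varepsilon^{\sigma-1}g_2^\mu|=O(\varepsilon^{\sigma-1})$. On $(\mu_c,+\infty)$ the prefactor $\frac{1}{2} g_2^\mu\bm{\rho}_0\cdot\nabla_{\mathbf{x}}F_s$ is $O(1)$ and, through the exponential decay of $F_s$ in Eq.(\ref{e65}), is effectively supported on $\mu=O(1)$; hence $\int_{\mu_c}^{+\infty}\frac{\varepsilon^{\sigma}}{\mu}(\cdots)\,d\mu=\varepsilon^{\sigma}\big(-\ln\mu_c+O(1)\big)=O(\varepsilon^{\sigma}\ln\varepsilon^{\sigma-1})$, which is exactly the advertised error.

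On $(0,\mu_c)$ Proposition \ref{prop2.6} fails, since its correction term diverges as $\mu\to0$, so I would bound $I(\mu)$ directly: the two spatial arguments differ by $O(\varepsilon)$ and $F_s$ is Lipschitz, giving $|I(\mu)|=O(\varepsilon)$; integrating over an interval of length $\mu_c=O(\varepsilon^{\sigma-1})$ yields a contribution $O(\varepsilon^{\sigma})$, which is subdominant to $O(\varepsilon^{\sigma}\ln\varepsilon^{\sigma-1})$. Adding the two pieces proves Eq.(\ref{e51}); and since $\varepsilon^{\sigma}\ln\varepsilon^{\sigma-1}=o(\varepsilon^{\sigma-1})$, the retained integral is indeed exact at order $O(\varepsilon^{\sigma-1})$.

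The main obstacle, and the only place real care is needed, is the boundary region $\mu\sim\mu_c$: Proposition \ref{prop2.6} is non-uniform there, so the clean $1/\mu$ integrand cannot be used down to $\mu=0$, and one must match the crude inner bound on $(0,\mu_c)$ to the logarithmically divergent outer integral. The precise coefficient of the logarithm — and indeed whether a genuine $\ln$ appears at all rather than a milder, integrable singularity — depends on the small-$\mu$ behaviour of $g_2^\mu\bm{\rho}_0\cdot\nabla_{\mathbf{x}}F_s$; checking that this prefactor does not vanish as $\mu\to0$ is what pins the error at $O(\varepsilon^{\sigma}\ln\varepsilon^{\sigma-1})$ rather than at the smaller $O(\varepsilon^{\sigma})$.
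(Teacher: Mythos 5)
Your proof is correct and takes essentially the same route as the paper's: the same split of the $\mu$-integral at the cutoff $\mu_\sigma=|\varepsilon^{\sigma-1}g_2^\mu|$, the same crude inner bound $O(\varepsilon)\cdot O(\varepsilon^{\sigma-1})=O(\varepsilon^{\sigma})$ on $(0,\mu_\sigma)$, and on $(\mu_\sigma,+\infty)$ the same use of Proposition \ref{prop2.6}, whose $1/\mu$ factor integrated against $\exp(-\alpha\mu)$ yields the $O(\varepsilon^{\sigma}\ln\varepsilon^{\sigma-1})$ error, just as in the paper's estimate of the functional $\mathscr{A}$. The only cosmetic difference is that you organize the argument around the single integrand difference $I(\mu)$ rather than splitting the original integral and truncating Eq.(\ref{e43}), which does not change the substance.
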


\begin{proof}
$\int_{0}^{+\infty}{F}_{s}({\mathbf{x}}-\varepsilon \bm{\rho}_0(\mathbf{x},\mu^*,\theta_1),\mu,u_1)d\mu$ is splitted as the sum of two parts
$$\underbrace{\int_{0}^{\mu_\sigma}{F}_{s}({\mathbf{x}}-\varepsilon \bm{\rho}(\mathbf{x},\mu^*,\theta_1),\mu,u_1)d\mu}_{\textbf{(1)}}+\underbrace{\int_{\mu_{\sigma}}^{+\infty}{F}_{s}({\mathbf{x}}-\varepsilon \bm{\rho}(\mathbf{x},\mu^*,\theta_1),\mu,u_1)d\mu}_{\text{(2)}}.$$ Here,
$\mu_\sigma \equiv |\varepsilon^{\sigma-1}g_2^\mu \left({{\bf{ x}}-\varepsilon \bm{\rho}_{0}(\mathbf{x},\mu,\theta_1),{\mu}},{\theta}_1 \right)|$.

Term ``$(1)$" can be rewritten as
$$\int_{0}^{\mu_\sigma}{F}_{s}({\mathbf{x}}-\varepsilon \bm{\rho}_0(\mathbf{x},\mu,\theta_1),\mu,u_1)d\mu+O(\varepsilon^\sigma)$$  which is exactly correct at $O(\varepsilon^{\sigma-1})$. The order of the error is  determined by $O(\mu_\sigma|\varepsilon \bm{\rho}_0(\mathbf{x},\mu,\theta_1)\cdot \nabla {F}_{s0}|)\sim O(\varepsilon^\sigma)$.

In the domain $(\mu_\sigma,+\infty)$, according to Proposition.(\ref{prop2.6}), ${F}_{s}({\mathbf{x}}-\varepsilon \bm{\rho}^*(\mathbf{x},\mu,\theta_1),\mu,u_1)$ can be expanded with the order parameter $\varepsilon^{\sigma}$, which is independent of $\varepsilon$. The truncation of the expansion at the linear term is
\begin{equation}\label{e43}
\begin{split}
{F}_{s}({\mathbf{x}}-\varepsilon \bm{\rho}^*(\mathbf{x},\mu,\theta_1),\mu,u_1) &= \bigg [1+
\frac{\varepsilon^{\sigma}g_2^\mu \left({{\bf{ x}}-\varepsilon {\bm{\rho} _{0}(\mathbf{x},\mu,\theta_1)},{\mu}},{\theta}_1 \right) }{2\mu}  {\bm{\rho} _{0}(\mathbf{x},\mu,\theta_1)} \cdot \nabla\bigg]   \\
&  \times {F}_{s}({\mathbf{x}}-\varepsilon \bm{\rho}_0(\mathbf{x},\mu,\theta_1),\mu,u_1)+O(\frac{\varepsilon^{\sigma+1}}{\mu})
\end{split}
\end{equation}
with the exact order being $O(\frac{\varepsilon^\sigma}{\mu})$.  
Define the functional 
\begin{equation}\nonumber 
\begin{split}
 & \mathscr{A}({F}_{s}({\mathbf{x}}-\varepsilon \bm{\rho}_0(\mathbf{x},\mu,\theta_1),\mu,u_1),\beta_1,\beta_2) \\
& \equiv \int_{\mu_{\min}=\beta_1}^{\mu_{\max}=\beta_2}
\bigg[\begin{array}{l} 
\frac{\varepsilon^{\sigma}g_2^\mu \left({{\bf{ x}}-\varepsilon {\bm{\rho} _{0}(\mathbf{x},\mu,\theta_1)},{\mu}},{\theta}_1 \right) }{2\mu}  {\bm{\rho} _{0}(\mathbf{x},\mu,\theta_1)}  \\
\cdot \nabla {F}_{s}({\mathbf{x}}-\varepsilon \bm{\rho}_0(\mathbf{x},\mu,\theta_1),\mu,u_1)
\end{array} \bigg] d\mu . 
\end{split}
\end{equation}
Since $O(|F_{s1}|)=O(\varepsilon^{\sigma-1})$ and $O(||\frac{\nabla_\perp F_{s0}}{F_{s0}}||)=O(1)$, it's obtained that
$$O(\mathcal{E}(\mathscr{A}({F}_{s}({\mathbf{x}}-\varepsilon \bm{\rho}_0(\mathbf{x},\mu,\theta_1),\mu,u_1),\mu_\sigma,+\infty)))=O(\mathcal{E}(\mathscr{A}({F}_{s0}({\mathbf{x}},\mu,u_1),\mu_\sigma,+\infty))).$$
Concerning the equilibrium perpendicular distribution $\exp(-\alpha\mu)$, 
$O(\mathcal{E}(\mathscr{A}({F}_{s0}({\mathbf{x}},\mu,u_1), \\
\mu_\sigma,+\infty)))$ can be estimated as
$$O(\varepsilon^{\sigma}\int_{\mu_{\sigma}}^{+\infty}\frac{\exp(-\alpha\mu)}{\mu})=O(\varepsilon^{\sigma} \int_{\mu_{\sigma}}^{1}\frac{1}{\mu}d\mu)=O(\varepsilon^{\sigma}\ln\varepsilon^{\sigma-1})>O(\varepsilon^{\sigma-1}).$$
Therefore, the ignorance of the second term of Eq.(\ref{e43}) only introduces an error of the order $O(\varepsilon^{\sigma}\ln\varepsilon^{\sigma-1})$. 

Combing the rest terms of term ``$(1)$'' and  term ``$(2)$'', Eq.(\ref{e51}) is derived. 

\end{proof}

At last, the following corollary is achieved: 

\begin{corollary}\label{coro}
$n_s(\mathbf{x})$ in Eq.(\ref{e38}) can be approximated as 
\begin{equation}\label{e44}
\begin{split}
n_s(\mathbf{x}) &=\iiint_{\mu_{\min}=0}^{\mu_{\max}=+\infty}\bigg [
\begin{array}{l}
{F}_{s}({\mathbf{x}}-\varepsilon \bm{\rho}_0(\mathbf{x},\mu,\theta_1),\mu,u_1) \\
+\mathscr{F}({F}_{s0}({\mathbf{x}},\mu,u_1),\mu,u_1,\theta_1)
\end{array}
 \bigg ] B(\mathbf{x})d\mu du_1 d\theta_1 + O(\varepsilon^{\mathscr{I}})
\end{split}
\end{equation}
with the exact order being $O(\varepsilon^{\sigma-1})$, where the uncertain term possesses the order
$$O(\varepsilon^{\mathscr{I}})=\min\{O(\varepsilon^{\sigma}\ln \varepsilon^{\sigma-1}), O(\varepsilon^2)\},$$
and 
\begin{equation}\nonumber
\begin{split}
\mathscr{F}({F}_{s0}({\mathbf{x}},\mu,u_1),\mu,u_1,\theta_1) = \frac{\varepsilon^{\sigma-1}{F}_{s0}({\mathbf{x}},\mu,u_1)}{B(\mathbf{x})}  
\frac{ \partial \Phi \left({{\bf{ x}}-\varepsilon \bm{\rho}_0(\mathbf{x},\mu,\theta_1),{\mu}}\right)}{\partial \mu} .
\end{split}
\end{equation}
\end{corollary}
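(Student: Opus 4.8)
The plan is to start from the representation of $n_s(\mathbf{x})$ already established in Theorem \ref{theo.1}, Eq.(\ref{e38}), and to massage its two integrand pieces separately until they take the form displayed in Eq.(\ref{e44}), keeping a running account of the errors generated at each move. The first piece, $F_s(\mathbf{x}-\varepsilon\bm{\rho}_0(\mathbf{x},\mu^*,\theta_1),\mu,u_1)$, still carries $\mu^*$ inside the argument of $\bm{\rho}_0$; I would remove it by invoking Proposition \ref{prop3.6} directly, since that proposition was crafted precisely to evaluate $\int_0^{+\infty} F_s(\mathbf{x}-\varepsilon\bm{\rho}_0(\mathbf{x},\mu^*,\theta_1),\mu,u_1)\,d\mu$ and shows it equals $\int_0^{+\infty} F_s(\mathbf{x}-\varepsilon\bm{\rho}_0(\mathbf{x},\mu,\theta_1),\mu,u_1)\,d\mu$ up to $O(\varepsilon^\sigma\ln\varepsilon^{\sigma-1})$, cf.\ Eq.(\ref{e51}). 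After restoring the Jacobian $B(\mathbf{x})$ and the $u_1,\theta_1$ integrations, this converts the first term of Eq.(\ref{e38}) into the first term of Eq.(\ref{e44}) at the cost of a logarithmic error.

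For the second piece I would simplify the functional $\mathscr{F}$ defined in Eq.(\ref{e45}) in three steps, each producing only a subleading correction. First, split $F_s=F_{s0}+F_{s1}$; because $\mathscr{F}$ already carries an explicit $\varepsilon^{\sigma-1}$ and $O(|F_{s1}|)=O(\varepsilon^{\sigma-1})$, the $F_{s1}$ contribution is $O(\varepsilon^{2\sigma-2})$, which is $\le O(\varepsilon^2)$ throughout $2\le\sigma<3$ and may be discarded. Second, Taylor-expand $F_{s0}(\mathbf{x}-\varepsilon\bm{\rho}_0(\mathbf{x},\mu,\theta_1),\mu,u_1)$ about $\mathbf{x}$; the gradient correction is $O(\varepsilon)$, so after the prefactor $\varepsilon^{\sigma-1}$ it is $O(\varepsilon^\sigma)$, which is dominated by the logarithmic error already incurred and can be dropped, leaving $F_{s0}(\mathbf{x},\mu,u_1)$. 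Third, I would rewrite $\partial_\mu g_2^\mu$ through the defining relation between the generator $g_2^\mu$ and $\Phi$ supplied in Appendix \ref{generator}, which extracts the amplitude $B(\mathbf{x})$ and fixes the sign, turning the $g_2^\mu$-form of Eq.(\ref{e45}) into the $\Phi$-form written in the statement with its $1/B(\mathbf{x})$ factor.

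It then remains to collect the error budget. The leading perturbative content of both terms sits at $O(\varepsilon^{\sigma-1})$ and is untouched by any of the above manipulations, so the exact order is preserved at $O(\varepsilon^{\sigma-1})$, as claimed. The uncertain term is the worst of the accumulated contributions: the $O(\varepsilon^2)$ already present in Eq.(\ref{e38}) together with the $O(\varepsilon^\sigma\ln\varepsilon^{\sigma-1})$ generated by Proposition \ref{prop3.6}; comparing these across $\sigma\in[2,3)$ reproduces the stated $O(\varepsilon^{\mathscr{I}})=\min\{O(\varepsilon^\sigma\ln\varepsilon^{\sigma-1}),O(\varepsilon^2)\}$.

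I expect the genuine obstacle to lie in the first step rather than in the bookkeeping. Replacing $\mu^*$ by $\mu$ inside $\bm{\rho}_0$ is \emph{not} a uniformly small perturbation: near $\mu=0$ the relative change in $\bm{\rho}_0$ is $O(\varepsilon^{\sigma-1}/\mu)$ by Proposition \ref{prop2.6} and blows up, so a naive expansion under the integral sign would be illegitimate. The entire reason for leaning on Proposition \ref{prop3.6} is that it treats the near-origin slab $\mu<\mu_\sigma$ separately from the bulk $\mu>\mu_\sigma$, and it is precisely the bulk integral $\int_{\mu_\sigma}^{1}d\mu/\mu$ that manufactures the logarithm. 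The care required is to confirm that this $O(\varepsilon^\sigma\ln\varepsilon^{\sigma-1})$ error, and not the discarded $O(\varepsilon^\sigma)$ gradient term or the $O(\varepsilon^{2\sigma-2})$ perturbative term, is what controls the uncertain order, while none of them contaminates the exact $O(\varepsilon^{\sigma-1})$ level.
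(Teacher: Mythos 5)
Your proposal is correct and takes essentially the same route as the paper's proof: both start from Eq.(\ref{e38}), invoke Proposition \ref{prop3.6} to trade $\mu^*$ for $\mu$ inside $\bm{\rho}_0$ at the cost of $O(\varepsilon^{\sigma}\ln\varepsilon^{\sigma-1})$, reduce $\mathscr{F}$ first by discarding the $F_{s1}$ contribution of order $O(\varepsilon^{2\sigma-2})$ and then by Taylor-expanding away the $\varepsilon\bm{\rho}_0$ shift at cost $O(\varepsilon^{\sigma})$, and combine the surviving errors into $\min\{O(\varepsilon^{\sigma}\ln\varepsilon^{\sigma-1}),O(\varepsilon^{2})\}$. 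Your explicit treatment of the sign and the $1/B(\mathbf{x})$ factor via the generator relation $g_2^\mu=[\phi-\Phi]/B$, and your remark on the non-uniformity near $\mu=0$ that forces the use of Proposition \ref{prop3.6} rather than a naive expansion, are faithful elaborations of steps the paper leaves implicit, not a departure from its argument.
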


\begin{proof}
The reduction of $\mathscr{F}({F}_{s}({\mathbf{x}}-\varepsilon \bm{\rho}_0(\mathbf{x},\mu,\theta_1),\mu,u_1),\mu,u_1,\theta_1)$ in Eq.(\ref{e38}) to \\ 
$\mathscr{F}({F}_{s0}({\mathbf{x}}-\varepsilon \bm{\rho}_0(\mathbf{x},\mu,\theta_1),\mu,u_1),\mu,u_1,\theta_1)$ only introduces an error of the order $O(\varepsilon^{2\sigma-2})$ due to $O(F_{s1})=O(\varepsilon^{\sigma-1})$. The further approximation of $\mathscr{F}({F}_{s0}({\mathbf{x}}-\varepsilon \bm{\rho}_0(\mathbf{x},\mu,\theta_1),\mu,u_1),\mu,u_1)$ to $\mathscr{F}({F}_{s0}({\mathbf{x}},\mu,u_1),\mu,u_1)$ introduces an error of the order $O(\varepsilon^{\sigma})$. 

The approximation of the integrand of Eq.(\ref{e37}) to that of Eq.(\ref{e44}) introduces two error terms, one of which would become the uncertain term contained by $f_s^*(\mathbf{x},\mu,u_1,\theta_1)$ in Eq.(\ref{e21}). The second one is of the order $O(\varepsilon^{\sigma}\ln \varepsilon^{\sigma-1})$  proved by Proposition.(\ref{prop3.6}).  So, the uncertain term is $\min\{O(\varepsilon^{\sigma}\ln \varepsilon^{\sigma-1}), O(\varepsilon^2)\}$.  Then, Eq.(\ref{e44}) can be proved in the same way to prove the inequality (\ref{d2}).
\end{proof}
As a consequence of Corollary.(\ref{coro}), compared with the density in SGM which is not truly accurate at order $O(\varepsilon^{\sigma-1})$, the density in Eq.(\ref{e44}) is exactly correct at  $O(\varepsilon^{\sigma-1})$.


\subsection{The non-normalized QNE of the new model} 

First, the units of all the quantities are recovered.  The Larmor radius with the units recovered is denoted as
\begin{equation}\label{extr}
{\bar{\bm{\rho}} _0} (\mathbf{x},\mu,\theta_1)= \frac{1}{q_s}\sqrt {\frac{{2m_s\mu }}{{B\left( {{{\bf{x}}}} \right)}}} \left( { - {{\bf{e}}_1}\cos \theta_1  + {{\bf{e}}_2}\sin \theta_1 } \right).
\end{equation}
The plasma concerned here only contains  electrons and one species ion being protons. For the equilibrium distribution given by Subsec.(\ref{distri}), 
based on the density in Eq.(\ref{e44}), QNE with unites recovered is
\begin{subequations}\label{e52}
\begin{eqnarray}
&&-n_{i1}-\tilde{\Phi} '+\frac{en_{0}}{T_e}\phi=0,  \label{e52.1} \\
&& n_{i1}=\iiint_{\mu_{\min}=0}^{\mu_{\max}} F_{i1}(\mathbf{x}-\bar{\bm{\rho}}_0(\mathbf{x},\mu,\theta_1),\mu,u_1)\frac{B}{m_i}d\mu du_1 d\theta_1, \label{e52.2} \\
&& \tilde{\Phi} '=\frac{em_i n_{0}}{2\pi T_i B}\iint_{\mu_{\min}=0}^{\mu_{\max}} \exp(\frac{-\mu B}{T_i})\frac{\partial \Phi(\mathbf{x}-\bar{\bm{\rho}}_0(\mathbf{x},\mu,\theta_1),\mu)}{\partial \mu} \frac{B}{m_i} d\mu d\theta_1. \label{e52.3}
\end{eqnarray}
\end{subequations}
Here, since $\mu$ is a conserved quantity and the equilibrium distribution is proportional to $\exp(\frac{-\mu B}{T_i})$, the upper bound of the domain for $\mu$
is not necessary to be $+\infty$ for the realistic application. So $\mu_{\max}$ is used to replace $+\infty$ in the up equations.

\section{The gyrokinetic models}\label{model}


In this simulation, the $\theta$-pinch magnetic field configuration is used with constant amplitude of the magnetic field in the simulated region. So the cylindrical coordinates frame will be used. The numerical solutions are computed using normalized equations. The quantities $t,v,B,l,\mu,T,\phi$ are normalized by $t_0\equiv \frac{m}{B_0 q_i}$, $v_0\equiv \sqrt{\frac{T_{e0}}{m_i}}$, $B_0$, $l_0\equiv\frac{mv_0}{eB_0}$, $\mu_0\equiv \frac{T_{e0}}{B_0}$, $T_{e0}$ and $\phi_0\equiv \frac{T_{e0}}{q_i}$, respectively, where  $T_{e0}\equiv T_e(r_p)$ and $r_p \in [r_{min},r_{max}]$ is the radial coordinate of the peak of the initial distribution function.\\

\setlength\parindent{0em}  \textbf{QNE of the new model} :  \\
\setlength\parindent{1.5em}
The normalized version of Eq.(\ref{e52.1}) is
\begin{equation}\label{e35}
-\frac{\tilde{{\Phi}}(\mathbf{x})}{T_i}+\frac{\phi(\mathbf{x})}{T_e}=\frac{ n_{i1}}{n_0}
\end{equation}
with 
\begin{subequations}\label{e53}
\begin{eqnarray}
&&\tilde{{\Phi}}(\mathbf{x})=\frac{1}{2\pi}\iint_{0}^{\mu_{\max}}\exp(\frac{-\mu B(\mathbf{x})}{T_i(\mathbf{x})}) \frac{\partial \Phi(\mathbf{x}-\bar{\bm{\rho}}'_0(\mathbf{x},\mu,\theta_1),\mu)}{\partial \mu} B(\mathbf{x})d\mu d\theta_1, \label{e53a}\\
&&n_{i1}(\mathbf{x})= \iiint_{0}^{\mu_{\max}} F_{i1}(\mathbf{x}-\bar{\bm{\rho}}'_0(\mathbf{x},\mu,\theta_1),\mu,u_1)B(\mathbf{x})d\mu du_1 d\theta_1,  \label{e53b} \\
&&\bar{\bm{\rho}}'_0(\mathbf{x},\mu,\theta_1)=\sqrt{\frac{2\mu}{B(\mathbf{x})}}\left( { - {{\bf{e}}_1}\cos \theta_1  + {{\bf{e}}_2}\sin \theta_1 } \right). \label{e53c}
\end{eqnarray}
\end{subequations}
\\
\setlength\parindent{0em}  \textbf{QNE of the standard model} :  \\
\setlength\parindent{1.5em}
The normalized QNE of the standard model can be written as
\begin{equation}\label{e59}
\frac{{\phi ({\bf{x}})}}{{{T_i}}} - \frac{{B{\tilde\phi} ({\bf{x}})}}{{T_i^2}} + \frac{\phi ({\bf{x}}) }{{{T_e}}}= \frac{{{n_{i1}}}}{{{n_{i0}}}} ,
\end{equation}
where $n_{i1}$ is given by Eq.(\ref{e53b}) and  ${\tilde\phi} ({\bf{x}})$ is 
\begin{equation}
{\tilde\phi} ({\bf{x}})=\frac{1}{2\pi }\iint_{0}^{\mu_{\max}}\Phi\big(\mathbf{x}-\bar{\bm{\rho}}'_0\big(\mathbf{x},\mu_1,\theta_1\big),\mu_1\big)\exp\bigg(-\frac{\mu_1 B}{T_i}\bigg)B({\bf{x}})d\mu_1d\theta_1
\end{equation}
\\
\setlength\parindent{0em}  \textbf{The equations of motion and Vlasov equation} :  \\
\setlength\parindent{1.5em}
The normalized orbit equations of the gyrocenter coordinates are
\begin{subequations}\label{b34}
\begin{eqnarray}
\dot {\bf{X}} ({\bf{X}},\mu,U) &=& \frac{{U{{\bf{B}}^*} - {\bf{b}} \times \nabla (\mu B + \Phi ({\bf{X}},\mu ))}}{{\mathbf{b}\cdot \mathbf{B} ^*}}, \label{b34.1} \\
\dot U ({\bf{X}},\mu,U)  &=& \frac{{{{\bf{B}}^*}\cdot\nabla \left( {\mu B + \Phi ({\bf{X}},\mu )} \right)}}{{\mathbf{b}\cdot \mathbf{B} ^*}}, \label{b34.2}  \\
\dot{\mu}&=&0 \label{b34.3}. 
\end{eqnarray}
\end{subequations}
where $\mathbf{B}^* \equiv\mathbf{B}+U\nabla\times \mathbf{b}=1 \mathbf{e}_\parallel$ due to the choice of $\mathbf{B}=1 \mathbf{e}_\parallel$. With $\mathbf{B}=1 \mathbf{e}_\parallel$, it's easy to check the incompressible property of the orbit equation 
\begin{equation}\label{b35}
\nabla \cdot\mathop {\bf{X}}\limits^.  + {\partial _U}\dot U = 0.
\end{equation}
Then, the Vlasov equation can be rewritten in a flux form
\begin{equation}\label{b36}
\frac{{\partial F({\bf{X}},\mu,U)}}{{\partial t}} + \frac{d}{{d{\bf{X}}}}\cdot\left( {\mathop {\bf{X}}\limits^. F({\bf{X}},\mu,U)} \right) + \frac{d}{{dU}}\left( {\dot UF({\bf{X}},\mu,U)} \right) = 0.
\end{equation}
In the numerical simulation, $\mathop {\bf{X}}\limits^. $ and $\dot U$ will be formulated in the cylindrical coordinate frame.

\section{The numerical simulation}\label{numerical}

Since the spatial domain of the full-orbit coordinate frame and gyrocenter coordinate frame is identical, we will use the symbol $\mathbf{x}$ uniformly to denote the spatial domain. 

\subsection{The formulas in cylindrical coordinates}

The theta-pinch magnetic field configuration with the constant amplitude is implemented in this simulation. 
The equations of motion given by Eqs.(\ref{b34.1},\ref{b34.2},) are rewritten in cylindrical coordinate frame as
\begin{subequations}
\begin{eqnarray}
\dot{r} &=& \frac{1}{r}\partial_\Theta \Phi, \\
\dot{\Theta}&=&-\frac{1}{r}\partial_r \Phi, \\
\dot{x_\parallel}&=& U, \\
\dot{U}&=&\partial_{x_\parallel} \Phi
 \end{eqnarray}
\end{subequations}
The Vlasov equation in cylindrical coordinates is
\begin{equation}\label{e70}
\big[{\partial_t}+(\frac{1}{r}\partial_\Theta \Phi {\partial_r}-\frac{1}{r}\partial_r \Phi {\partial_\Theta})+ U \partial_ {x_\parallel}+ \partial_{x_\parallel} \Phi {\partial_U}\big]F=0.
\end{equation}

\subsection{The algorithms used in this simulation}

\subsubsection{The algorithm with respect to $\mu$}

The domain $(0,\mu_{\max})$  is divided into $N_\mu$ segments with unequal length by the following scheme. We choose a weight function $\exp(-\frac{\mu B}{T_i(r_0)})$ with $r_0\equiv\frac{r_{\min}+r_{\max}}{2}$ and require that the neighbour  points satisfy the equation $\mathscr{G}(\mu_{j-1},\mu_{j})=0$ for $j\ge 2$ with the function $\mathscr{G}(\mu_{j-1},\mu_{j})$ defined as
\begin{equation}\label{e63}
\mathscr{G}(\mu_{j-1},\mu_{j})\equiv \int_{\mu_{j-1}}^{\mu_{j}}e^{\frac{-\mu B}{T_i(r_0)}}d\mu-\frac{\int_{0}^{\mu_{\max}}e^{\frac{-\mu B}{T_i(r_0)}}d\mu}{N_\mu}.
\end{equation}
The first point $\mu_{1}$ satisfies $\mathscr{G}(0, \mu_{1})=0$.  The step length for $j$ with $N_\mu-1\ge  j\ge 2$ is defined as
\begin{equation}\label{e64}
\delta\mu_{j}=\frac{\mu_{j+1}-\mu_{j-1}}{2},
\end{equation}
while $\delta\mu_{1}=\frac{\mu_{2}}{2}$ and $\delta \mu_{N_\mu}=\frac{\mu_{\max}-\mu_{N_\mu-1}}{2}$.

In the discrete version of $\mu$, the distribution of ions associated with each $\mu_{j}$ with $j\in\{1,\cdots,N_\mu\}$ is denoted as $F(\mathbf{x},\mu_{j},U)$.
Due to  the identity $\frac{d\mu_j}{dt}=0$, $F(\mathbf{x},\mu_{j},U)$ satisfies the Vlasov equation 
\begin{equation}\label{e68}
\big[{\partial_t}+(\frac{1}{r}\partial_\Theta \Phi_j {\partial_r}-\frac{1}{r}\partial_r \Phi_j {\partial_\Theta})+ U \partial_ {x_\parallel}+ \partial_{x_\parallel} \Phi_j {\partial_U}\big]F_i(\mathbf{x},\mu_j,U)=0.
\end{equation}
$F_{i}(\mathbf{x},\mu_{j},U)$ can be rewritten as the sum 
$$F_{i}(\mathbf{x},\mu_{j},U)=F_{0i}(\mathbf{x},\mu_{j},U)+F_{1i}(\mathbf{x},\mu_{j},U), $$
with 
\begin{equation}\label{e67}
F_{0i}(\mathbf{x},\mu_{j},U)=F_{0i\parallel}(\mathbf{x},U)F_{0i\perp}(\mathbf{x},\mu_j)
\end{equation}
and  $F_{0j\perp}(\mathbf{x},\mu_j)=\frac{1}{2\pi T_i}\exp(-\frac{\mu_j B(\mathbf{x})}{T_i(\mathbf{x})})$. In the numerical simulation, $F_{0i}(\mathbf{x},\mu_{j},U)$ doesn't evolve. At each time step, $F_{i}(\mathbf{x},\mu_{j},U)$ is obtained by solving Eq.(\ref{e68}) and $F_{1i}(\mathbf{x},\mu_{j},U)$ is derived by using $F_{i}(\mathbf{x},\mu_{j},U)$ minus $F_{0i}(\mathbf{x},\mu_{j},U)$.

The full-orbit distribution associated with each $j$ is denoted as $f_i(\mathbf{x},\mu_j,u_1,\theta)$. For the new model, its contribution to the density on the full-orbit coordinate frame is contained by $\tilde{\Phi}(\mathbf{x},\mu_j)$ and $n_{i1}(\mathbf{x},\mu_j)$ with 
$$\tilde{{\Phi}}(\mathbf{x},\mu_j)\equiv\frac{1}{2\pi}\int \left. {\frac{\partial \Phi(\mathbf{x}-\bar{\bm{\rho}}'_0(\mathbf{x},\mu,\theta_1),\mu)}{\partial \mu}} \right |_{\mu=\mu_j}d\theta_1,$$
$$n_{i1}(\mathbf{x},\mu_j)\equiv \iint F_{i1j}(\mathbf{x}-\bar{\bm{\rho}}'_0(\mathbf{x},\mu_j,\theta_1),\mu_j,u_1)du_1 d\theta_1, $$
Then, $\tilde{{\Phi}}(\mathbf{x})$ and $n_{i1}(\mathbf{x})$  are obtained by the discrete sums
\begin{subequations}\label{e69}
\begin{eqnarray}
\tilde{{\Phi}}(\mathbf{x})&=&\sum \limits_{j=1}^{N_\mu} \tilde{{\Phi}}(\mathbf{x},\mu_j) \exp(\frac{-\mu_j B}{T_i}) B \delta\mu_j,  \label{e69.1}\\
n_{i1}(\mathbf{x})&=&\sum \limits_{j=1}^{N_\mu} n_{i1}(\mathbf{x},\mu_j) B\delta\mu_j.  \label{e69.2} 
\end{eqnarray}
\end{subequations}

In the standard model,  $\tilde{\Phi}(\mathbf{x},\mu_j)$  is replaced by  
$${\tilde\phi} (\mathbf{x},\mu_j)\equiv \frac{1}{2\pi}\int\Phi\big(\mathbf{X}-\bar{\bm{\rho}}'_0\big(\mathbf{x},\mu_j,\theta_1\big),\mu_j\big)d\theta_1,$$
and 
\begin{equation}\label{e71}
{\tilde{\phi}}(\mathbf{x})=\sum \limits_{j=1}^{N}{ \tilde{\phi}}(\mathbf{x},\mu_j) \exp(\frac{-\mu_j B}{T_i}) B\delta\mu_j. 
\end{equation}


\subsubsection{Interpolation algorithm to compute the gyroaverage and double-gyroaverage term}

To compute the gyroaverage and double-gyroaverage term, instead of truncating the Taylor expansion of the gyroaverage term at the second order, we implemented the interpolation algorithm, which replaces the integral of gyroaverage by a discrete sum of the function quantities over the Larmor circle and the function quantity at a point on the Larmor circle is obtained by the interpolation with cubic spline as an example. Due to that the number of interpolation points around the Larmor circle can be chosen arbitrarily, the integral of gyroaverage can be approximated with any accuracy by this interpolation method  by choosing enough interpolation points. Therefore, this numerical method can recover the short-scale information embodied by DGT theoretically, with only the constraint coming from the length scale of the mesh of the simulated domain. Since the interpolation coefficients only involves the equilibrium quantities, these coefficients can be assembled as a matrix and computed and stored at the beginning of simulations, preparing for the subsequent revoking \cite{refId0, Steiner2015}. 

To do this, we consider a uniform polar mesh on the domain $[r_{min},r_{max}]\times [0,2\pi]$ including $N_r \times N_{\Theta}$ cells:
\begin{eqnarray}
C_{h,j}=[r_h,r_{h+1}]\times[\Theta_p,\Theta_{p+1}], h=0,\cdots,N_r; p=0,\cdots,N_{\Theta}-1   \nonumber
\end{eqnarray}
where
\begin{eqnarray}
r_h &=& r_{min}+h\frac{r_{max}-r_{min}}{N_r}, h=0,\cdots,N_r \nonumber \\
\Theta_p &=& \frac{2\pi p}{N_{\Theta}}, p=0,\cdots,N_\Theta. \nonumber
\end{eqnarray}
The gyroangle $\theta\in [0,2\pi)$ is divided into $N_\theta$ equal segments with 
$$\theta_l=\frac{2\pi l}{N_\theta}, l\in\{0,\cdots, N_\theta -1\}.$$
The domain of magnetic moment $(0, \mu_{\max})$ is also divided into $N_\mu$ cells.  

The computation of $\Phi({r_h},{\Theta _p},x_\parallel,\mu_j)$ at a point $(r_h,\Theta_p)$ of the polar mesh as the first gyroaverage of  $\phi$  is approximated  by following discrete sum:
\begin{equation}\label{b45}
\Phi({r_h},{\Theta _p},x_\parallel,\mu_j) = \frac{1}{N}\sum\limits_{l= 0}^{N - 1} {\phi \left( {{r_h}\cos {\Theta _p} + \rho_j \cos \left( {\frac{{2l\pi }}{N}} \right),{r_h}\sin {\Theta _p} + \rho_j \sin \left( {\frac{{2l\pi }}{N}} \right)},  x_\parallel \right)},
\end{equation}
where $\rho_j=\sqrt{2\mu_j}$.
The computation of the term ${\tilde{\phi}}(\mathbf{x},\mu)$ as the  second gyroaverage of $\phi$ is approximated as
\begin{equation}\label{b46}
{\tilde{\phi}} ({r_h},{\Theta _p},x_\parallel, \mu_j) = \frac{1}{N}\sum\limits_{l = 0}^{N - 1} {\Phi \left( {{r_h}\cos {\Theta _p} - \rho_j \cos \left( {\frac{{2l\pi }}{N}} \right),{r_h}\sin {\Theta _p} - \rho_j \sin \left( {\frac{{2l\pi }}{N}} \right)},x_\parallel,\mu_j \right)}.
\end{equation}
The respective symbols $+$ and $-$ in Eq.(\ref{b45}) and Eq.(\ref{b46}) should be paid attention. $\tilde{\Phi}(\mathbf{x},\mu)$ is computed by 
$$\tilde{\Phi}({r_h},{\Theta _p},x_\parallel, \mu_j)=\frac{{\tilde{\phi}} ({r_h},{\Theta _p},x_\parallel, \mu_j+d\mu)-{\tilde{\phi}} ({r_h},{\Theta _p},x_\parallel, \mu_j-d\mu)}{2d\mu} .$$
$\Phi({r_h},{\Theta _p},x_\parallel,\mu_j) $, ${\tilde{\phi}} ({r_h},{\Theta _p},x_\parallel, \mu_j)$ and $\tilde{\Phi}({r_h},{\Theta _p},x_\parallel, \mu_j)$ for all $h$s and $p$s can be assembled as the product between  the respective matrix and a vector defined as
$$\phi\equiv (\phi_{0,0},\cdots,\phi_{N_r,0},\phi_{0,1},\cdots,\phi_{N_r,1},\cdots,\phi_{0,N_{\Theta-1}},\cdots,\phi_{N_r,N_{\Theta-1}})^t,$$
where
$${\phi}_{h,p} \equiv \phi({r_h},{\Theta _p},x_\parallel).$$
${\Phi}({r_h},{\Theta _p},x_\parallel, \mu_j)$ is the first gyroaverage term. The  electric field 
$$\mathbf{E}({r_h},{\Theta _p},x_\parallel, \mu_j)\equiv -\nabla {\Phi}({r_h},{\Theta _p},x_\parallel, \mu_j)$$
is used to drive the advection of $F_j(C,\mu_j,U)$ through Eq.(\ref{e68}).

Due to the periodic property in $\Theta$ dimension, the matrixes of ${\tilde{\phi}} ({r_h},{\Theta _p},x_\parallel, \mu_j)$ and $\tilde{\Phi}({r_h},{\Theta _p},x_\parallel, \mu_j)$ are of the circulant block structure, which in fourier basis can be transformed as block diagonal matrix. With FFT, their inverses can be easily solved. This technology is already used, for instance in Ref.\cite{refId0}. 

\subsubsection{The other algorithms used in the simulation}

The advection of the distribution uses the backward semi-Lagrangian scheme\cite{Sonnendrucker1999, Crouseilles2014, Grandgirard2006, Grandgirard2008, Latu2017}. The characteristics is given by Eq.(\ref{b34}). Since the Vlasov is written in a conservative from, it can be solved by splitting between the space and the velocity coordinates Ref.\cite{Latu2017, Grandgirard2006, Sonnendrucker1999, Garbet2010}. 

A. 1D advection along $x_\parallel$
$$(\partial_t+U \partial_ {x_\parallel})F(\mathbf{x},\mu_j,U)=0;$$

B. 1D advection along $U$
$$(\partial_t+\partial_{x_\parallel} \Phi {\partial_U})F(\mathbf{x},\mu_j,U)=0;$$

C. 2D advection in the cross section
$$({\partial_t}+\frac{1}{r}\partial_\Theta \Phi {\partial_r}-\frac{1}{r}\partial_r \Phi {\partial_\Theta})F(\mathbf{x},\mu_j,U)=0 .$$
The Verlet algorithm is used to find out the starting phase-space point of the characteristics ending at the mesh points. The two-dimensional cubic spline interpolation with periodic boundary condition on the polar angle dimension and natural boundary condition on the radial dimension  and $5$th order Lagrangian interpolation are used to compute the value of the distribution function at that starting point, which will be treated as the value of the distribution function at the associated mesh grid and as the initial value for the next iteration. 

\subsection{The initial distribution}

In the cylindrical coordinates system, the initial distribution is of the structure in Eq.(\ref{qq22}) and its specific formula is 
\begin{equation}\label{b39}
F(0,r,{x_\parallel },\mu, U,\Theta ) = {F_{eq}}\left( {r,\mu,U} \right) \times \left( {1 + \eta \exp \left( { - \frac{{{{(r - {r_p})}^2}}}{{\delta r}}} \right)\sum\limits_{n,m,l} {\cos (\frac{{2\pi n}}{L_\parallel}x_\parallel + m\Theta +\frac{2\pi p}{L_r} r)} } \right),
\end{equation}
where $n,m,p$ are the mode numbers in the respective dimensions and
the equilibrium function $F_{eq}$ is
\begin{equation}\label{b40}
F_{eq}\left( {r,\mu,U} \right) = \frac{n_0\left( r \right)\exp \left( -\frac{U^2}{2T_i\left( r \right)}-\frac{\mu B}{T_i(r)}\right)}{\left( 2\pi T_i(r) \right)^{3/2}}.
\end{equation}
The profile $T_i(r)$, $T_e(r)$ and $n_0(r)$ are given by:
\begin{equation}\label{b41}
\mathcal{P}(r) = {C_\mathcal{P}}\exp \left( { - {k_\mathcal{P}}\delta {r_\mathcal{P}}\tanh \left( {\frac{{r- {r_\mathcal{P}}}}{{\delta {r_\mathcal{P}}}}} \right)} \right)
\end{equation}
where $\mathcal{P}\in \{T_i,T_e,n_0\}$, $C_{T_i}=C_{T_e}=1$ and
\begin{equation}\label{b42}
{C_{{n_0}}} = \frac{{{r_{\max}} - {r_{\min}}}}{{\int_{{r_{\min}}}^{{r_{max}}} {\exp \left( { - {\kappa _{{n_0}}}\delta {r_{{n_0}}}\tanh \left( {\frac{{r - {r_P}}}{{\delta {r_{{n_0}}}}}} \right)} \right)dr} }}.
\end{equation}
We consider the parameters of  \cite{Coulette2013} [Medium case]:  $\eta=10^{-4}$, $k_{n_0}=13.2$, $\kappa_{T_i}=\kappa_{T_e}=66.0$, $\delta r_{T_e}=\delta r_{T_e}=0.1$, $L_\parallel=1500$, $r_p=0.5$, $\delta_r=0.2$. 
The simulation domain of $r\times\Theta \times x_\parallel \times U \times \mu$ is $(0.1, 14.5)\times [0,2\pi)\times (0, 1500)\times (-7.32,7.32)\times (0,7)$.

\subsection{Parallelization}

The simulation domain of $r\times\Theta \times x_\parallel \times U \times \mu$ is divided into the mesh with $128\times 64 \times 32 \times 32 \times 16$ cells.  The simulation is carried out on ATLAS4 of IRMA.  MPI is used in the parallelisation. 128 processors are divided into $16$ sub-communicators. $F_{j}(\mathbf{x},\mu_{j},U)$ with $j\in\{1,\cdots,16\}$ is exclusively computed by the $j$th sub-communicator. And the respective precomputing matrixes of $\Phi({r_h},{\Theta _p},x_\parallel,\mu_j)$, $\tilde{{\phi}}({r_h},{\Theta _p},x_\parallel,\mu_j)$ and ${\tilde\Phi}({r_h},{\Theta _p},x_\parallel,\mu_j)$ are stored in the  $j$th sub-communicator. 
$\tilde{{\Phi}}({r_h},{\Theta _p},x_\parallel)$, $n_{i1}({r_h},{\Theta _p},x_\parallel)$, $n_{i0}({r_h},{\Theta _p},x_\parallel)$ and ${\tilde{\phi}}({r_h},{\Theta _p},x_\parallel)$ in Eqs.(\ref{e69.1}-\ref{e69.2}) and (\ref{e71}) are computed by "MPI{\_}ALLREDUCE" the respective quantity stored in the processors of the same ``color'' with respect to the respective sub-communicator. 

To calculate the advection of distribution function in the 4D domain $(r,\Theta, x_\parallel,U)$,  two parallelization schemes are involved: the one of parallelizing $x_\parallel$ with $r,\Theta,U$ sequential is utilized to calculate the advection due to $\dot{r},\dot{\Theta},\dot{U}$; the other one of parallelizing $r,\Theta,U$ with $x_\parallel$ sequential is implemented to compute the advection due to  $U$. To compute the original points of the characteristic $\dot{r},\dot{\Theta},\dot{U}$, the parallelization of $x_\parallel$ with $r,\Theta,U$ sequential  is used. 
The parallelization in $x_\parallel$ with $r,\Theta$ sequential is implemented to compute QNE in the poloidal cross section. 

\subsection{The simulation results}

$\delta t=8$ is chosen as the time step in the simulations. $600$ steps are carried out and the data is stored every three steps.
The evolution of the potential profile on the polar cross section for both models is shown in Fig.(\ref{potentialevolution}). Both simulations begin with the same equilibrium density profile and the perturbative density profile. The potential profiles on the polar cross section at time moments $24, 4320$ computed by the two models are given in
Fig.(\ref{potentialevolution}). The evolution of the polar Fourier modes with the mode numbers $0,4,5,8,10,15$ of the potential are plotted in Fig.(\ref{polarmode}). Both model exhibit strong nonlinear interaction. The growth rate of polar mode $l=5$ at the radial grid $60$ of the two models is plotted in Fig.(\ref{growthrate}). The samplings of the radial Fourier spectrum of  the potential are plotted in  Fig.(\ref{radialpotential}). In Fig.(\ref{radialpotential}), the obvious difference between the two spectrums appears for the waves whose model numbers larger or equal $16$, indicating that the microturbulences computed by the two modes are different. Fig.(\ref{therm_diag}) plots the evolution of the quantity $\int_{r_{\min}}^{r_{\max}}\int_0^{2\pi}|\phi(r,\Theta,0)|^2rdrd\Theta$ computed by the two models. 


\begin{figure}[htbp]
\centering
 \includegraphics[width=0.35\textwidth, trim={1.5cm 0.5cm 2.5cm 0.5cm},clip]{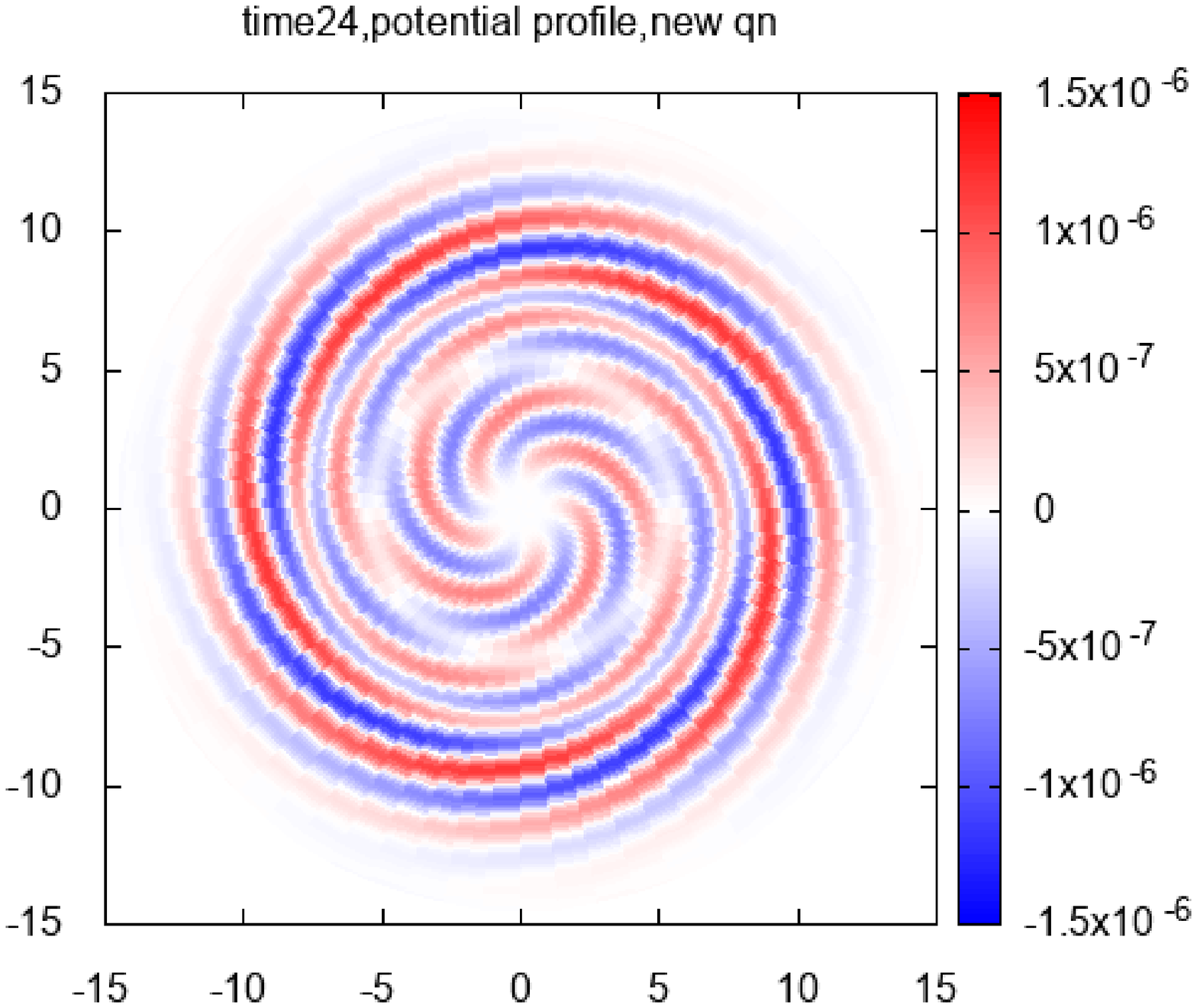}
 \includegraphics[width=0.35\textwidth, trim={1.5cm 0.5cm 2.5cm 0.5cm},clip]{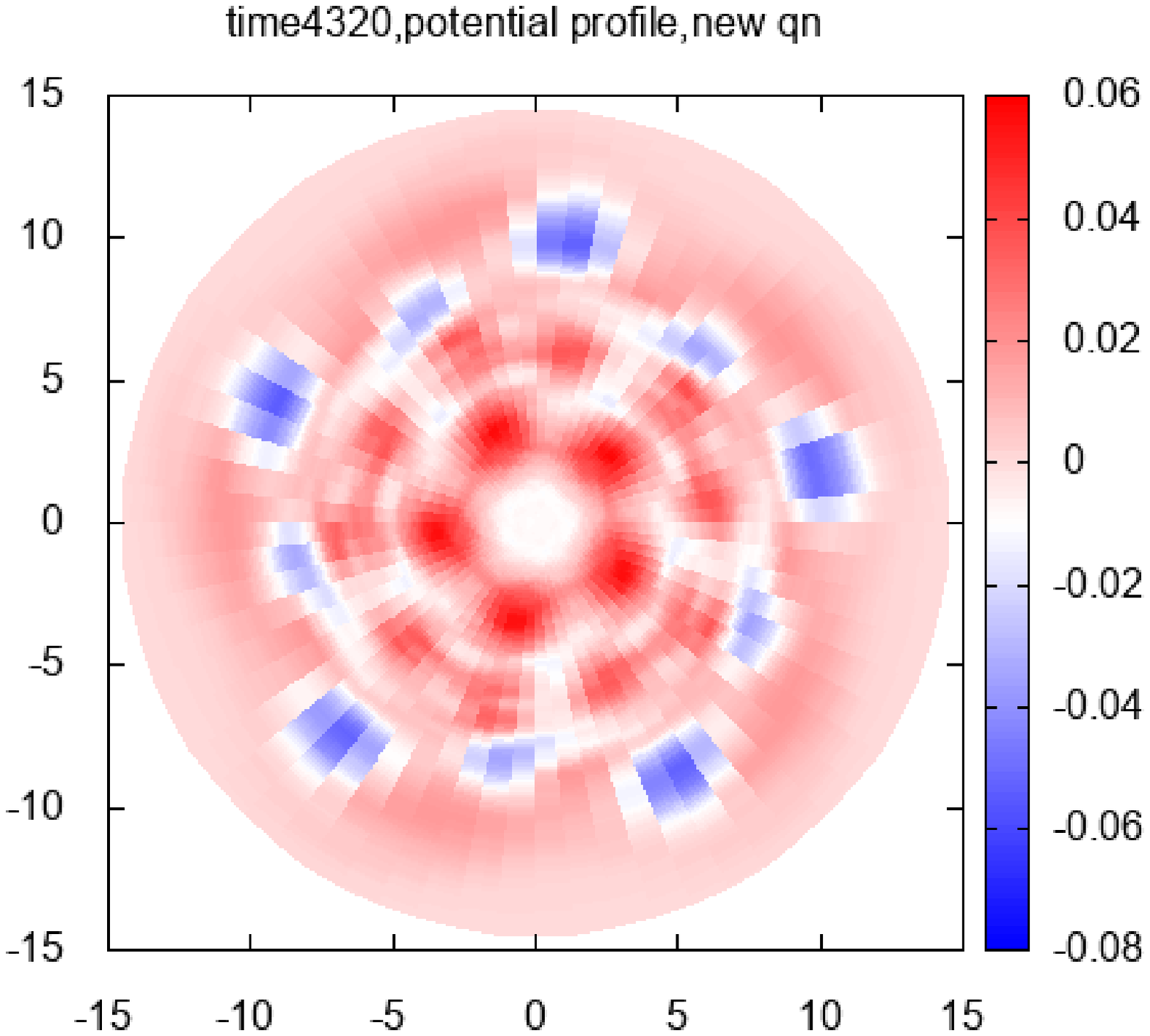}
\hspace{8pt}
\includegraphics[width=0.35\textwidth, trim={1.5cm 0.5cm 2.5cm 0.5cm},clip]{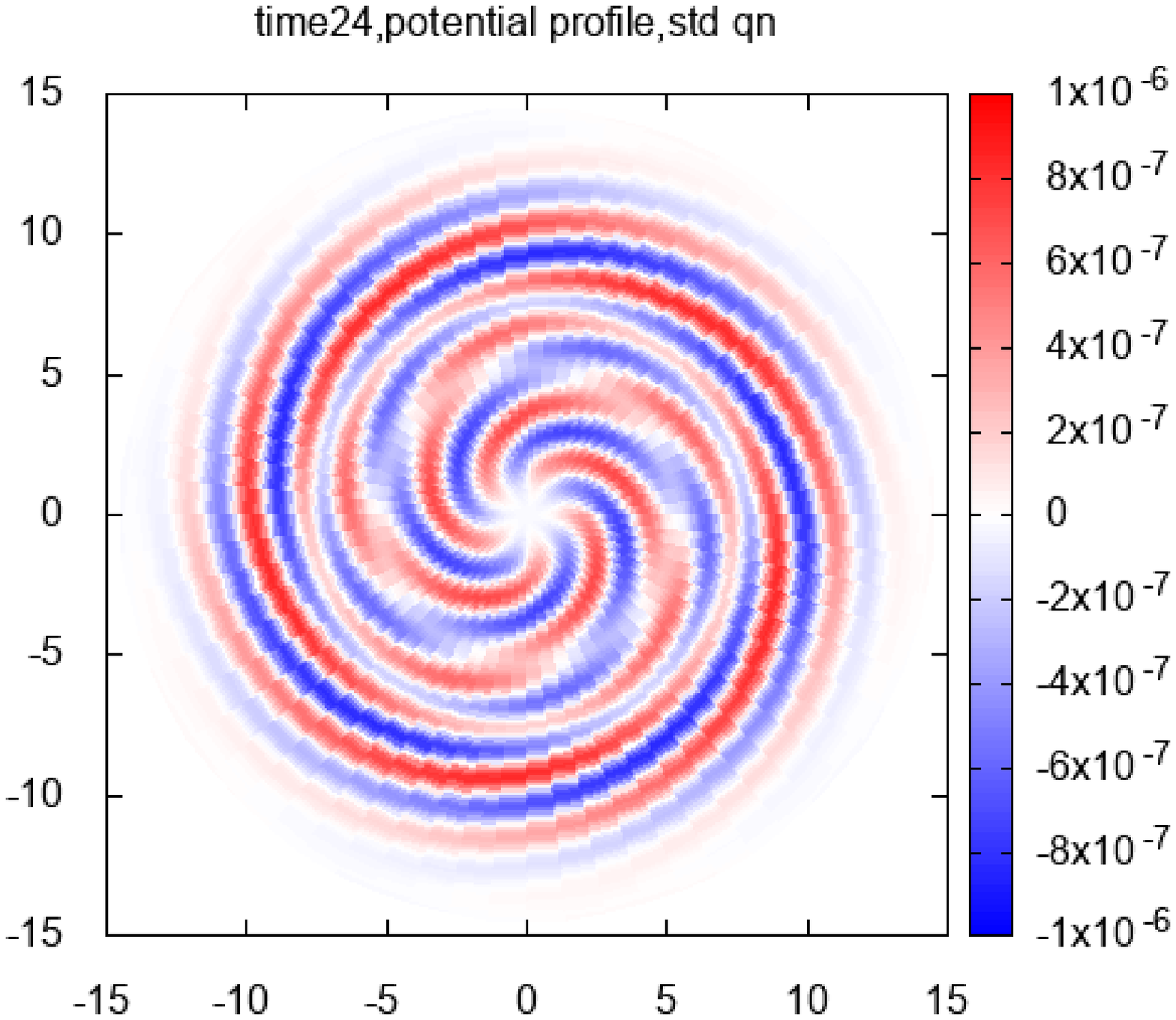}
\includegraphics[width=0.35\textwidth, trim={1.5cm 0.5cm 2.5cm 0.5cm},clip]{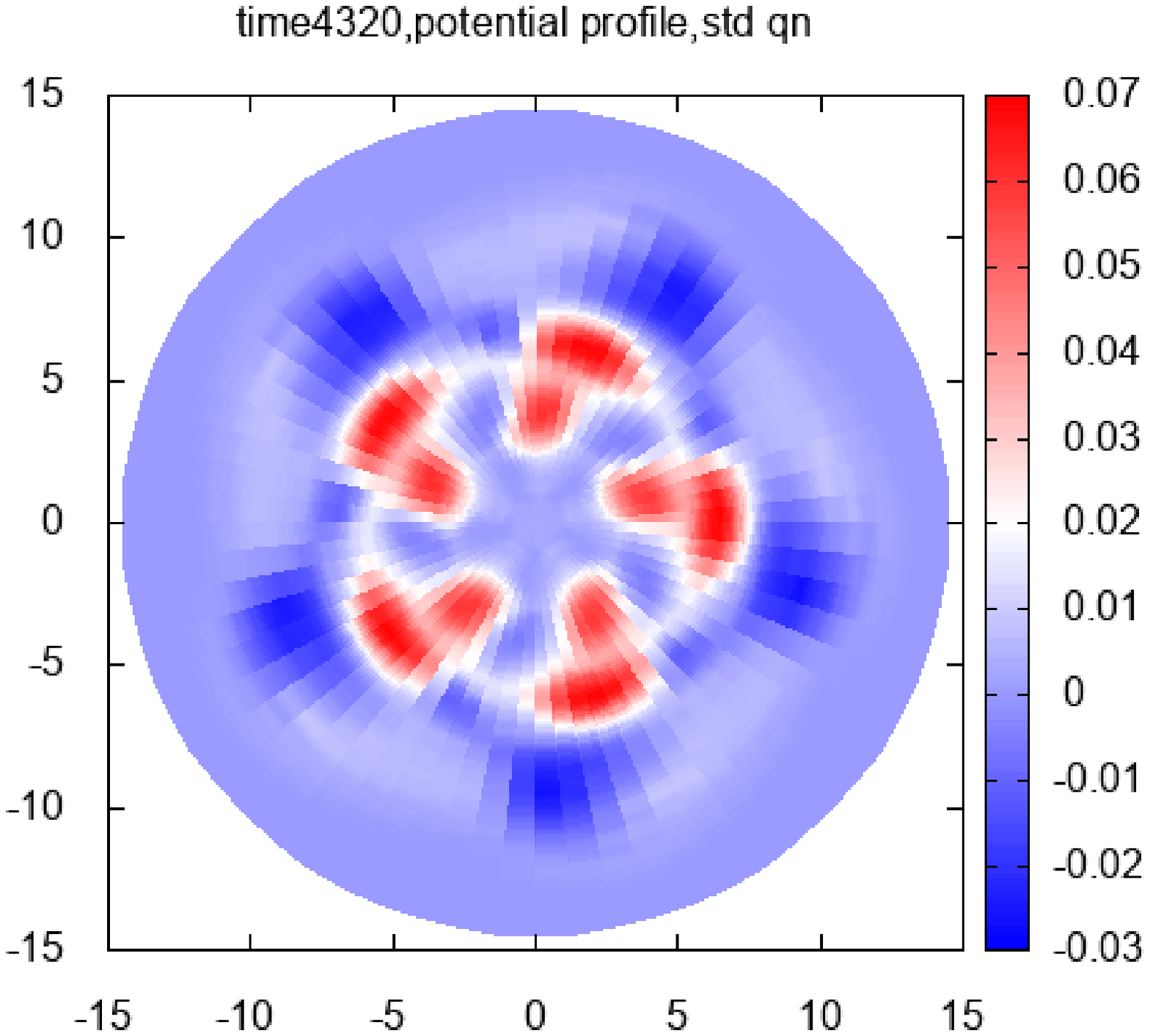}
\caption{\label{potentialevolution} The comparison of the potential profile on the polar cross section computed from the respective models at $t=24,4320$.  }
\end{figure}

\begin{figure}[htbp]
\centering
\subfloat[][]{\label{} \includegraphics[width=0.4\textwidth, trim=0.5 0.5 0 0.9]{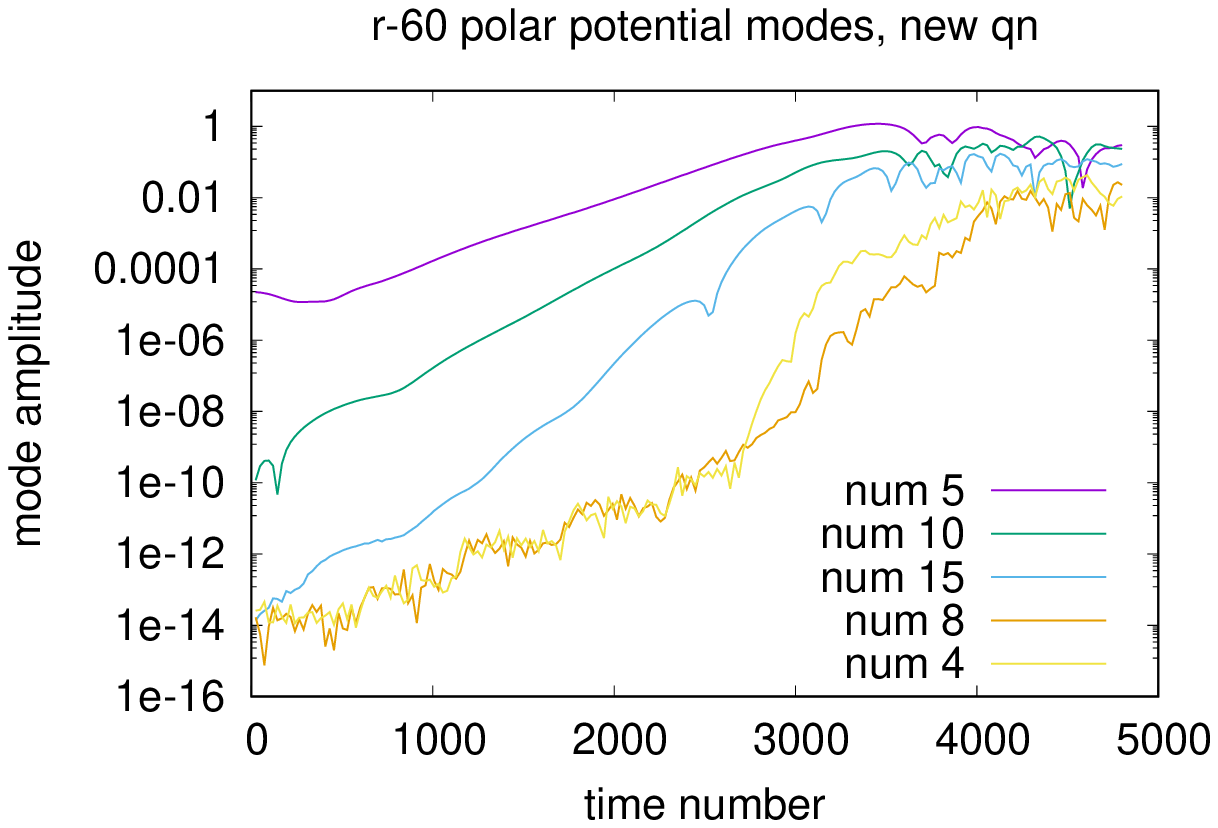}}
\subfloat[][]{\label{} \includegraphics[width=0.4\textwidth, trim=0.5 0.5 0 0.9]{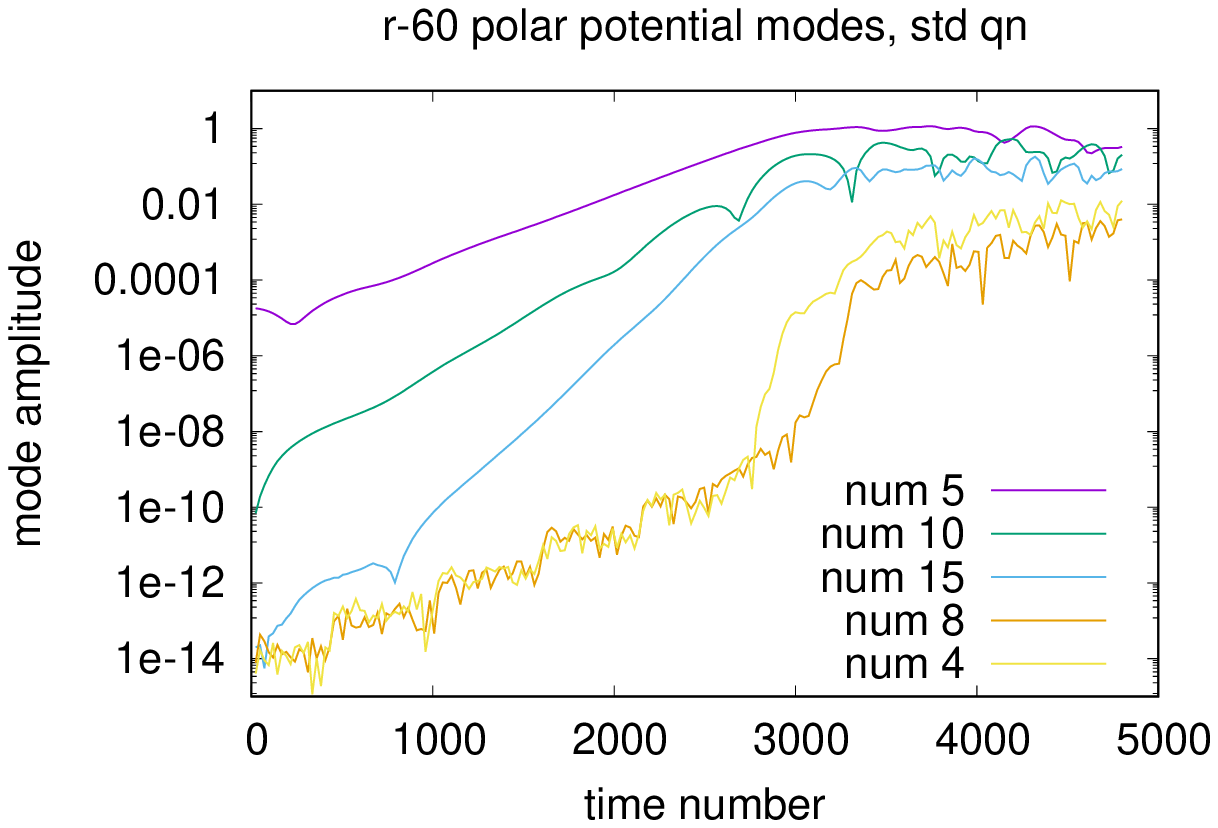}}
\caption{\label{polarmode} The evolution of the polar Fourier modes $0,4,5,8,10,15$ of the perturbative potential at radial node $60$ computed from the two models. }
\end{figure}

\begin{figure}[htbp]
\centering
\centerline{\includegraphics[width=0.5\textwidth, angle=-0, trim=0.5 0.5 0.0 5.0]{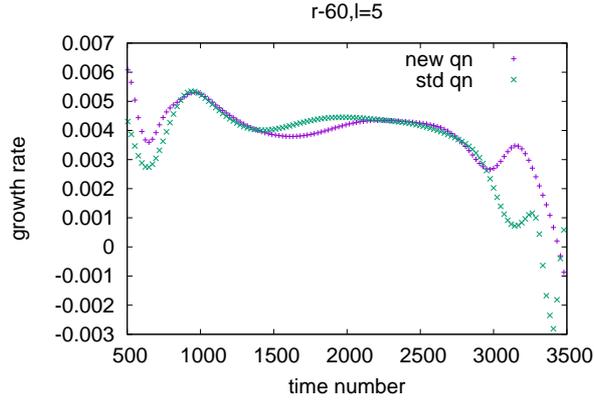}}
\caption{\label{growthrate}  The growth rate of main polar mode $l=5$ at the radial node $60$ computed from the two models. The saturation time of new model is later than that of the standard one.}
\end{figure}


\begin{figure}[htbp]
\centering
\subfloat[][]{\label{} \includegraphics[width=0.4\textwidth, trim=0.5 0.5 0 0.9]{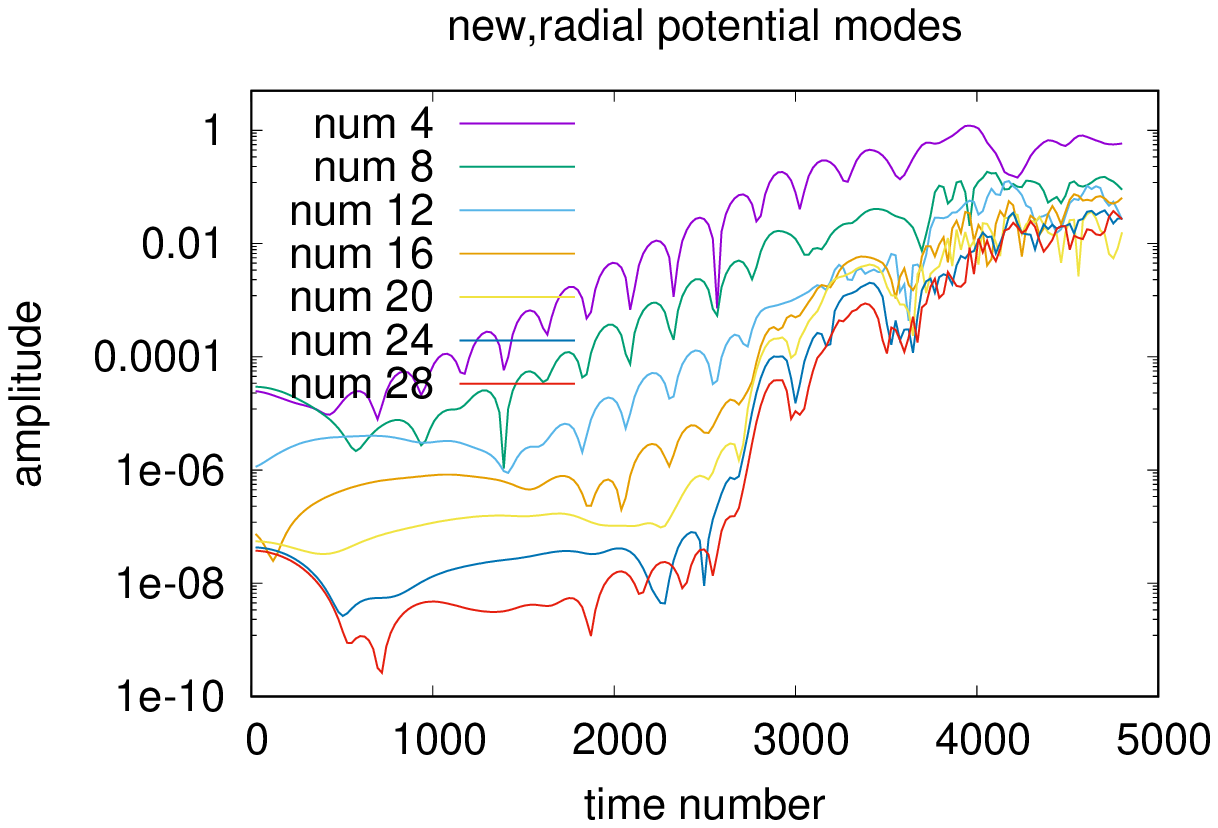}}
\subfloat[][]{\label{} \includegraphics[width=0.4\textwidth, trim=0.5 0.5 0 0.9]{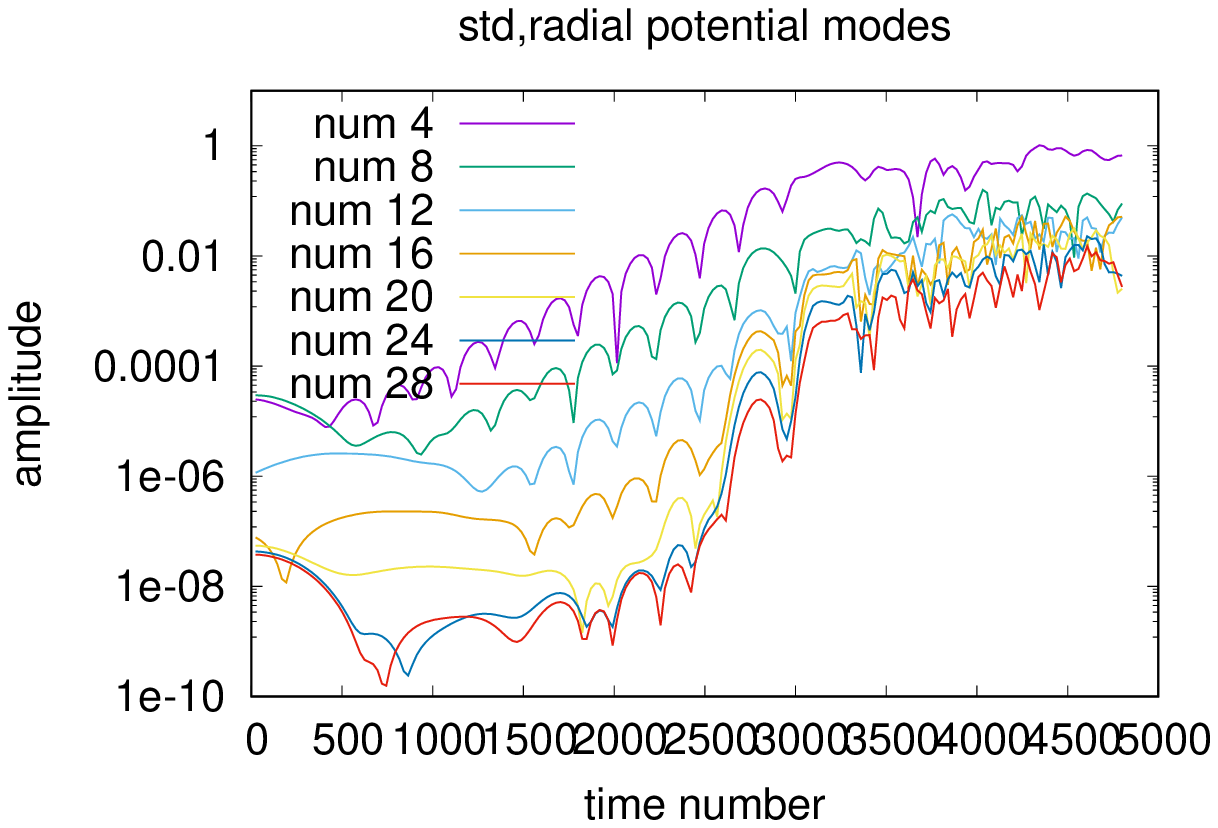}}
\caption{\label{radialpotential} The evolution of the radial Fourier spectrum of the perturbative potential computed by the two models.}
\end{figure}

\begin{figure}[htbp]
\centering
\centerline{\includegraphics[width=0.5\textwidth, angle=-0, trim=0.5 0.5 0.0 5.0]{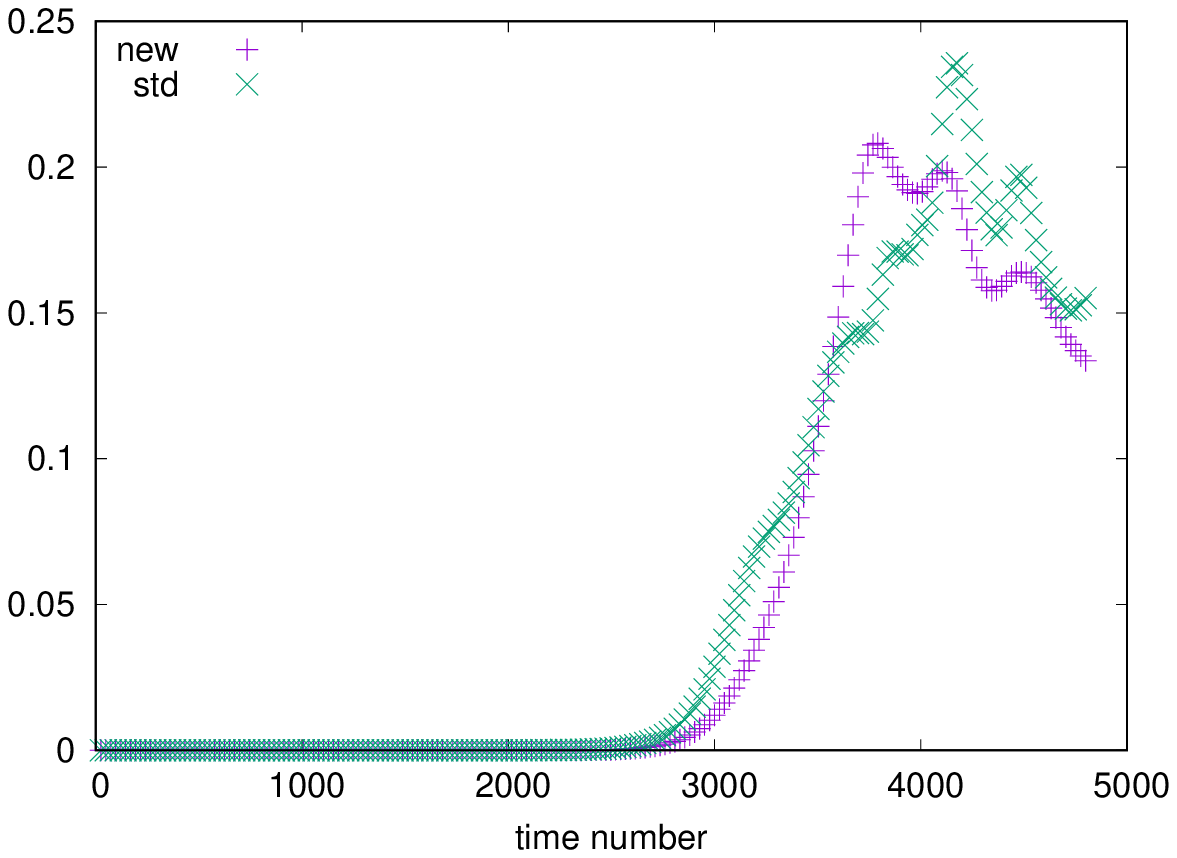}}
\caption{\label{therm_diag}   The time evolution of $\int_{r_{\min}}^{r_{\max}}\int_0^{2\pi}|\phi(r,\Theta,0)|^2rdrd\Theta$ computed by the two models.}
\end{figure}

\section{Acknowledgements}

The author thanks Prof. Michel Mehrenberger for the discussion of this work, and Prof. Phillippe Helluy for his funding supporting his work, and Dr. Sever Hirstoaga and Dr. Matthieu Boileau for the implementation of ATLAS HPC of IRMA.

\section{Summary and Discussion}

Through the order analysis, this paper pointed out that the full-orbit density derived by SGM is not truly accurate at the order $O(\varepsilon^{\sigma-1})$. By implementing a hybrid coordinate transform scheme, specifically, to transform the distribution on the gyrocenter coordinate to the one on the particle coordinate, we use the coordinate frame $(\mathbf{x},\mu,u_1,\theta_1)$ to replace $(\mathbf{x},\mu_1,u_1,\theta_1)$. The new full-orbit density derived by this method is truly accurate at the order $O(\varepsilon^{\sigma-1})$. The numerical simulations show that in the constant cylindrical magnetic field configuration, the two modes have the similar performance under the chosen plasma equilibrium profile.

\begin{appendices}

\section{ The coordinate transform derived  by Lie transform perturbative method, the equations of motion and SGM}\label{coordinates}


\subsection{The generators}\label{generator}

It's well-known that by the Euler-Lagrangian equations, the equations of motion can be derived by implementing the variational principle over the fundamental one-form. Gyrokinetic theory applies the Lie transform perturbative method to the fundamental one-form presented by Eq.(\ref{a11}) to obtain a new one independent of the gyroangle, through which the motion equations of other coordinates in a new version are independent of the gyroangle and the magnetic moment becomes a constant. Alternatively, it's a process to reduce the one dimension.  
The Lie transform perturbative method is introduced in Appendix. \ref{lie}.  The classical dimension-reduction process is divided into two steps\cite{Brizard1990}. The first step is to reduce the gyroangle from the non-perturbative one-form $\gamma_0+\varepsilon \gamma_1$ to get a non-perturbative one-form on guiding-center cordinates. Then, the perturbative potential is introduced into this new one form and the second-time Lie transform perturbative method is implemented to get a new one-form on gyrocenter coordinate and independent of the gyroangle. 

The generators of two consecutive transform are denoted by $\mathbf{g}_1$ and $\mathbf{g}_2$, where $\mathbf{g}_i\equiv(\mathbf{g}_i^\mathbf{X},g_i^\mu,g_i^U,g^{\theta}_i)$ for $i=1,2$ with $\mathbf{g}^\mathbf{X}_i\equiv (g_i^{1},g_i^2,g_i^3)$ being the spatial components. The subscript $i\in\{1,2\}$ are indexes for the guiding-center transform and gyrocenter transform, respectively.
 According to the classical method\cite{Brizard1990}, the first transform is only carried out to the second order of the exponential transform, while the second one is carried out  to the first order.  Specifically, the following equation
\begin{equation}\label{g4}
\bar{\Gamma} \left( \bar{\bf{Z}} \right)  =    [- {{L_{{{\bf{g}}_1}}}} +(L_{{{\bf{g}}_1}})^2](\gamma_0+\varepsilon \gamma_1) \left( \bar{\bf{Z}} \right),
\end{equation}
is to derive the new non-perturbative guiding-center fundamental one-form.
And to be consistent with transformations of one-form in Eqs.(\ref{g4}), the coordinate transforms is chosen as
\begin{equation}\label{c14}
\mathbf{z} =\bar{\mathbf{Z}}-\mathbf{g}_1+(\mathbf{g}_1\cdot \partial_{\bar{\mathbf{Z}}})^2\bar{\mathbf{Z}}.
\end{equation}
The second transform of the one-form is
\begin{equation}\label{c15}
\Gamma({\bf{Z}})  =   - {L_{{\bf{g}_2}}}[\bar{\Gamma}({\bf{Z}})+\varepsilon^\sigma \gamma_{\sigma}({\mathbf{Z}}-\mathbf{g}_1+(\mathbf{g}_1\cdot \partial_{{\mathbf{Z}}})^2{\mathbf{Z}})],
\end{equation}
and the associated coordinate transform is chosen as
\begin{equation}\label{c16}
\bar{\mathbf{Z}} = {\mathbf{Z}}-\mathbf{g}_2. 
\end{equation}

It's well-known \cite{Brizard1990, Brizard2007} that the generator for the guiding center is
$${\bf{g}}_1^{\bf{X}} =  - \varepsilon {\bm{\rho} _0}(\bar{\mathbf{Z}}). $$
with
\begin{equation}\label{g5}
{\bm{\rho} _0}(\bar{\mathbf{Z}})\equiv  \sqrt {\frac{{2\bar{\mu} }}{{B\left( {{\bar{\bf{X}}}} \right)}}} \left( { - {{\bf{e}}_1}\cos \bar{\theta}  + {{\bf{e}}_2}\sin \bar{\theta} } \right).
\end{equation}
Through Eq.(\ref{g4}), $\bar \Gamma \left( {\bf{Z}} \right)$ is 
\begin{equation}\label{g6}
\bar \Gamma \left( \bar{\bf{Z}} \right) = {\bf{A}}\left( {{\bf{\bar X}}} \right)\cdot d{\bf{\bar X}} + \varepsilon \bar U{\bf{b}}\cdot d{\bf{\bar X}}   +\varepsilon^2 \bar{\mu}d\theta - {\varepsilon \left( {\frac{{\bar U_{}^2}}{2} + \bar \mu B\left( {{\bf{\bar X}}} \right)} \right)  } dt +\mathscr{O}(\varepsilon^3).
\end{equation}
Here, $\mathscr{O}(\varepsilon^3)$ denotes that the coefficients of one-form $dX^j,d\mu,dU,d\theta,dt$ contained by the uncertain terms are of the order $O(\varepsilon^3)$ and this usage of ``$\mathscr{O}$'' to denote the order of the coefficients of the uncertain terms of the fundamental one-form will also be implemented in the following context. 

The exact order of $\bar \Gamma \left( \bar{\bf{Z}} \right)$ is $O(\varepsilon^2)$, while $O(\varepsilon^3)$ as the uncertain term will be ignored. 
 Now, substituting $\bar \Gamma \left( \bar{\bf{Z}} \right) $ into Eq.(\ref{c15}), ${\Gamma}(\mathbf{Z})$ can be separated into two parts. 
The first part is
\begin{equation}\label{g8}
{\Gamma _0}(\mathbf{Z})  = {\bf{A}}\left( {\bf{X}} \right)\cdot d{\bf{X}} + \varepsilon U{\bf{b}}\cdot d{\bf{X}} + {\varepsilon ^2}\mu d\theta  -  \left( {\varepsilon \mu B\left( {\bf{X}} \right) + \varepsilon \frac{{m{U^2}}}{2}}-\Gamma_{1t} \right)dt,
\end{equation}
while the second one being
\begin{eqnarray}\label{g7}
{\Gamma _1}(\mathbf{Z}) &=& \left( { - \left( {{\bf{B}} + \varepsilon U\nabla  \times {\bf{b}}} \right) \times {\bf{g}}_2^{\bf{X}} - \varepsilon g_2^U{\bf{b}}} \right)\cdot d{\bf{X}}  \nonumber \\
&& + \varepsilon \left( {{\bf{g}}_2^{\bf{X}}\cdot{\bf{b}}} \right)dU - {\varepsilon ^2}{g_2^\mu }d\theta  - {\varepsilon ^2}g_2^\theta d\mu  \nonumber \\
&& - [ \varepsilon \mu {\bf{g}}_2^{\bf{X}}\cdot\nabla B\left( {\bf{X}} \right) - \varepsilon Ug_2^U - \varepsilon g_2^\mu B + \varepsilon^\sigma \phi(\mathbf{X}+\varepsilon {\bm{\rho} _0(\mathbf{Z})})+\Gamma_{1t}] dt \nonumber \\
&& + dS_1+\mathscr{O}(\varepsilon^3)+\mathscr{O}(\varepsilon^{\sigma+1}),
\end{eqnarray}
where, $\Gamma_{1t}$ and $S_1$ will be solved. 
To get Eq.(\ref{g7}), the non-zero components of the Lie derivative on $\Gamma_0$ given by Appendix.\ref{app1} are used.
In Eq.(\ref{g7}), $\mathscr{O}(\varepsilon^3)$ is inherited from Eq.(\ref{g6}) and $\mathscr{O}(\varepsilon^{\sigma+1})$ is produced by approximating $\phi(\mathbf{X}+\varepsilon {\bm{\rho} _0(\mathbf{Z})}+(\mathbf{g}_1\cdot \partial_{{\mathbf{Z}}})^2 {\mathbf{Z}})$ as $\phi(\mathbf{X}+\varepsilon {\bm{\rho} _0(\mathbf{Z})})$ .  To make sure that ${\Gamma _1}(\mathbf{Z}) $ is exactly correct at the order $\varepsilon^\sigma$, alternatively, to make sure that $\varepsilon^\sigma \phi(\mathbf{X}+\varepsilon {\bm{\rho} _0(\mathbf{Z})})$ is the exact-order term, we require
\begin{equation}\label{c17}
\sigma < 3. 
\end{equation}

To remove the $\theta$-dependent terms in Eq.(\ref{g7}), the following identities are required
\begin{equation}\label{g9}
\Gamma_{1k}=0,Z^k\in \{\mathbf{X},U,\mu,\theta\}
\end{equation}
plus a requirement that $\Gamma_{1t}$ is independent of $\theta$. $S_1$ is the gauge function to be solved. Then, all the generators can be derived as
\begin{subequations}\label{g10}
\begin{eqnarray}
{\bf{g}}_2^{\bf{X}}&=&  - \frac{{{\bf{b}} \times \nabla {S_1}}}{{{\bf{b}}\cdot{{\bf{B}}^*}}} - \frac{{{{\bf{B}}^*}}}{\varepsilon }\frac{{\partial {S_1}}}{{\partial U}}, \\
g_2^U &=& \frac{1}{\varepsilon }{\bf{b}}\cdot\nabla {S_1}, \\
g_2^\mu & =& \frac{1}{{{\varepsilon ^2}}}\frac{{\partial {S_1}}}{{\partial \theta }}, \\
{g_2^\theta } & =&  - \frac{1}{{{\varepsilon ^2}}}\frac{{\partial {S_1}}}{{\partial \mu }},
\end{eqnarray}
\end{subequations}
with
$${{\bf{B}}^*} \equiv {\bf{B}} + \varepsilon U\nabla  \times {\bf{b}}.$$
The equation of the gauge function is
\begin{equation}\label{g14}
\frac{{\partial {S_1}}}{{\partial t}} + U{\bf{b}}\cdot\nabla {S_1} + \frac{B(\mathbf{X})}{\varepsilon }\frac{{\partial {S_1}}}{{\partial \theta }} = \varepsilon^\sigma \phi \left( {{\bf{X}} + \varepsilon {\bm{\rho} _0}}(\mathbf{Z}) \right) + {\Gamma _{1t}}.
\end{equation}

For the low frequency perturbation, inequalities $\left| {\frac{{\partial {S_1}}}{{\partial t}}} \right| \ll \left| {\frac{B}{{{\varepsilon }}}\frac{{\partial {S_1}}}{{\partial \theta }}} \right|, \left| {U{\bf{b}}\cdot\nabla {S_1}} \right| \ll \left| {\frac{B}{{{\varepsilon }}}\frac{{\partial {S_1}}}{{\partial \theta }}} \right|$ hold. By ignoring the two terms of higher order on the left of Eq.(\ref{g14}), the rest of Eq.(\ref{g14}) is
\begin{equation}\label{g15}
\frac{{B({\bf{X}})}}{\varepsilon }\frac{{\partial {S_1}}}{{\partial \theta }} =
\varepsilon^\sigma\phi ({\bf{X}} + \varepsilon {\bm{\rho} _0(\mathbf{Z})}) + {\Gamma _{1t}}.
\end{equation}
To remove the secularity of $S_1$ on the integration of $\theta$, $\Gamma_{1t}$ is chosen as
$${\Gamma _{1t}} =-\varepsilon^\sigma \Phi(\mathbf{X},\mu)$$
with the definition
\begin{equation}\label{ex1}
\Phi ({\bf{X}},\mu ) \equiv \left\langle {\phi \left( {{\bf{X}} + \varepsilon{\bm{\rho} _0}}(\mathbf{Z}) \right)} \right\rangle  = \frac{1}{{2\pi }}\int_0^{2\pi } {\phi \left( {\bf{x}} \right)\delta \left( {{\bf{x}} - {\bf{X}} - {\varepsilon\bm{\rho} _0(\mathbf{Z})}} \right)d\theta }
\end{equation}

The reason for removing the secularity from $S_1$ is that those secular terms could contribute unlimited terms to the generators through Eqs. (\ref{g10}). These unlimited terms cause the coordinate transform unacceptable. The solution of Eq.(\ref{g15}) is
\begin{equation}\label{ex2}
{S_1} = \frac{\varepsilon^{\sigma+1} }{{B\left( {\bf{X}} \right)}}\int_{}^\theta  \Psi  \left( {\bf{Z}} \right)d{\theta _1}+ \Pi \left( {{\bf{X}},\mu ,U} \right),
\end{equation}
with
\begin{equation}\label{ex3}
\Psi \left(\mathbf{Z} \right) \equiv \phi \left( {{\bf{X}} + \varepsilon {\bm{\rho} _0(\mathbf{Z})}} \right) - \Phi(\mathbf{X},\mu).
\end{equation}
 $\Pi \left( {{\bf{X}},\mu ,U} \right)$ is a function independent of $\theta$ and we choose it as zero here.

To get the order of the generators, we need the facts that $O(\varepsilon \mathcal{K}_\perp)= O(1)$ and $O(S_1)=O(\varepsilon^{\sigma+1})$. We also make the follow assumption that there doesn't exist large gradients in the $\mu$, $\theta$ and $U$ dimensions, so that $O(\left\|\partial_\theta \right \|)=O(\left\|\partial_\mu \right \|)=O(\left\|\partial_U \right \|)=O(1)$ holds. Then, the order of the four generators can be estimated as follows
\begin{subequations}\label{qq2}
\begin{eqnarray}
O(\left\| {{\bf{g}}_2^{\bf{X}}} \right\|) &=& O({\varepsilon ^\sigma })  \\
O(\left\| {g_2^U} \right\|) &=& O({\varepsilon ^\sigma })\\
O(\left\| {g_2^\mu } \right\|) &=& O({\varepsilon ^{\sigma  - 1}})\\
O(\left\| {g_2^\theta } \right\|) &=& O({\varepsilon ^{\sigma  - 1}})
\end{eqnarray}
\end{subequations}
Since the order of $g_2^\theta$ and $g_2^\mu$ is the same and lower than that of ${{\bf{g}}_2^{\bf{X}}}$ and ${g_2^U}$, only $g_2^\theta$ and $g_2^\mu$ are kept to participate in the coordinate transform between the full-orbit coordinate and the gyrocenter coordinate. By  making the following replacements
$$g_2^{\mu } \to {\varepsilon ^{\sigma  - 1}}{g_2^\mu }, \;\; g_2^{\theta } \to {\varepsilon ^{\sigma  - 1}}{g_2^\theta },$$
we have the order 
$$O(g_2^{\mu })=O(1), \;\; O(g_2^{\theta })=O(1). $$  
And  the solution of ${g}_2^{\mu}$ and $g_2^{\theta}$ with the arguments being $ ({\bf{X}},{\mu},{\theta} )$ are listed here:
\begin{subequations}\label{ee7}
\begin{eqnarray}
g_2^\mu \left( {{\bf{ X}},{\mu}},\theta \right) &=& \frac{\Psi \left( {{\bf{ X}},{\mu}},\theta \right)}{{ B\left( { {\bf{X}} } \right)}}, \\
g_2^\theta \left({{\bf{X}},{\mu}},\theta \right) &=&   \frac{{\partial _{{\mu}} }\int_0^{{\theta}} {\Psi \left({{\bf{ X}},{\mu}}, \theta \right)d{\theta}}}{{B\left( { {\bf{X}} } \right)}}
\end{eqnarray}
\end{subequations}
with
\begin{equation}\label{ee8}
\Psi \left({\mathbf{X}},{\mu}\right) \equiv \phi \left( {{\bf{X}} +  \varepsilon {\bm{\rho} _0}} \right) - \Phi({\mathbf{X}},{\mu}).
\end{equation}
\begin{equation}\label{ee9}
\Phi ({\bf{ X}}, \mu ) \equiv \left\langle {\phi \left( {{\bf{ X}} + {\varepsilon\bm{\rho} _0}} \right)} \right\rangle  = \frac{1}{{2\pi }}\int_0^{2\pi } {\phi \left( {{\bf{ X}} + {\varepsilon\bm{\rho} _0}} \right)d\theta } .
\end{equation}

\subsection{The coordinates transform}

According to Eq.(\ref{c16}),  the transform from the guiding-center coordinate to the gyrocenter coordinate $\psi _{gy}:{\bar {\bf{Z}}} \to {\bf{Z}}$ is approximated with the exact order $O(\varepsilon^{\sigma-1})$ 
\begin{subequations} 
\begin{eqnarray}
{\bf{\bar X}} &=&  {\bf{X}} ,  \nonumber \\
 \bar \mu &=& \mu  - \varepsilon^{\sigma-1}g_2^\mu \left( {{\bf{ X}},{\mu}},\theta \right), \nonumber \\
\bar U &=& U, \nonumber \\
 \bar \theta&=&\theta    - \varepsilon^{\sigma-1}g_2^\theta \left({{\bf{ X}},{\mu}},\theta \right), \nonumber 
\end{eqnarray}
\end{subequations}
which can be rearranged with the exact order being $O(\varepsilon^{\sigma-1})$  as
\begin{subequations}\label{ee3}
\begin{eqnarray}
{\bf{X}} &=& {\bf{\bar X}} , \label{ee3.1} \\
\mu  &=& \bar \mu  +\varepsilon^{\sigma-1}g_2^\mu \left( {{\bf{\bar X}},\bar{\mu}},\bar{\theta} \right)+O(\varepsilon^{2\sigma-2}), \label{ee3.2} \\
U &=& \bar U, \label{ee3.3} \\
\theta  &=& \bar \theta  +\varepsilon^{\sigma-1}g_2^\theta \left({{\bf{\bar X}},\bar{\mu}},\bar{\theta} \right)+O(\varepsilon^{2\sigma-2}).\label{ee3.4} 
\end{eqnarray}
\end{subequations}

While based on Eq.(\ref{c14}), the coordinate transform from the full orbit to the guiding-center coordinate $\psi_{gc}: \mathbf{z}\to \mathbf{\bar{Z}}$ is approximated  exactly right at  $O(\varepsilon)$
\begin{subequations}
\begin{eqnarray}
 {\bf{x}} &=& {\bf{\bar X}} + \varepsilon{\bm{\rho} _0}(\bar{\mathbf{X}},\bar{\mu}_1,\bar{\theta}_1)+O(\varepsilon^2),  \nonumber \\
{\mu _1} &=& \bar \mu  , \nonumber \\
 {u_1}&=& \bar U , \nonumber \\
{\theta _1}&=& \bar \theta ,  \nonumber
\end{eqnarray}
\end{subequations}
which can also be rearranged  exactly right at  $O(\varepsilon)$ as
\begin{subequations}\label{ee2}
\begin{eqnarray}
{\bf{\bar X}}  &=& {\bf{x}}  - \varepsilon{\bm{\rho} _0}(\mathbf{x},\mu_1,\theta_1) +O(\varepsilon^2),  \label{ee2.1}\\
\bar \mu  &=& {\mu _1} , \label{ee2.2}\\
\bar U &=& {u_1}, \label{ee2.3}\\
\bar \theta &=& {\theta _1}. \label{ee2.4}
\end{eqnarray}
\end{subequations}

\subsection{The equations of motion}

The new fundmental one-form with exact order $O(\varepsilon^2)$ and uncertain order $O(\varepsilon^3)$ is
\begin{equation}\label{g17}
\begin{split}
\Gamma  &= \left( {{\bf{A}}\left( {\bf{X}} \right) + \varepsilon U{\bf{b}}} \right)\cdot d{\bf{X}} + {\varepsilon ^2}\mu d\theta  \\
& - \left( {\varepsilon \left( {\mu B\left( {\bf{X}} \right)}  + \frac{{{U^2}}}{2} \right)+\varepsilon^\sigma \Phi(\mathbf{X},\mu)} \right)dt +\mathscr{O}(\varepsilon^3),
\end{split}
\end{equation}
which is exactly right at $\mathscr{O}(\varepsilon^\sigma)$.
The Lagrangian derived from Eq.(\ref{g17}) is
\begin{equation}\label{gg18}
\begin{split}
\mathcal{L} &= \left( {{\bf{A}}\left( {\bf{X}} \right) + \varepsilon U{\bf{b}}} \right)\cdot \dot{\bf{X}} + {\varepsilon ^2}\mu \dot{\theta}   \\
& - \left( {\varepsilon \left( {\mu B\left( {\bf{X}} \right)}  + \frac{{{U^2}}}{2} \right)+\varepsilon^\sigma \Phi(\mathbf{X},\mu)} \right) +{O}(\varepsilon^3)
\end{split}
\end{equation}
Applying the variational principle to this Lagrangian 1-form given by Eq.(\ref{g17}), the orbit equations are derived exactly right at $O(\varepsilon^{\sigma-1})$
\begin{subequations}\label{g18}
\begin{eqnarray}
\mathop {\bf{X}}\limits^.  &=& \frac{{U{{\bf{B}}^*} + {\bf{b}} \times \nabla \left( {\varepsilon \mu B\left( {\bf{X}} \right) + \varepsilon^\sigma \Phi ({\bf{X}},\mu )} \right)}}{{{\bf{b}}\cdot{{\bf{B}}^*}}} +O(\varepsilon^3), \label{g18a} \\
\dot U &=& \frac{{ - {{\bf{B}}^*}\cdot\nabla \left( {\varepsilon \mu B\left( {\bf{X}} \right) + \varepsilon^\sigma \Phi ({\bf{X}},\mu )} \right)}}{{\varepsilon {\bf{b}}\cdot{{\bf{B}}^*}}}+O(\varepsilon^2) \label{g18b}
\end{eqnarray}
\end{subequations}
where $\mathbf{B}^*(\mathbf{X})\equiv \mathbf{B}(\mathbf{X})+\varepsilon U\nabla \times \mathbf{b}$.
Eq.(\ref{e57}) is used to obtain the exact order in Eqs.(\ref{g18a},\ref{g18b}).

\subsection{The transform of the distribution}\label{error.2}

For the Vlasov gyrokinetic simulation, we need to transform the distribution function from the gyrocenter coordinate to the full-orbit coordinate\cite{Garbet2010}. With the coordinate transform composited by Eqs.(\ref{ee2}) and (\ref{ee3}),  given a distribution function on the gyrocenter coordinate $F_s\left( {{\bf{X}},\mu ,U},t \right)$, the distribution function on the full orbit can be derived by following the transform chain
\begin{equation}\label{qq1}
F_s\left  ( {{\bf{X}},\mu ,U}\right)\mathop{\longrightarrow} \limits^{\psi _{gy}} \bar{F}_s\left(\bar{\mathbf{Z}} \right)\mathop{\longrightarrow} \limits^{\psi _{gc}} f_s\left(\mathbf{z} \right).
\end{equation}
First, the  total distribution function is separated into the sum of an equilibrium one plus a perturbative one as
\begin{equation}\label{qq3}
F_s\left( \mathbf{X},\mu,U \right) = {F_{s0}}\left(  \mathbf{X},\mu,U \right) + {F_{s1}}\left(  \mathbf{X},\mu,U \right).
\end{equation}

\begin{prop}\label{propf2.1}
By dividing $F_s$ as Eq.(\ref{qq3}) does, the exact order of $F_{s1}$ equals $O(\varepsilon^{\sigma-1})$ with respect to the low frequency perturbations, specifically, $O(||\partial_t||_1)=O(1)$, where the subscript $1$ denotes the operation on the perturbative quantity. .
\end{prop}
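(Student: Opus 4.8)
The plan is to obtain the order of $F_{s1}$ by linearizing the gyrocenter Vlasov equation about the equilibrium $F_{s0}$ and balancing the perturbative drive against the unperturbed transport operator. First I would write the gyrocenter Vlasov equation $(\partial_t + \dot{\mathbf X}\cdot\nabla + \dot U\,\partial_U + \dot\mu\,\partial_\mu)F_s = 0$ from the orbit equations (\ref{g18a}), (\ref{g18b}) together with $\dot\mu=0$, and split every orbit velocity into an equilibrium piece carrying only $\mu B(\mathbf X)$ and a perturbative piece carrying $\varepsilon^\sigma\Phi(\mathbf X,\mu)$, namely $\dot{\mathbf X}=\dot{\mathbf X}_0+\dot{\mathbf X}_1$ and $\dot U=\dot U_0+\dot U_1$. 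Substituting the decomposition (\ref{qq3}), using that $F_{s0}$ is built from the unperturbed invariants so that $\partial_t F_{s0}=0$ and that the parallel streaming of the $\mu B$ factor cancels the mirror force (leaving the equilibrium transport $\dot{\mathbf X}_0\cdot\nabla F_{s0}+\dot U_0\,\partial_U F_{s0}$ at most $O(\varepsilon)$), and discarding the term $(\dot{\mathbf X}_1\cdot\nabla + \dot U_1\,\partial_U)F_{s1}$ which is quadratic in the perturbation, I would arrive at the linearized balance
\[
(\partial_t + \dot{\mathbf X}_0\cdot\nabla + \dot U_0\,\partial_U)F_{s1} = -(\dot{\mathbf X}_1\cdot\nabla + \dot U_1\,\partial_U)F_{s0}.
\]

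Next I would order the right-hand side. The perturbed drift $\dot{\mathbf X}_1=\mathbf b\times\nabla(\varepsilon^\sigma\Phi)/(\mathbf b\cdot\mathbf B^*)$ is perpendicular and, because $O(\varepsilon\mathcal K_\perp)=O(1)$ by (\ref{e57}), has magnitude $O(\varepsilon^{\sigma-1})$; dotted into the perpendicular equilibrium gradient $\nabla_\perp F_{s0}=O(F_{s0})$ from (\ref{e58}) it yields $O(\varepsilon^{\sigma-1})$. The perturbed acceleration $\dot U_1\sim\mathbf B^*\cdot\nabla(\varepsilon^\sigma\Phi)/\varepsilon$ is dominated by the parallel gradient $O(\varepsilon\mathcal K_\parallel)=O(\varepsilon)$ from (\ref{e57}), so $\dot U_1=O(\varepsilon^{\sigma-1})$ and hence $\dot U_1\,\partial_U F_{s0}=O(\varepsilon^{\sigma-1})$. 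Thus the drive is exactly $O(\varepsilon^{\sigma-1})$.

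For the left-hand side I would show the operator $\partial_t+\dot{\mathbf X}_0\cdot\nabla+\dot U_0\,\partial_U$ is $O(1)$ when acting on $F_{s1}$. The key observation is that $F_{s1}$ inherits the anisotropic scales of $\Phi$: $\mathcal K_\perp=O(1/\varepsilon)$ but $\mathcal K_\parallel=O(1)$. Then the parallel streaming $U\mathbf b\cdot\nabla F_{s1}=O(F_{s1})$, the grad-$B$/curvature drift $\mathbf v_d\cdot\nabla_\perp F_{s1}=O(\varepsilon)\cdot O(F_{s1}/\varepsilon)=O(F_{s1})$, and the mirror term $\dot U_0\,\partial_U F_{s1}=O(F_{s1})$, while the time derivative is $O(F_{s1})$ precisely by the low-frequency hypothesis $O(\|\partial_t\|_1)=O(1)$ stated in the proposition. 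Balancing the two sides gives $O(F_{s1})=O(\varepsilon^{\sigma-1})$, and since the leading drive (the $\mathbf b\times\nabla\Phi$ advection of the density gradient) does not vanish generically, this is the exact order rather than a mere bound.

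The step I expect to be the main obstacle is controlling the left-hand transport operator: one must verify that the perpendicular drift, although formally $O(\varepsilon)$, conspires with the steep perpendicular scale of $F_{s1}$ to produce an $O(1)$ contribution, while the parallel streaming sees only the shallow $O(1)$ parallel scale, so that neither term spoils the $O(1)$ estimate. Equally delicate is justifying that $\dot{\mathbf X}_0\cdot\nabla F_{s0}+\dot U_0\,\partial_U F_{s0}$ contributes nothing below $O(\varepsilon^{\sigma-1})$ — which rests on the field-aligned structure of the equilibrium and the parallel-streaming/mirror-force cancellation — and relying on the explicit low-frequency assumption, without which a slow secular or resonant response could enhance $F_{s1}$ beyond the claimed order.
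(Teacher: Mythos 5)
Your proposal is correct and follows essentially the same route as the paper's own proof: linearize the gyrocenter Vlasov equation into a perturbative drive $(\dot{\mathbf X}_P\cdot\nabla+\dot U_P\,\partial_U)F_{s0}$ of order $O(\varepsilon^{\sigma-1})$ (via $O(\varepsilon\mathcal K_\perp)=O(1)$, $O(\varepsilon\mathcal K_\parallel)=O(\varepsilon)$ and Eq.~(\ref{e58})) balanced against an $O(1)$ equilibrium transport operator acting on $F_{s1}$ under the low-frequency hypothesis $O(\|\partial_t\|_1)=O(1)$. Your only additions — spelling out that the $O(\varepsilon)$ perpendicular drift times the steep $O(1/\varepsilon)$ perpendicular scale of $F_{s1}$ stays $O(1)$, and that the purely equilibrium transport of $F_{s0}$ drops out — are points the paper leaves implicit, not a different method.
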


\begin{proof}
The  Vlasov equation $\frac{dF_s}{dt}=(\partial_t+\dot{\mathbf{X}}\cdot\nabla+\dot{U}\partial_U)F_s=0$ can be linearized as the sum of two parts depending on $F_{s0}$ and $F_{s1}$, respectively
\begin{equation}\label{f2}
\big (\frac{dF_s}{dt}\big)_{P}+\big(\frac{dF_s}{dt}\big)_{E} =0,
\end{equation}
\begin{subequations}
\begin{eqnarray}
&&\big(\frac{dF_s}{dt}\big)_{P}\equiv (\dot{\mathbf{X}}_{P}\cdot \nabla +\dot{U}_P \partial_U) F_{s0}, \nonumber  \\
&&\big(\frac{dF_s}{dt}\big)_{E}\equiv (\partial_t +\dot{\mathbf{X}}_{E}\cdot \nabla +\dot{U}_E \partial_U) F_{s1}, \nonumber
\end{eqnarray}
\end{subequations}
with 
\begin{subequations}
\begin{eqnarray}
\dot{\mathbf{X}}_{E} &\equiv& \frac{{U{{\bf{B}}^*} + {\bf{b}} \times \nabla \left( {\varepsilon \mu B\left( {\bf{X}} \right)} \right)}}{{{\bf{b}}\cdot{{\bf{B}}^*}}}, \nonumber \\
\dot{\mathbf{X}}_{P}&\equiv & \frac{ {\bf{b}} \times \nabla \left( \varepsilon^\sigma \Phi ({\bf{X}},\mu ) \right) }
{{\bf{b}}\cdot{{\bf{B}}^*}}, \nonumber \\
\dot{U}_E &\equiv &\frac{- {{\bf{B}}^*}\cdot \nabla\mu B\left( {\bf{X}} \right) }{{ {\bf{b}}\cdot{{\bf{B}}^*}}}, \nonumber \\
\dot{U}_P &\equiv &\frac{- {{\bf{B}}^*}\cdot \nabla\varepsilon^\sigma \Phi ({\bf{X}},\mu ) }{{ \varepsilon {\bf{b}}\cdot{{\bf{B}}^*}}}, \nonumber 
\end{eqnarray} 
\end{subequations}
where the equations of motion are derived based on Eq.(\ref{g18}).

First, due to $O(||\dot{\mathbf{X}}_{P}||)=O(||\dot{U}_P)||)=O(\varepsilon^{\sigma-1})$ and $O(||\frac{\nabla_{\mathbf{X}/U} F_{s0}}{F_{s0}}||)=O(1)$, it's achieved that
\begin{equation}\label{f3}
O(||(\dot{\mathbf{X}}_{P}\cdot \nabla +\dot{U}_P \partial_U)||_0)=O(\varepsilon^{\sigma-1}),
\end{equation}
where the subscript ``0'' denotes the operation on $F_{s0}$. 

Second, $O(||\dot{U}_E\cdot \partial_U||_1 )=O(1)$ and $O(||\dot{\mathbf{X}}_{E} \cdot \nabla||_1)=O(1)$ hold and the subscript ``1'' denotes the operation on the perturbative quantities.  For the latter one, $O(||\frac{\partial_{U} F_{s1}}{F_{s1}}||)=O(1)$ and $O(||\frac{\nabla_{\parallel} F_{s1}}{F_{s1}}||)=O(1)$ are used. Since $O(||\partial_t||_1)=O(1)$ is assumed, 
\begin{equation}\label{f4}
O(||\partial_t +\dot{\mathbf{X}}_{E}\cdot \nabla +\dot{U}_E \partial_U||_1)=O(1)
\end{equation}
holds with respect to the low-frequency perturbations.
Eventually, by combining Eqs.(\ref{f2}),(\ref{f3}) and (\ref{f4}), $O(|F_{s1}|)$ is derived as
$$O(|F_{s1}|)=O(\varepsilon^{\sigma-1}).$$
\end{proof}

Then, the approximation of the distribution on the guiding-center coordinate  can be derived based on the coordinate transform given by Eq.(\ref{ee3})
\begin{eqnarray}\label{qq4}
{{\bar F}_s}\left( {\overline {\bf{Z}} } \right) &=& {F_s}\left( {{\bf{\bar X}}, \bar \mu  + {\varepsilon ^{\sigma  - 1}}g_2^\mu \left( {{\bf{\bar X}},\bar \mu ,\bar \theta } \right)+O(\varepsilon^{2\sigma-2}),\bar U} \right)  \nonumber \\
 &= & {\rm{ }}{F_{s0}}\left( {{\bf{\bar X}},\bar \mu ,\bar U} \right) + \frac{{{\varepsilon ^{\sigma  - 1}}\Psi (\overline {\bf{Z}} )}}{{B(\overline {\bf{Z}} )}}{\partial _{\bar \mu }}{F_{s0}}\left( {{\bf{\bar X}},\bar \mu ,\bar U} \right) + {F_{s1}}\left( {\overline {\bf{Z}} } \right)+O(\varepsilon^{2\sigma-2}),
\end{eqnarray}
whose exact order is $O(\varepsilon^{\sigma-1})$.
The exact full-orbit distribution can be derived by substituting the coordinate transform Eq.(\ref{ee2}) into ${{\bar F}_s}\left( {\overline {\bf{Z}} } \right) $.
According to the transform Eq.(\ref{ee2}), the exact full-orbit distribution is
\begin{equation}\label{e36}
{f_s}\left( {\bf{z}} \right) = {F_s}\left( {{\bf{x}} - \varepsilon{\bm{\rho} _0}\left( {{\bf{x}},{\mu _1},{\theta _1}} \right)+\mathcal{O}_1,{\mu _1} + {\varepsilon ^{\sigma  - 1}}g_2^\mu \left( {{\bf{x}} - \varepsilon {\bm{\rho} _0}({\bf{x}},{\mu _1},{\theta _1}),{\mu _1},{\theta _1}} \right)+\mathcal{O}_2 ,u_1 }\right),
\end{equation}
where $$\mathcal{O}_1=O(\varepsilon^2), \mathcal{O}_2=O(\varepsilon^{2\sigma-2}).$$
Based on the approximation of Eq.(\ref{qq4}), the approximation of ${f_s}\left( {\bf{z}} \right)$ with the exact  order being $O(\varepsilon^{\sigma-1})$ is
\begin{equation}{}\label{qq5}
\begin{split}
 {f_s}\left( {\bf{z}} \right) &= {F_{s}}\left( {{\bf{x}} - \varepsilon {\bm{\rho} _0}\left( {{\bf{x}},{\mu _1},{\theta _1}} \right),{\mu _1},{u_1}} \right)  \\
& + \frac{{{\varepsilon ^{\sigma  - 1}}}}{{B({\bf{x}})}}\left[ {\phi \left( {\bf{x}} \right) - \Phi ({\bf{x}} - \varepsilon{\bm{\rho} _0}\left( {{\bf{x}},{\mu _1},{\theta _1}} \right),{\mu _1})} \right]{\partial _{{\mu _1}}}{F_{s0}}\left( {{\bf{x}} ,{\mu _1},{u_1}} \right) 
+O(\varepsilon^{2})+O(\varepsilon^{2\sigma-2}). 
\end{split}
\end{equation}
Due to $2\le \sigma < 3$,  $O(\varepsilon^{2})$ is lower than $O(\varepsilon^{2\sigma-2})$. 

\subsection{SGM}\label{standardmodel}

By recovering the units, $f_s$ in Eq.(\ref{qq5}) with the uncertain terms ignored becomes
\begin{equation}\label{qq21}
{f_s}\left( {{{\bf{z}}}} \right) \approx {\bar F_s}\left( {{\bf{x}} - \bar{{\bm{\rho}} _0}\left( {{{\bf{z}}}} \right),{\mu _1},{u_1}} \right) + \frac{q_s}{{ B({\bf{x}})}}\left[ {\phi \left( {\bf{x}} \right) - \Phi ({\bf{x}} - \bar{{\bm{\rho}} _0}({\bf{z}}),\mu )} \right]{\partial _{{\mu _1}}}{F_{s0}}\left( {{\bf{x}},{\mu _1},{u_1}} \right),
\end{equation}
with the unit-recovered $\bar{\bm{\rho}}_0$ being
\begin{equation}\label{extr}
{\bar{\bm{\rho}} _0} (\mathbf{x},\mu_1,\theta_1)= \frac{1}{q_s}\sqrt {\frac{{2m_s\mu }}{{B\left( {{{\bf{x}}}} \right)}}} \left( { - {{\bf{e}}_1}\cos \theta_1  + {{\bf{e}}_2}\sin \theta_1 } \right).
\end{equation}

We assume the equilibrium distribution $F_{s0}$ can be decomposed as the product between  the parallel part and the perpendicular part
\begin{equation}\label{qq22}
{F_{s0}}({\bf{x}},{\mu _1},{u_1}) = {n_0}({\bf{x}}){F_{s0\parallel }}\left( {{\bf{x}},{u_1}} \right){F_{s0 \bot }}\left( {{\bf{x}},{\mu _1}} \right),
\end{equation}
with probability conservation being satisfied by
\begin{subequations}\label{qq30}
\begin{eqnarray}
\int {{F_{s0\parallel }}d{u_1}}  &=& 1, \\
\int {{F_{s0 \bot }}\frac{{B({\bf{x}})}}{m_s}d{\mu _1}d{\theta _1} } &=& 1,
\end{eqnarray}
\end{subequations}
where under the equilibrium condition, the metric $B(\mathbf{x})/m_s$ is used.

Then, through the integral ${n_s}\left( {{\bf{x}},t} \right) = \int {{f_s}\left( {\bf{z}} \right)\frac{{B({\bf{x}})}}{{{m_s}}}d{\mu _1}d{u_1}d{\theta _1}} $, the density can be assembled as
\begin{equation}\label{qq6}
{n_s}\left( {{\bf{x}}} \right) = {n_{s0}}({\bf{x}}) + \frac{{q_s ({\bf{x}})}}{ B(\bf{x})}\left[ {\Lambda ({\bf{x}}) - \tilde \phi '({\bf{x}})} \right] + {n_{s1}}\left( {{\bf{x}}} \right),
\end{equation}
with
\begin{subequations}\label{qq7}
\begin{eqnarray}
{n_{s1}}\left( {{\bf{x}},t} \right) & = & \int {{{\bar F}_{s1}}\left( {{\bf{x}} - \bar{{\bm{\rho}} _0}\left( \mathbf{z}\right),{\mu _1},{u_1}} \right)\frac{B\left( {\bf{x}} \right)}{m_s}d{\mu _1}d{u_1}d{\theta _1}}, \\
\Lambda ({\bf{x}}) &=& \frac{{B({\bf{x}})}}{{{m_s}}}\int { \phi(\mathbf{x}){\partial _{{\mu _1}}}{F_{s0 }}({\bf{x}},{\mu _1},u_1)d{\mu _1} d u_1 d{\theta _1}}, \\
\tilde \phi' \left( {\bf{x}} \right) &=& \frac{ B(\mathbf{x})}{{{m_s}}}\int {\Phi \left( {{\bf{x}} - \bar{{\bm{\rho}} _0}(\mathbf{z}),{\mu _1}} \right){\partial _{{\mu _1}}}{F_{s0  }}\left( {{\bf{x}},{\mu _1}},u_1 \right)d{\mu _1}}du_1 d\theta_1,
\end{eqnarray}
\end{subequations}
Here, the metric $\eta_2(\mathbf{z})$ equaling $B(\mathbf{x})/m_s$ of the phase space is used.
$\tilde \phi' \left( {\bf{x}} \right)$ is the so-called double-gyroaverage term. The term of $\Phi \left( {{\bf{x}} - \bar{{\bm{\rho}} _0},{\mu _1}} \right)$ can be derived from Eq(\ref{ee9}).

If we consider a plasma only including protons and electrons and the electrons obey the adiabatic distribution,
\begin{equation}\label{qq26}
{n_e}({\bf{x}}) = {n_0}({\bf{x}}) + \frac{{e{n_0}({\bf{x}})}}{{{T_e}}}\phi ({\bf{x}}),
\end{equation}
QNE of this plasma is
\begin{equation}\label{qq27}
-\frac{{e\Lambda ({\bf{x}})}}{{B({\bf{x}})}} + \frac{{e\tilde \phi ' ({\bf{x}})}}{{B({\bf{x}})}} + \frac{e}{{{T_e}}}\phi ({\bf{x}}) - \frac{{{n_{s1}}\left( {\bf{x}} \right)}}{{{n_{s0}}({\bf{x}})}} = 0.
\end{equation}

The following equilibrium distribution which the magnetic moment satisfies is chosen in this paper
\begin{equation}\label{perp5}
{F_{0 \bot }} = \frac{{{m_s}}}{{2\pi {T_i}}}\exp \left( -{\frac{{\mu B}}{{{T_i}}}} \right).
\end{equation}
By substituting $F_{0\perp}$, $\Lambda$ can be derived as
$$\Lambda(\mathbf{x})=-\frac{B\phi(\mathbf{x})}{T_i(\mathbf{x})}, $$
and QNE becomes 
\begin{equation}\label{perp7}
\frac{{e\phi ({\bf{x}})}}{{{T_i}}} + \frac{{e\tilde \phi '({\bf{x}})}}{{B}} + \frac{e\phi ({\bf{x}})}{{{T_e}}} - \frac{{{n_{s1}}}}{{{n_{s0}}}} = 0,
\end{equation}
where 
$$\tilde \phi'\left( {\bf{x}} \right)=-\frac{ B^2}{{{T_i m_s}}}\int {\Phi \left( {{\bf{x}} - \bar{{\bm{\rho}} _0}(\mathbf{z}),{\mu _1}} \right){F_{s0  }}\left( {{\bf{x}},{\mu _1}},u_1 \right)d{\mu _1}}du_1 d\theta_1,$$
and  $n_{s1}(\mathbf{x})$ are given by Eq.(\ref{qq7}) with $F_{0\perp}$ in Eq.(\ref{perp5}).


\section{The Lie transform perturbative method}\label{lie}

This method was given in Ref.\cite{Cary1983} and it begins with the following autonomous differential equations
\begin{equation}\label{a117}
\frac{{\partial Y_f^i }}{{\partial \epsilon }}\left( {{\bf{y}},\epsilon } \right) = {g_1^i }\left( {{{\bf{Y}}_f}\left( {{\bf{y}},\epsilon } \right)} \right),
\end{equation}
\begin{equation}\label{a118}
\frac{{d{\bf{y}}}}{{d\epsilon }} = 0,
\end{equation}
where $\mathbf{Y}=\mathbf{Y}_f(\mathbf{y},\epsilon)$ is the new coordinates, $\mathbf{y}$ is the old coordinates, and $\epsilon$ is an independent variable denoting the small parameter of amplitude of perturbation.
Eqs.(\ref{a117}) and (\ref{a118}) lead to the solution
\begin{equation}\label{c1}
{\bf{y}} = \exp \left(-{{\epsilon g^i_1}{\partial _{{Y_i}}}} \right){\bf{Y}},
\end{equation}
where the Einstein summation is used.
For a differential 1-form written as $\gamma(\bf{z})$, which doesn't depend on $\epsilon$ in the coordinate frame of $\bf{z}$, coordinate transform iy Eq.(\ref{c1}) induces a pullback transform of $\gamma$ as
\begin{equation}\label{c2}
{\Gamma _i }\left(\mathbf{ Y} \right) = {\left[ {\exp \left( { - \varepsilon {L_{1}}} \right)\gamma } \right]_i }\left( \mathbf{Y} \right) + \frac{{\partial S\left( \mathbf{Y} \right)}}{{\partial {Y^i }}}dY^i.
\end{equation}
where $S(\mathbf{Y})$ is a gauge function and the $i$ component of $L_1\gamma$ is defined as ${\left( {{L_1}\gamma } \right)_i } = g_1^j\left( {{\partial _j}{\gamma _i } - {\partial _i }{\gamma _j}} \right)$.

When the differential 1-form explicitly depends on the perturbation and can be written as
$\gamma(\mathbf{y},\varepsilon)  = {\gamma _0}(\mathbf{y}) + \epsilon {\gamma _1}(\mathbf{y}) + \epsilon ^{2}{\gamma _2}(\mathbf{y}) +  \cdots$,
Ref.\cite{Cary1983} generalizes Eq.(\ref{c2}) to be a composition of individual Lie transforms $T =  \cdots {T_3}{T_2}{T_1}$ with
\begin{equation}\label{c4}
{T_n} = \exp \left( { - \epsilon^n {L_{n}}} \right),
\end{equation}
to get the new 1-form
\begin{equation}\label{f1}
\Gamma  = T\gamma  + dS,
\end{equation}
which can be expanded by the order of $\epsilon$
\begin{equation}\label{c5}
{\Gamma _0} = {\gamma _0},
\end{equation}
\begin{equation}\label{c6}
{\Gamma _1} = d{S_1} - {L_1}{\gamma _0} + {\gamma _1},
\end{equation}
\begin{equation}\label{c7}
{\Gamma _2} = d{S_2} - {L_2}{\gamma _0} + {\gamma _2} - {L_1}{\gamma _1} + \frac{1}{2}L_1^2{\gamma _0},
\end{equation}
\begin{equation}\label{a33}
\cdots   \nonumber \\
\end{equation}
These expanding formulas can be written in a general form
\begin{equation}\label{c8}
\Gamma_n = d S_n - L_n \gamma_0 + C_n.
\end{equation}
By requiring $\Gamma_{ni}=0,i\in(1,\cdots,2N)$, the $n$th order generators are
\begin{equation}\label{c9}
g_n^j = \left( {\frac{{\partial {S_n}}}{{\partial {y^i}}} + {C_{ni}}} \right)J_0^{ij},
\end{equation}
where $J_0^{ij}$ is Poisson tensor.
And correspondingly, the $n$th order gauge function can be solved as
\begin{equation}\label{c10}
V_0^i \frac{{\partial {S_n}}}{{\partial {y^i}}} = \frac{{\partial {S_n}}}{{\partial {y^0}}} + V_0^i\frac{{\partial {S_n}}}{{\partial {y^i}}} = {\Gamma _{n0}} - {C_{n i }}V_0^i
\end{equation}
with
\begin{equation}\label{c11}
V_0^ i= J_0^{ij}\left( {\frac{{\partial {\gamma _{0j}}}}{{\partial {y^0}}} - \frac{{\partial {\gamma _{00}}}}{{\partial {y^j}}}} \right)
\end{equation}
To avoid the secularity of $S_n$, usually $\Gamma_{n0}$ is chosen to be
\begin{equation}\label{c12}
{\Gamma _{n0}} = \left[\kern-0.15em\left[ {V_0^i {C_{ni }}}
 \right]\kern-0.15em\right],
\end{equation}
where $\left[\kern-0.15em\left[  \cdots
 \right]\kern-0.15em\right]$ means average over the fast variable.

\section{The non-zero components of the Lie derivatives on $\Gamma_0$ in Eq.(\ref{g8})}\label{app1}

The formula of the Lie derivative of the generators on the differential 1-form $\gamma=\gamma_a dz^a$ is given as
\begin{equation}\label{c13}
{L_{\bf{g}}}\gamma  = \left( {{g^a}{\omega _{ab}} + {\partial _b}\left( {{g^a}{\gamma _a}} \right)} \right)d{z^b},
\end{equation}
where $\gamma_a$ is the component corresponding to $z^a$. $\omega$ is the Poisson bracket defined as ${\omega _{ab}} = {\partial _{{z^a}}}{\gamma _b} - {\partial _{{z^b}}}{\gamma _a}$. The part ${\partial _b}\left( {{g^a}{\gamma _a}} \right)d{z^b}$ in Eq.(\ref{c13}) is a full differential term and can be treated as a gauge term.
In this paper, the generator vector $\mathbf{g}$ is given as ${\bf{g}} \equiv ({{\bf{g}}^{\bf{x}}},{g^\mu },{g^U},{g^\theta })$ with ${{\bf{g}}^{\bf{x}}} = ({g^1},{g^2},{g^3})$ for the spatial space. $g^\mu$,$g^U$ and $g^\theta$ are for the dimensions of $\mu,U,\theta$, respectively. And the specific $\gamma$ is given by $\Gamma_0$ in Eq.(\ref{g8}).
The nonzero components of the Lie derivative on $\Gamma_0$ in Eq.(\ref{g8}) are given below.
\begin{subequations}
\begin{eqnarray}
g_{}^i{\omega _{0ij}}d{X^j}
&=& \left( {{\bf{B}} + \varepsilon U\nabla  \times b} \right) \times {\bf{g}}^\mathbf{x} \cdot d{\bf{X}}, \\
g_{}^U{\omega _{0Ui}}d{X^i}
&=& \varepsilon g_{}^U{\bf{b}}\cdot d{\bf{X}}, \\
g_{}^i{\omega _{0iU}}dU
&=&  - \varepsilon \left( {{\bf{g}}^\mathbf{x}\cdot{\bf{b}}} \right)dU, \\
g_{}^\mu {\omega _{0\mu \theta }}d\theta
&=& {\varepsilon ^2}g_{}^\mu d\theta,  \\
g_{}^\theta {\omega _{0\theta \mu }}d\mu
&=&  - {\varepsilon ^2}g_{}^\theta d\mu,  \\
g_{}^j{\omega _{0jt}}dt
&=&  - \varepsilon \mu {\bf{g}}^\mathbf{x}\cdot\nabla B\left( {\bf{X}} \right)dt,  \\
g_{}^\mu {\omega _{0\mu t}}dt
&=&  - \varepsilon B\left( {\bf{X}} \right)g_{}^\mu dt,  \\
g_{}^U{\omega _{0Ut}}dt
&=&  - \varepsilon Ug_{}^Udt.
\end{eqnarray}
\end{subequations}

\section{The expansion of the function over the small parameters}\label{expand}

 We first consider a function of the form $f(x+\epsilon g(x))$ depending on one scalar argument  and a small parameter. What we are interested in is its expansion over $\epsilon$. The derivative of $f(x+\epsilon g(x))$ over $\epsilon$ at $\epsilon=0$ is derived as follows
\begin{equation}\label{q6}
\left.{ {d_\epsilon }f(z)} \right|_{\epsilon  = 0}= {\left. {{d_\epsilon }z{\partial _z}f(z)} \right|_{\epsilon  = 0}} = g(x){\partial _x}f\left( x \right),z \equiv x + \epsilon g(x),
\end{equation}
where $d_\epsilon\equiv d/d\epsilon$. The second order derivative of $f(z)$ over $\epsilon$ is 
\begin{equation} \nonumber
\begin{split}
d_\epsilon(d_\epsilon f(z))&=d_\epsilon (d_\epsilon z \partial_z f(z))=d_\epsilon (d_\epsilon z) \partial_z f(z)+d_\epsilon z d_\epsilon \partial_z f(z) \\
&=d_\epsilon g(x)+d_\epsilon z d_\epsilon z \partial_z \partial_z f(z)=g^2(x)\partial^2_z f(z) .
\end{split}
\end{equation}
Then, the second order derivative of $f(z)$ over $\epsilon$ at $\epsilon=0$ is
$${\left.d_\epsilon^2f(x+\epsilon g(x))\right |_{\epsilon  = 0}}=g^2(x)\partial^2_x f(x).$$
It's easy to derive that the $n$-th derivative of $f(x+\epsilon g(x))$ over $\epsilon$ at $\epsilon=0$ is
\begin{equation}\label{q1}
{\left. {d _\epsilon ^nf(x + \epsilon g(x))} \right|_{\epsilon  = 0}} = {g^n}(x)\partial _x^nf(x).
\end{equation}
Then, the Taylor expansion of $f(x+\epsilon g(x))$ over $\epsilon$ is
\begin{equation}\label{q2}
f(x + \epsilon g(x)) = \sum\limits_{n \ge 0} {\frac{{{\epsilon ^n}}}{{n!}}{g^n}(x)\partial _x^nf(x)}.
\end{equation}
If there are two independent small parameters $\{\epsilon,\epsilon_1\}$, and the argument of $f$ is like $x+\epsilon g(x) +\epsilon_1 g_1(x)$, the expanding of $f(x+\epsilon g(x) +\epsilon_1 g_1(x))$ over $\{\epsilon,\epsilon_1\}$ is
\begin{equation}\label{q3}
f(x + \epsilon g(x) + {\epsilon _1}{g_1}(x)) = \sum\limits_{n,{n_1} \ge 0} {\frac{{{\epsilon ^n}\epsilon _1^n}}{{n!{n_1}!}}{g^n}(x)g_1^{{n_1}}(x)\partial _x^{n + {n_1}}f(x)}.
\end{equation}

Now we change $g(x)$ to be a multiple variable vector $\mathbf{g}(\mathbf{x})$. In Cartesian coordinate frame, $\mathbf{g}(\mathbf{x})\cdot \nabla$ can be written as
$$\mathbf{g}(\mathbf{x})\cdot \nabla=\sum \limits_i g_i(\mathbf{x})\partial'_{x_i}.$$
where $'$  means $\partial'_{x_i}$ doesn't operate on any $g_i(\mathbf{x})$. Then,  Eq.(\ref{q2})and (\ref{q3}) are respectively changed to be
\begin{equation}\label{q4}
f({\bf{x}} + \epsilon {\bf{g}}({\bf{x}})) = \sum\limits_{n \ge 0} {\frac{{{\epsilon ^n}}}{{n!}}{{\left( \sum \limits_i g_i(\mathbf{x})\partial'_{x_i}\right)}^n}f({\bf{x}})} .
\end{equation}
\begin{equation}\label{q5}
f({\bf{x}} + \epsilon {\bf{g}}({\bf{x}}) + {\epsilon _1}{{\bf{g}}_1}({\bf{x}})) = \sum\limits_{n,{n_1} \ge 0} {\frac{{{\epsilon ^n}\epsilon _1^n}}{{n!{n_1}!}}{{\left( \sum \limits_i g_{i}(\mathbf{x})\partial'_{x_i} \right)}^n}{{\left( \sum \limits_i g_{1i}(\mathbf{x})\partial'_{x_i} \right)}^{{n_1}}}f({\bf{x}})}
\end{equation}
In Eq.(\ref{q4}) and (\ref{q5}), the superscript $'$ means that the derivative $\partial_{\mathbf{x}}$ only acts upon $f(\bf{x})$.

When the argument of $f$ is of the form $x+\epsilon g(x)+\epsilon \epsilon_1 g_1(x)$, the general derivatives of $f$ such as ${\left. {\partial _\epsilon ^n\partial _{{\epsilon _1}}^{{n_1}}f\left( {x + \epsilon g(x) + \epsilon {\epsilon _1}{g_1}(x)} \right)} \right|_{\epsilon  = 0,{\epsilon _1} = 0}}$ doesn't have an uniform formula like that given by Eq.(\ref{q3}). Fortunately, we don't need higher order composite derivatives in this paper.

\emph{\textbf{Remark}}: As Eq.(\ref{q3}) shows, the expanding of $f$ over several small parameters $\epsilon_i$s doesn't contain the mutual derivative between $g_i(x)\partial_x$ and $g_j(x)\partial_x$ such as $g_i(x)\partial_x g_j(x)$.

\end{appendices}

\bibliographystyle{plain}
\bibliography{hybrid_coordinate.bib}

\end{document}